\def\NAT@def@citea{\def\@citea{\NAT@separator}}% Suppress spaces between citations using natbib.sty
\theoremstyle{plain}% Theorem-like structures provided by amsthm.sty
\newtheorem{theorem}{Theorem}[section]
\newtheorem{lemma}[theorem]{Lemma}
\newtheorem{corollary}[theorem]{Corollary}
\newtheorem{observation}[theorem]{Observation}
\newtheorem{claim}[theorem]{Claim}
\theoremstyle{definition}
\newtheorem{definition}[theorem]{Definition}
\theoremstyle{remark}
\begin{document}

%\articletype{ARTICLE TEMPLATE}% Specify the article type or omit as appropriate

\title{Circle Formation by Asynchronous Opaque Fat Robots on an Infinite Grid}

\author{
\name{Pritam Goswami\textsuperscript{a}\thanks{Pritam Goswami, Email: pritamgoswami.math.rs@jadavpuruniversity.in}, Manash Kumar Kundu \textsuperscript{b}, Satakshi Ghosh\textsuperscript{a} and Buddhadeb Sau\textsuperscript{a}}
\affil{\textsuperscript{a}Department of Mathematics,Jadavpur University, Kolkata-700032, India; \textsuperscript{b} Gayeshpur Government Polytechnic, Department of Science and Humanities, Kalyani, West Bengal - 741234, India}
}

\maketitle

\begin{abstract}
This study addresses the problem of "Circle Formation on an Infinite Grid by Fat Robots" ($CF\_FAT\_GRID$). Unlike prior work focused solely on point robots in discrete domain, it introduces fat robots to circle formation on an infinite grid, aligning with practicality as even small robots inherently possess dimensions. The algorithm, named $CIRCLE\_FG$, resolves the $CF\_FAT\_GRID$ problem using a swarm of fat luminous robots. Operating under an asynchronous scheduler, it achieves this with five distinct colors and by leveraging one-axis agreement among the robots.
\end{abstract}

\begin{keywords}
Swarm Robot Algorithm; Circle Formation; Asynchronous; Infinite Grid; Luminous Robots; LCM Cycle
\end{keywords}

\section{Introduction}

Distributed algorithm has nowadays become a very emerging and interesting topic. The reasons behind this are very clear. Distributed systems offer a multitude of advantages over centralized systems, making them a preferred choice for various applications. One key benefit is enhanced reliability and fault tolerance.
Additionally, distributed systems often exhibit superior scalability by allowing for easy expansion through the addition of new nodes, thus accommodating growing workloads and user demands seamlessly. This work focuses on swarm robot algorithm.
A system consisting of swarm robots is an example of a distributed system. A robot in a swarm is a small computing unit with the capability to move. These robots are considered to be autonomous (i.e., there is no central control), anonymous (i.e., the robots do not have any unique identifiers), homogeneous (i.e., all robots execute the same distributed algorithm), and identical (i.e., all robots are physically indistinguishable). In a swarm robot system, a collection of such robots are considered on an environment (i.e., euclidean plane, circle, discrete network, etc.) and their aim is to execute some tasks (e.g. Gathering \cite{AgmonP06,CiceroneSN18,GoswamiSGS22}, Dispersion \cite{MollaM22,KaurM23}, Exploration \cite{BramasDL20,DarwichUBLDL22}, Pattern Formation\cite{FlocchiniPSW08,GhoshGSS23} etc.) for which the mobility of the robots are needed. Swarm robotics has a huge application in many different scenarios such as patrolling, area coverage, network maintenance, etc.

Now one main direction of research in this field is to examine the optimal model of a robot in terms of different parameters. Memory and communication is one such parameter. There are mainly four robot models in the literature based on memory and communication. These four models are,
\begin{itemize}
    \item $\mathcal{OBLOT}$ Model: In this model, the robots do not have any persistent memory and there is no means of explicit communication between any two robots in the system.
    \item $\mathcal{FSTA}$ Model: In this robot model, the robots do not have any means of explicit communication between themselves. However, the robots have finite bits of persistent memory to remember their previous states.
    \item $\mathcal{FCOM}$ Model: In this model, the robots do not have any persistent memory but they can explicitly communicate with other robots using finite bit messages.
    \item $\mathcal{LUMI}$ Model: In this model, the robots have finite persistent memory and also can communicate with other robots using finite bit messages.
\end{itemize}
In every robot variant (excluding the $\mathcal{OBLOT}$ model), each individual robot comes equipped with a light that possesses a finite range of colors. These colors serve as a method of both communication and memory among the robots. In the $\mathcal{FSTA}$ model, communication is absent, as robots cannot perceive the lights of other robots. However, a robot can observe its own light, which functions as its personal memory. In the $\mathcal{FCOM}$ model, robots lack the ability to view their own light, yet they can perceive the lights of their counterparts. Lastly, in the $\mathcal{LUMI}$ model, a robot has the capacity to see both its own light and the lights of other robots, allowing for comprehensive communication and memory to remember finitely many previous states. 

Another parameter is robot vision. Categorized by visibility, there exist two distinct classifications of robot models: the \textit{Non-Restricted Visibility Model} and the \textit{Restricted Visibility Model}. In the Non-Restricted Visibility Model, a robot possesses the ability to observe the entire surroundings without any hindrance caused by other robots, and although extensively employed in existing literature (\cite{AgmonP06,FlocchiniPSW08,GhoshGSS23}) its practical feasibility is limited. Hence, the Restricted Visibility Model emerges. This model encompasses two potential visibility constraints. Firstly, a robot's visual range might be constrained, allowing it to only perceive objects within a certain distance, rather than the entire environment (\cite{GoswamiSGS22,FlocchiniPSW05,LunaUVY20}). Furthermore, visibility might be obstructed by the presence of other robots in the vicinity (\cite{Adhikary21,app13137991,inbook}). If robots do not interfere with the visibility of other robots, then these types of robots are called  \textit{Transparent Robots}, otherwise, the robots are called \textit{opaque Robots}. 

Another aspect of the robot model considers the dimension of the robots. This perspective divides robots into two primary classifications. These two divisions encompass robots without physical dimensions and those possessing specific dimensions. In the dimensionless robot model, a robot is conceptualized as a mere point in space. Conversely, within the model involving robots with dimensions, these entities are represented as disks with some radius. While the point robot model prevails in academic discourse (\cite{tanaka,AgmonP06,FlocchiniPSW05,FlocchiniPSW08}), it lacks suitability for real-world implementations. On the contrary, robots with dimensions, often referred to as \textit{Fat robots}, prove more pragmatic for practical applications(\cite{KunduGGS22,BoseAKS20,inproceedings}).

At any moment in time, a robot can either be in idle state or in non-idle state. To move from the idle state to a non-idle state a robot first gets activated. The non-idle state of a robot is divided into three phases namely \textsc{Look} phase, \textsc{Compute} phase, and \textsc{Move} phase in this order. After activation, a robot first executes \textsc{Look} phase where it takes a snapshot of its surroundings to get the positions of other visible robots according to its own local coordinate system. Then it executes \textsc{Compute} phase, where it runs a distributed algorithm with the information from \textsc{Look} phase as input. Then as the output of the algorithm, the robot gets a position. After this, it executes the \textsc{Move} phase, where the robot moves to the position of the output. If the output position is the same as the current position, then it does not move.After executing the \textsc{Move} phase, the robot returns to the idle state until it activates again and performs the \textsc{Look}-\textsc{Compute}-\textsc{Move} phases. This is known as the LCM cycle. 

Now since one execution of the algorithm depends on the position of other robots, activation of the robots plays a huge role in designing swarm robot algorithms. The activation of a robot is controlled by an entity called a scheduler.  Mainly there are three types of scheduler models used massively in the literature. These three types of schedulers are,
\begin{itemize}
    \item \textit{Fully Synchronous Scheduler}($\mathcal{FSYNC}$): In fully synchronous scheduler, the time is divided into rounds of equal lengths. Each round is subdivided into three intervals for \textsc{Look}, \textsc{Compute}, and \textsc{Move} phase. These subdivisions are equal for each robot. Also, all robots are activated by an $\mathcal{FSYNC}$ scheduler at the beginning of each round.
    \item \textit{Semi-Synchronous Scheduler} ($\mathcal{SSYNC}$): Semi synchronous scheduler is somewhat similar to the fully synchronous scheduler. The only difference here is that at the beginning of each round, an $\mathcal{SSYNC}$ scheduler activates a  subset of robots. Note that if the subset is equal to the whole set of robots for each round, then we get the $\mathcal{FSYNC}$ scheduler. So $\mathcal{SSYNC}$ is more general than an $\mathcal{FSYNC}$ scheduler.
    \item \textit{Asynchronous scheduler} ($\mathcal{ASYNC}$): In the asynchronous scheduler, there is no notion of rounds. In a particular moment, a robot can be either idle or can be executing any one of \textsc{Look}, \textsc{Compute} or, \textsc{Move} phase. The duration of any of these phases is finite but unpredictable. This is the most general scheduler model which is also very feasible in terms of practical implementation. 
\end{itemize}

In this work, we are interested in the problem of pattern formation. Now there are two types of pattern formation problems on which there is a vast literature. The two types of pattern formation problems are as follows:
\begin{itemize}
    \item \textit{Arbitrary Pattern Formation} ($\mathcal{APF}$) where a set of robots, deployed in an environment, are provided with target pattern coordinates with respect to some global coordinate system. The robots do not have agreement to any global coordinate system however they each have their own local coordinate system. The aim is to provide a distributed algorithm for the robots so that first they can agree on a global coordinate and according to that coordinate system embed the target pattern and move to the target positions to form the pattern.(\cite{KunduGGS22,BoseAKS20,FlocchiniPSW08})
    \item \textit{Geometric Shape Formation }(e.g., Line Formation, Circle Formation, Uniform Circle Formation etc.). In this problem, the robots only know which shape to form but they are not provided with specific coordinates of the target according to some global coordinate system. For example, in circle formation problems the robots know that they have to form a circle but they have no agreement initially about the radius and center of the circle. These are decided autonomously by executing the algorithm (\cite{Sugihara1990,DefagoS08,AdhikaryKS21,inproceedings}).
\end{itemize}  
In this work, the main focus is this \textit{Geometric Shape Formation} problem on a discrete domain. To be more specific, this work investigates the problem of \textit{circle formation on an infinite grid by opaque fat robots} ($CF\_FAT\_GRID$) under an asynchronous scheduler. In the following subsection, we aim to offer a brief overview of how research on this issue has progressed, tracing its evolution from its inception to the contemporary landscape.
\subsection{Related Works}

Circle Formation is a well-known problem in distributed computing. It has been examined within both continuous and discrete contexts. The origin of the circle formation problem dates back to the work of Sugihara and Suzuki \cite{Sugihara1990}. While they proposed a heuristic algorithm, it produced an approximate circle. Subsequently, another more accurate approximation algorithm was introduced by Tanaka et al. \cite{tanaka}. 

The problem, of strategically positioning robots equidistantly along the circle's circumference is known as  Uniform Circle Formation. This notion was first explored by Suzuki and Yamashita \cite{suzuki99}. Later, Defago and Konogaya \cite{defago02} developed a circle formation algorithm that did not require robot orientation.

Defago and Konogaya's later work \cite{DefagoS08} presented a deterministic algorithm for non-uniform circle formation, where robots converge to an evenly spaced boundary configuration. Flocchini et al. \cite{FlocchiniPSV17} contributed an alternative uniform circle formation algorithm, removing certain assumptions, albeit restricted to cases where $n \neq 4$.

It is important to note that these aforementioned studies focused on the point robot model, with transparent robots operating in an Euclidean plane.

Feletti et al. \cite{inbook} first considered opaque robot models for solving uniform circle formation. In their work, opaque luminous point robots are initially placed on a plane, equipped with a light possessing 5 distinct colors under the $\mathcal{FSYNC}$ scheduler. Building on their prior research, their recent work \cite{app13137991} extended the solutions to incorporate asynchrony in the scheduler, using 19 colors for luminous, opaque point robots on a plane.

The utilization of opaque fat robots for circle formation was pioneered by the work in \cite{inproceedings}. This algorithm tackled circle formation under a limited visibility model while considering global coordinate agreement.

Within discrete domain, the circle formation problem emerged in \cite{Adhikary21}. Here, the authors considered luminous opaque point robots on an infinite grid, functioning under an asynchronous scheduler. An algorithm was introduced by them where the robots are equipped with a light possessing seven colors and have one axis agreement. In \cite{ito2022brief}, Ito et al. improved upon this concept by enhancing the existing algorithm for circle formation on an infinite grid. They achieved uniform circle formation with a diameter of $O(n)$ using five colors, and an alternative algorithm formed a uniform circle with a diameter of $O(n^2)$ with complete visibility, using just four colors.  In a recent work \cite{KunduGGS22}, fat roots are introduced in a discrete domain (infinite grid) for solving arbitrary pattern formation problem

To our current knowledge, the application of fat robots in a discrete domain to solve the circle formation problem remains unexplored. Thus, our research focuses on investigating this challenge by employing opaque fat robots on an infinite grid. The subsequent subsection provides a concise problem description and outlines the contributions of this work.
\subsection{Problem Description and Our Contribution}
This work considers the problem of circle formation on an infinite grid by luminous opaque fat robots under an asynchronous scheduler using only five colors. In a discrete environment, robots may not be able to form an exact circle. Thus, an approximated circle is considered (Definition~\ref{def:approx circle}) to translate the circle formation problem from the Euclidean plane into the discrete domain (here infinite grid). 

The robots are considered to be opaque disks of radius $rad \le \frac{1}{2}$. The center of the disk is considered to be the position of the robot. The robots operate in a \textsc{Look-Compute-Move} cycle under an asynchronous scheduler. Each robot has a light possessing five distinct colors. The robots do not agree on any global coordinate system; however, they agree on a common $x-$axis (i.e., a common left-right agreement).  The problem requires the robots to agree on a circle $\mathcal{CIR}$ and then move to the circumference of $\mathcal{CIR}$ and terminate.

The main challenge here is the visibility. Since the robots are opaque, a robot can obstruct the visibility of another robot. As a result, a robot may not have the total information of the current situation. This is also the case for point opaque robots. But the visual hindrance becomes more severe if the robots have dimension. For example, considering the opaque point robot model, a robot $r$ can not see another $r'$ iff there is another robot $r_o$ such that $r,r'$ and $r_o$ is colinear. But if we consider the opaque fat robots, a robot may not see another robot even when no three robots are colinear (Figure~\ref{fig:fat obstruction}). In this work, the provided algorithm carefully handles this issue.

\begin{figure}[h]
    \centering
    \includegraphics[width=3cm]{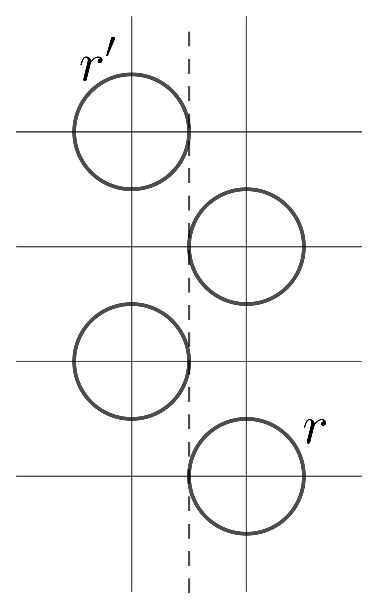}
    \caption{$r$ can not see $r'$ even though no three robots are colinear.}
    \label{fig:fat obstruction}
\end{figure}

Another challenge here is avoiding collisions. In the Euclidean plane, there are infinitely many paths between any two points, whereas, in a discrete domain, there are only finitely many paths between any two vertices. Thus, avoiding collision is easier in the Euclidean plane compared to the discrete domain. This challenge has been dealt with through carefully sequenced movements of robots. To simulate sequential movements in an asynchronous environment, we have used luminous robots.

The algorithm provided in this work is called the $CIRCLE\_FG$ algorithm. The $CIRCLE\_FG$ algorithm solves the $CF\_FAT\_GRID$ problem, considering a swarm of luminous, opaque fat robots with one-axis agreement and five distinct colors, from any initial configuration within finite time. A comparison table is provided so that readers can compare this work with some previously similar works.

\begin{table}[h]
  \centering
  \footnotesize
  \begin{tabular}{@{}cccccccc@{}}
    \toprule
    Paper& Domain& Scheduler& Axis Agreement& Vision& Dimension& Circle Type& \# colors \\
    \midrule
    \cite{app13137991} & Plane & $\mathcal{FSYNC}$ & No & Opaque & Point & Uniform & 5 \\
    \cite{app13137991} & Plane & $\mathcal{ASYNC}$ & No & Opaque & Point & Uniform & 19 \\
    \cite{AdhikaryKS21} & Grid & $\mathcal{ASYNC}$ & One axis & Opaque & Point & Non-Uniform & 7 \\
    \cite{ito2022brief} & Grid & $\mathcal{ASYNC}$ & One axis & Opaque & Point & Uniform & 5 \\
    \textbf{This Paper} & Grid & $\mathcal{ASYNC}$ & One axis & Opaque & Fat & Non-Uniform & 5 \\
    % Add more rows as needed
    \bottomrule
  \end{tabular}
  % \caption{A table with 8 columns.}
  \label{tab:table}
\end{table}

\section{Model and Definitions}
In this section, we first describe the models considered. Then, some definitions and notations are provided that have been used throughout the entire paper.
\subsection{Model}
\textbf{Infinite Grid:} An infinite grid is an infinite geometric graph $\mathcal{G} = (V,E)$ where vertices are points placed on $\mathbb{R}^2$ with coordinates $\{(a,b) : a \in \mathbb{Z}$ and $b \in \mathbb{Z}\}$. Also, two vertices are adjacent iff the Euclidean distance between them is one unit.

\textbf{Robot Model:} This work considers a set $\mathfrak{R}=\{r_1, r_2,\dots r_k\}$ of $k$ robots that are initially placed arbitrarily on $k$ vertices of an infinite grid $\mathcal{G}$. The robots are considered to be autonomous, anonymous, homogeneous and identical. The robots are not point. In this work, the robots are considered to be disks of radius at most $\frac{1}{2}$ unit. 

\textit{Axis and Unit Length Agreement:} Each robot has a local coordinate system where origin is the position of itself. There is no global agreement on the coordinates however the robots agree on the direction and orientation of the $x$-axis which is parallel to one of the grid lines and also on unit length. That implies the robots agree on left, right and on the distance between any two points, but do not agree on up and down.

\textit{Visibility Model:}
In this work, the robots have unlimited but obstructed visibility. A robot $r_i$ can see another robot $r_j$ if and only if there is a point on the perimeter of $r_i$, say $p_i$ and another point on the boundary of $r_j$, say $p_j$ such that the line segment $\overline{p_ip_j}$ does not intersect at a point on any other robot. From this visibility model, it follows that if $r_i$ can see $r_j$ then $r_j$ also sees $r_i$.

\textit{Memory and communication:} In this work, the robots are considered to be of the $\mathcal{LUMI}$ model. Thus the robots are able to remember some finite previous states and can communicate with visible robots using finite bit messages broadcasted using lights. Each of the robots are equipped with a light that can have 5 distinct colors from the set $ Col=\{$\texttt{off}, \texttt{chord}, \texttt{moving1}, \texttt{diameter}, \texttt{done}$\}$ one at a time. Upon activation from idle state, a robot executes according to the \textsc{Look}-\textsc{Compute}-\textsc{Move} cycle (LCM cycle) as described below.

 \textit{LCM cycle}: Upon activation  with a color $C_1 \in Col$, a robot $r$ first executes the \textsc{Look} phase. During this phase, the robot takes a snapshot of its surroundings and gets the positions of other visible robots according to its own coordinate system. Using this information as input, the robot then executes the  algorithm in the \textsc{Compute} phase. As an output of the algorithm, the robot gets a color $C_2 \in Col$ and a grid point at most one hop away from its current position. During the \textsc{Move} phase, the robot first changes its color to $C_2$ from $C_1$ (if different) and then moves to the new grid point (if different). After the \textsc{Move} phase is executed, the robot again returns to the idle state until it is activated again. A robot can move only along the edges of the grid, i.e., a robot can only move to one of the four adjacent vertices of its current position by moving once. The move is also considered to be rigid and instantaneous, i.e., a robot is always seen on the grid points.

\textbf{Scheduler Model:} The scheduler considered in this work is the most general asynchronous scheduler ($\mathcal{ASYNC}$). There is no agreement on rounds. The time taken by a robot to execute the \textsc{Look} phase, \textsc{Compute} phase, \textsc{Move} phase, and the time a robot remains idle is finite but unpredictable.

\subsection{Definitions and Notations}
\begin{definition}[\textbf{\textit{Grid Circumference}}]
\label{def:approx circle}
    Let $\mathcal{CIR}$ be a circle on the plane on which the infinite grid is embedded. Let $L_H$ be a horizontal grid line which intersects the circle $\mathcal{CIR}$ on at most two points $A$ and $A'$. Now,
    
    \begin{itemize}
        \item If $A$ and $A'$ are grid points, we say that only $A$ and $A'$ on $L_H$ are on the circumference of $\mathcal{CIR}$.
        \item  Otherwise, if $A$ and $A'$ are not grid points, we call a grid point $(a,b)$ on the circumference of $\mathcal{CIR}$ on $L_H$ if the line joining the grid points $(a,b)$, $(a+1,b)$ or $(a,b)$, $(a-1,b)$ contains exactly one of $A$ and $A'$.
    \end{itemize}
    The set of all such grid points on the circumference of the circle is called the Grid Circumference.
\end{definition}

For simplicity by the term ``on the circumference of the circle" we will always mean on some grid point which is in the set Grid Circumference.

Now, let us define the problem ($CF\_FAT\_GRID$) formallly .

\begin{definition}[\textbf{\textit{Problem Statement of $CF\_FAT\_GRID$}}]
Let a finite set of fat robots of same radius are initially located on distinct vertices of an infinite grid $\mathcal{G}.$ We say that the Circle formation by Fat robots on infinite grid ($CF\_FAT\_GRID$) is solved if there is an algorithm $\mathcal{A}$ such that after executing $\mathcal{A}$, for finite time the following conditions are satisfied.
\begin{enumerate}
    \item $ \mathcal{A}$ terminates. That is there exists a time $t$ such that at time $t$ all robots are terminated and does not execute $\mathcal{A}$ anymore.
    \item At time $t$, all robots are located on the grid circumference of the same circle, say $\mathcal{CIR}$.
\end{enumerate}
\end{definition}

\begin{definition}[\textbf{\textit{Configuration}}]
       Let $\mathcal{G}=(V,E)$ be an infinite grid. Let $f: V \rightarrow \{0,1\} \times (Col \cup \{NULL\}) $ be a function such that
 \begin{equation*}
 f(a,b) = \begin{cases}
       (0,NULL) & \text{if there is no robot on (a,b)} \\
       (1,c) & \text{if a robot is on $(a,b)$ with color $c \in Col$}
     \end{cases}
\end{equation*}
Then we call the pair $(\mathcal{G},f)$ a configuration which is denoted as $\mathcal{C}$. A configuration at time $t$ is denoted as $\mathcal{C}(t)$.
\end{definition}

We have used some notations throughout the paper. A list of these notations is mentioned in the following table.
\newpage
\begin{table}[h]
  \centering
  
  \begin{tabular}{@{}ll@{}}
    \toprule
    Notation & Description\\
    \midrule
    $\mathcal{L}_1$ & First vertical line on left that contains at least one robot.\\
    $\mathcal{L}_i$ & $i$-th vertical line on the right starting from $\mathcal{L}_1$.\\
    $\mathcal{L}_V(r)$ & The vertical line on which the robot $r$ is located.\\

    $\mathcal{L}_H(r)$ & The horizontal line on which the robot $r$ is located.\\

     $\mathcal{L}_I(r)$ & The left immediate vertical line of robot $r$ which has at least one robot on it. \\

     $\mathcal{R}_I(r)$&  The right immediate vertical line of robot $r$ which has at least one robot on it. \\

    ${H}_L^O(r)$ &  Left open half for the robot $r$. \\

     ${H}_L^C(r)$ & Left closed half for the robot $r$ (i.e  ${H}_L^O(r) \cup \mathcal{L}_V(r)$).\\

     ${H}_B^O(r)$ &  Bottom open half for the robot $r$.\\

     ${H}_B^C(r)$ &  Bottom closed half for the robot $r$ (i.e  ${H}_B^O(r) \cup \mathcal{L}_H(r)$).\\

     ${H}_U^O(r)$ & Upper open half for the robot $r$.\\

    ${H}_U^C(r)$ & Upper closed half for the robot $r$ (i.e  ${H}_U^O(r) \cup \mathcal{L}_H(r)$). \\
    % Add more rows as needed
    \bottomrule
  \end{tabular}
  %\caption{A table with 8 columns.}
  \label{tab:table}
\end{table}
% \begin{table}[h]
% \begin{center}
% \small
%     \begin{tabular}{ | m{4em} | m{10cm}| } 

% \hline
% $\mathcal{L}_1$& First vertical line on left that contains at least one robot.\\
%   \hline
%   $\mathcal{L}_i$ & The $i-th$ vertical line on the right of $\mathcal{L}_1$. \\
%     \hline
%      $\mathcal{L}_V(r)$ & The vertical line on which the robot $r$ is located.\\
%     \hline
%     $\mathcal{L}_H(r)$ & The horizontal line on which the robot $r$ is located.\\
%     \hline
%      $\mathcal{L}_I(r)$ & The left immediate vertical line of robot $r$ which has at least one robot on it. \\
%    \hline
%      $\mathcal{R}_I(r)$&  The right immediate vertical line of robot $r$ which has at least one robot on it. \\
%     \hline
%     ${H}_L^O(r)$ &  Left open half for the robot $r$. \\
%     \hline
%      ${H}_L^C(r)$ & Left closed half for the robot $r$ (i.e  ${H}_L^O(r) \cup \mathcal{L}_V(r)$).\\
%      \hline
%      ${H}_B^O(r)$ &  Bottom open half for the robot $r$.\\
%      \hline
%      ${H}_B^C(r)$ &  Bottom closed half for the robot $r$ (i.e  ${H}_B^O(r) \cup \mathcal{L}_H(r)$).\\
%      \hline
%      ${H}_U^O(r)$ & Upper open half for the robot $r$.\\
%     \hline
%     ${H}_U^C(r)$ & Upper closed half for the robot $r$ (i.e  ${H}_U^O(r) \cup \mathcal{L}_H(r)$). \\
%     \hline
    
% \end{tabular}
% % \caption{Notations}
% \end{center}

% \end{table}

% \begin{figure}[h]
%     \centering
%     \includegraphics[width=8cm]{}
%     \caption{Diagram of notation used}
%     \label{fig:def}
% \end{figure}

\begin{definition}[\textbf{\textit{Phase 1 Final Configuration (P1FC)}}]
We call a configuration a Phase 1 Final Configuration if the following conditions hold
\begin{enumerate}
    \item $\mathcal{L}_2$ has exactly two robots, say $r_1$ and $r_2$, with color \texttt{diameter}. All other robots are on either $\mathcal{L}_1$ or $\mathcal{L}_3$ with color \texttt{chord}.
    \item All robots on $\mathcal{L}_1$ and $\mathcal{L}_3$ are strictly between $\mathcal{L}_H(r_1)$ and $\mathcal{L}_H(r_2)$.
\end{enumerate}
    
\end{definition}

\begin{definition}[]
    Let $t_b$ be the time when a robot moves first from the initial configuration. We say that $\mathcal{L}_1$ is \textit{fixed} at a time $t_f > t_b$ if both of the following conditions hold:
    \begin{enumerate}
        \item $\forall t \in $ $[t_b, t_f)$, $\mathcal{C}(t)$ has a robot that is not on $\mathcal{L}_1$.
        \item From time $t_f$ onwards, no robot from $\mathcal{L}_1$ moves left until all robots move to $\mathcal{L}_1$.
    \end{enumerate}

\end{definition}

\begin{definition}[\textbf{\textit{Terminal Robot}}]
   In a configuration $\mathcal{C}$, a robot $r$ is called a terminal robot on $\mathcal{L}_V(r)$ if there is no robot either above or below $r$ on $\mathcal{L}_V(r)$.
\end{definition}
\section{$CIRCLE\_FG$ Algorithm}
 
In this section, we propose an algorithm called $CIRCLE\_FG$ that solves the $CF\_FAT\_GRID$ problem in finite time.
First, we discuss the outline of the algorithm and then discuss it in detail later in this section.

\textbf{Outline of the algorithm:} The algorithm works in two phases. In Phase 1, the robots first form a vertical line, say $L$. The two extreme robots on the line change their color to \texttt{diameter}. The  other robots on line $L$ move according to the function \textsc{ChordMove} to the vertical lines that are one hop away from $L$ thus forming a P1FC (Figure~\ref{fig:Whole Phase 1}). Note that all robots that now see both robots of color \texttt{diameter} agree on the diameter of the circle, and consequently, on the circle itself. So in Phase 1, the diameter is formed on which all robots can agree. Then in Phase 2, the robots with color \texttt{chord} first move to the vertical lines that are $\frac{d}{2}$ distance away from the agreed diameter on both sides of it ($d$ is the length of the diameter) and change color to \texttt{off}. "From this configuration, all robots change their color to \texttt{moving1} and move towards the diameter by forming a line on each vertical grid line in between. Let $L_m$ be a vertical line where all robots of color \texttt{moving1} to the left (or right) of the agreed diameter have formed a line. Starting from $L_m$, all robots that are strictly inside the agreed circle move further from the diameter after changing their color to \texttt{done} in order to reach their corresponding positions on the circle. Following this, the remaining robots on $L_m$ move towards the agreed diameter and establish a new line on the next vertical grid line towards the diameter. In this manner, all robots move to their designated positions on the circle and terminate their movement. This entire process is illustrated in Figure~\ref{fig:whole phase 2 }. This is a very brief description of the algorithm. In the following two subsections, we will provide detailed descriptions of Phase 1 and Phase 2, along with explanations of the correctness of each phase. 
\begin{figure}[h]
    \centering
    \includegraphics[width=8cm]{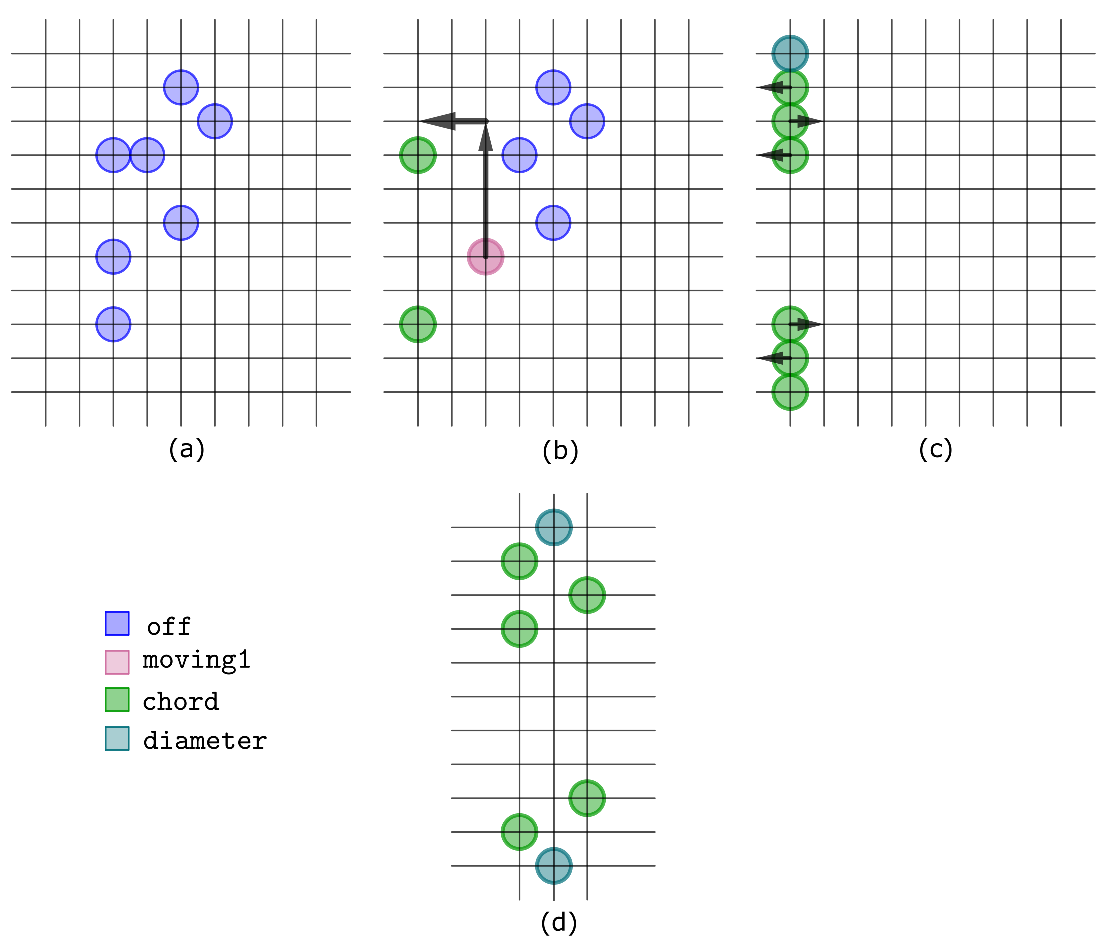}
    \caption{\textbf{Outline of Phase 1:} \textbf{(a)} Initial configuration $\mathcal{C}(0)$. \textbf{(b)} Two terminal robots of $\mathcal{L}_1$ in $\mathcal{C}(0)$ moved two hop and changes color to \texttt{chord}. A terminal robot of color \texttt{off} on nearest vertical line of $\mathcal{L}_1$ changes color to \texttt{moving1} and moves to $\mathcal{L}_1$. \textbf{(c)} All robots of color \texttt{off} moves to $\mathcal{L}_1$ in a similar way and forms a single line. The terminal robots on this line changes color to \texttt{diameter} eventually.(here in this specific example, due to asynchrony only one terminal robot changes color to \texttt{diameter} first). Next all non terminal robots move either left or right. (\textbf{d}) P1FC is formed.}
    \label{fig:Whole Phase 1}
\end{figure}
\begin{figure}[h]
    \centering
    \includegraphics[width=15.5cm]{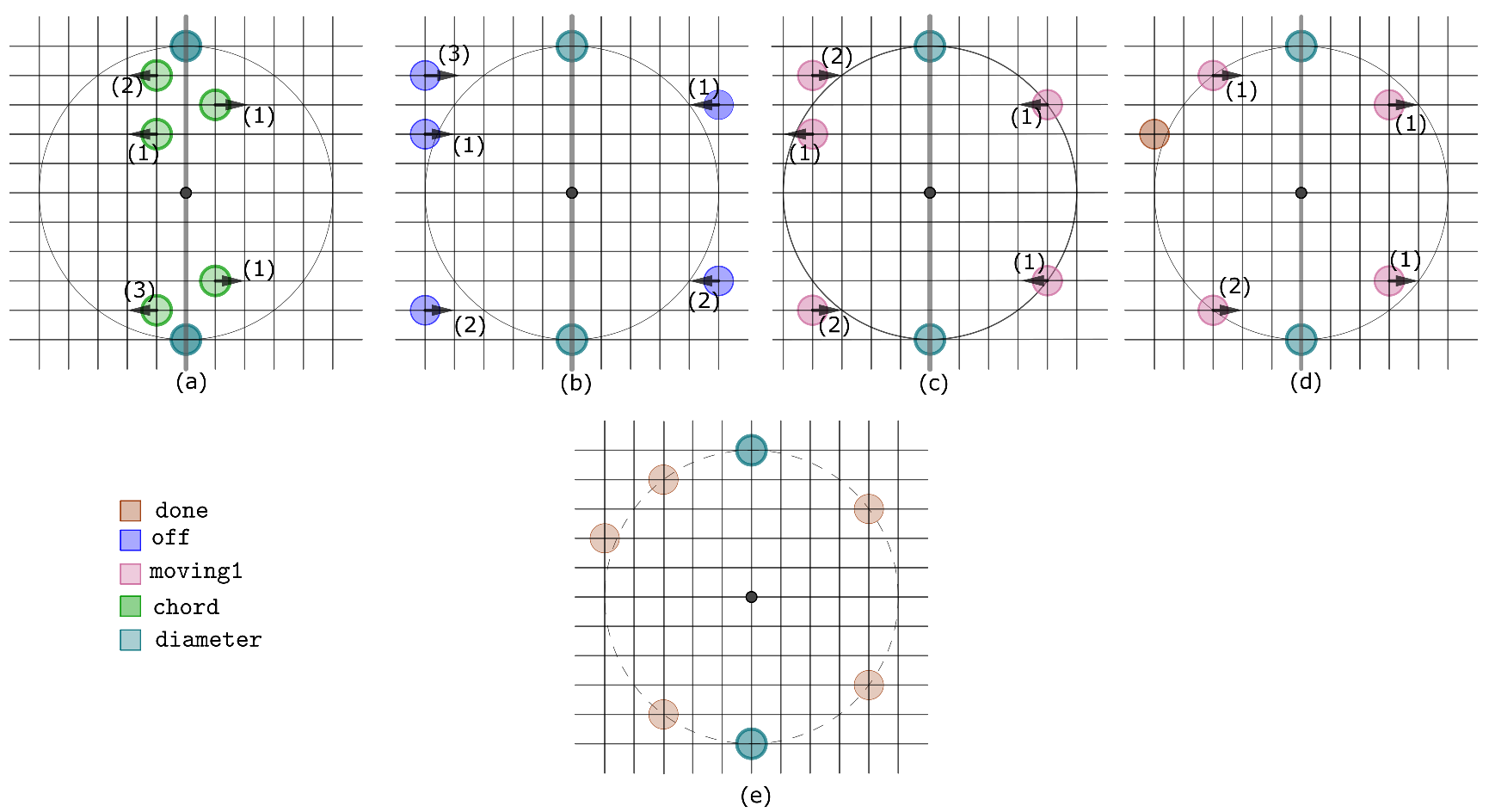}
    \caption{\textbf{Outline of Phase 2:} \textbf{(a)} From P1FC, all robots with color \texttt{chord} from one vertical line moves to another vertical line further from the diameter until it reaches $\frac{d}{2}$ distance from diameter where $d$ is the diameter length. The number $(i)$ associated with a robot denotes that it moved after $i-1$ robots has already moved. \textbf{(b)} Before reaching the vertical line at distance $\frac{d}{2}$, all robots change their color to \texttt{off}. Now all robots of color \texttt{off} moves one hop to the vertical line towards diameter after changing the color to \texttt{moving1}. \textbf{(c)-(d)} When all robots of color \texttt{moving1} are on a line the robots that are strictly inside the circle moves away from diameter after changing the color to \texttt{done} and the other robots of color \texttt{moving1} moves to the next vertical  line towards diameter after all inside robots moved. \textbf{(e)} Circle is formed. }
    \label{fig:whole phase 2 }
\end{figure}
\label{lf}

\subsection{Phase 1}
\subsubsection{Brief description of Phase 1}

In Phase 1, initially all the robots have color \texttt{off} and are placed arbitrarily on the grid. There can be at most two robots that are terminal on $\mathcal{L}_1$ of the initial configuration. Upon activation these robots will see their left open half is empty and one of their upper or bottom halves of their corresponding vertical line is empty. For this view the robot change their color to \texttt{moving1}
 from \texttt{off} and moves left shifting the line $\mathcal{L}_1$. Except the terminal robots on $\mathcal{L}_1$ of the initial configuration other robots of color \texttt{off} change their color to \texttt{moving1} only when they are terminal on their corresponding vertical line and sees all robots on their left immediate vertical line has color \texttt{chord}. So, unless the terminal robots on $\mathcal{L}_1$ of the initial configuration change their color to \texttt{chord} no other robots do anything even if they are activated.

 Let $r_1$ and $r_2$ be two terminal robots on $\mathcal{L}_1$ in the initial configuration. If $\mathcal{L}_1$ of the initial configuration contains more than two robots then there must be another robot except $r_1$ and $r_2$, say $r$, with color \texttt{off}. Now upon activation let $r_1$ has moved left once. Now when it is activated again, it will see $r$ on its $\mathcal{R}_I(r_1)$ having color \texttt{off} and also $\mathcal{R}_I(r_1)$ is one hop away from $\mathcal{L}_V(r_1)$. In this view if $H_L^O(r_1)$ is empty then, $r_1$ moves left again shifting the $\mathcal{L}_1$ further left. The target of this move is to make the distance between $\mathcal{L}_1$ and the first vertical line that contains a robot of color \texttt{off} more than one. Let when $r_1$ moved left further $r_2$ was on $\mathcal{L}_V(r)$ with a pending movement due to asynchrony. Then, it will move left and see all robots on $\mathcal{L}_I(r_2)$ has color \texttt{moving1} and it is singleton on $\mathcal{L}_V(r_2)$. For this view $r_2$ moves to left again and moves to $\mathcal{L}_1$ along with $r_1$. A robot with color \texttt{moving1} can change its color to \texttt{chord} for two possible views. For the first one it has to be on $\mathcal{L}_1$ and has to see another robot with color \texttt{chord} on $\mathcal{L}_1$. For the other one it has to see its right immediate vertical line, which is at least two hop away, has at least a robot of color \texttt{off}. So when $r_1$ moved left twice from its initial position upon its next activation activation, if $r_2$ is still in $\mathcal{L}_v(r)$ and has not yet changed its color to \texttt{moving1}, $r_1$ changes its color to \texttt{chord} otherwise $r_2$ reaches $\mathcal{L}_V(r_1)$ and eventually both of them change their color  to \texttt{chord} before any other robot does anything. Now a robot, say $r'$, which has changed its color to \texttt{moving1} from \texttt{off} on $\mathcal{L}_k$ by seeing all robots of color \texttt{chord} on $\mathcal{L}_I(r')$ must be terminal on $\mathcal{L}_k$ for some $k>1$. Now, if upon activation $r'$ sees all visible robots on $\mathcal{L}_I(r')$ has color \texttt{chord} and no robot of color \texttt{chord} on $\mathcal{L}_V(r')$ (This condition is to stop robots of color \texttt{moving1} to move further left from $\mathcal{L}_1$), then it can have either both of  $H_U^C(r') \cap \mathcal{L}_I(r')$ and $H_B^C(r') \cap \mathcal{L}_I(r)$ non empty or, empty. For this case, if there is another robot on $\mathcal{L}_V(r')$, then $r'$ moves along $\mathcal{L}_V(r')$ opposite to that robot, otherwise it moves along its positive Y-axis. After finite moves one of $H_U^C(r') \cap \mathcal{L}_I(r')$ and $H_B^C(r') \cap \mathcal{L}_I(r)$ must become empty for $r'$ when it moves left. Note that all robots of color \texttt{off} decides that it is in Phase 1 when they see no robots of color \texttt{diameter}. A robot with color \texttt{moving1} decides it is in Phase 1 when it sees no robot of color \texttt{diameter} or when it sees a robot of color \texttt{diameter} on its own vertical line. If a robot of color \texttt{moving1} sees  a robot with color \texttt{diameter} on its own vertical line, it changes its color to \texttt{diameter}.

Now a robot with color \texttt{chord} changes its color to \texttt{diameter} when it sees at least one robot of color \texttt{chord} on its corresponding vertical line and sees no other robot on its left and right open halves and one of upper or bottom closed halves. A robot, say $r_c$, with color \texttt{chord} executes \textsc{ChordMove} when it is not terminal on $\mathcal{L}_V(r_c)$ and sees at least one robot of color \texttt{diameter} on  $\mathcal{L}_V(r_c)$. On the otherhand if $r_c$ is terminal on $\mathcal{L}_V(r_c)$ and sees a robot of color \texttt{diameter} on $\mathcal{L}_(r_c)$ then it changes its color to \texttt{diameter} from \texttt{chord}.
a robot with color \texttt{chord} distinguishes Phase 1 when it does not see any robot of color \texttt{diameter} and also sees at least one robot with \texttt{diameter} color on its own vertical line.
We now describe the \textsc{ChordMove} subroutine.

\textit{\textsc{ChordMove} Subroutine:} A robot $r$ with color \texttt{chord} executes the subroutine \textsc{ChordMove} when it sees at least one robot with color \texttt{diameter} and is not terminal on $\mathcal{L}_V(r)$. while executing \textsc{ChordMove}, if a robot, say $r$, sees only one robot $r_1$ with color \texttt{diameter} then it checks if there is any other robot between the horizontal lines passing through $r_1$ and $r$ i.e., $\mathcal{L}_H(r_1)$ and $\mathcal{L}_H(r)$ respectively. If there is no robots in the above mentioned region then $r$ moves left. Now if there are robots between the mentioned region, then the following procedure takes place. If $r_N$ be the nearest of $r$ which is in the mentioned region and if $r_N$ is on $H_L^O(r)$ then $r$ moves right otherwise $r$ moves left. Now, if a robot $r$ sees both the robots, say $r_1$ and $r_2$, with color \texttt{diameter} then, $r$ finds its direction to move as stated above considering both the regions between $\mathcal{L}_H(r_1)$, $\mathcal{L}_H(r)$    and $\mathcal{L}_H(r_2)$, $\mathcal{L}_H(r)$ . If for both the direction considering both the regions are same then $r$ moves according to that direction otherwise it moves left. Note that after this procedure is complete, $\mathcal{L}_V(r_1)$ has only two robots $r_1$ and $r_2$ both having the color \texttt{diameter} and $\mathcal{L}_I(r_1)$ and $\mathcal{R}_I(r_1)$ contains all the robots with color \texttt{chord}. Also observe that, difference between number of robots on $\mathcal{L}_I(r_1)$ and $\mathcal{R}_I(r_1) \le 2$. Note that, a robot, say $r$, with color \texttt{chord} does nothing after it has already moved once executing \textsc{ChordMove} until the whole procedure is complete. This is because after $r$ has moved once, it sees $r_1$ or $r_2$ not on $\mathcal{L}_V(r)$ and $\mathcal{L}_V(r_1)$ has other robots except $r_1$ and $r_2$ until all non terminal robots on $\mathcal{L}_V(r_1)$ executes this subroutine exactly once.

Robots begin to execute Phase 2 from a Phase 1 Final Configuration. So, we have to ensure that by executing Phase 1 robots will eventually form a Phase 1 Final Configuration. This is ensured in the subsequent correctness section. The pseudo code of Phase 1 is presented in Algorithm~\ref{Algo_Phase1}.

\begin{figure}[]
  \centering
   \begin{minipage}{1\linewidth}
  \begin{algorithm}[H]
  \small
     % \setstretch{0.5}
    \SetKwInOut{Input}{Input}
    \SetKwInOut{Output}{Output}
    \SetKwProg{Fn}{Function}{}{}
    \SetKwProg{Pr}{Procedure}{}{}

    \Pr{\textsc{Phase1()}}{

    $r \leftarrow$ myself
  
  \uIf{$r.color = $ \texttt{off}}
  {
    \If{$r$ sees no robot with color \texttt{diameter}}
    {
        \If{There is no robot i $H_L^O(r)$ and $r$ is terminal}
        {
            $r.color \leftarrow $ \texttt{moving1}\;
            move left\;
        }
        \ElseIf{$\mathcal{L}_I(r)$ is at least two hop away and all visible robots and all visible robots on $\mathcal{L}_I(r)$ has color \texttt{chord}}
        {
            \If{$r$ is terminal}
            {
                $r.color \leftarrow $ \texttt{moving1}\;
            }
        }
    }
  }
  \uElseIf{$r.color =$ \texttt{moving1}}
  {
    \eIf{$r$ sees no robot with color \texttt{diameter} on $\mathcal{L}_V(r)$}
    {
        \uIf{ all visible robots on $\mathcal{L}_I(r)$ has color \texttt{chord} and $r$ sees no robot of color \texttt{chord} on $\mathcal{L}_V(r)$}
        {
            \eIf{$H_U^C(r)\cap \mathcal{L}_I(r)$ and $H_B^C(r)\cap \mathcal{L}_I(r)$ both are non empty }
            {
                \eIf{There is a robot $r'$ on $\mathcal{L}_V(r)$}
                {
                    move opposite to $r'\;$
                }
                {
                    move according to positive $Y-$ axis\;
                }
            }
            {
                move left\;
            }
        }
        \uElseIf{ ($r$ is singleton on $\mathcal{L}_V(r)$ and all visible robots on $\mathcal{L}_I(r)$ has color \texttt{moving1}) or, (distance of $\mathcal{R}_I(r)$ having a robot with color \texttt{off} = 1 and $H_L^O(r)$ is empty.}
        {
            move left\;
        }
        \ElseIf{$H_L^O(r)$ is empty}
        {
            \uIf{sees a robot with color \texttt{chord} on $\mathcal{L}_V(r)$}
            {
                $r.color \leftarrow $ \texttt{cord}\;
            }
            \ElseIf{distance of $\mathcal{R}_I(r)$ having a robot of color \texttt{off} $\ge 2$}
            {
                $r.color \leftarrow$ \texttt{chord}\;
            }
        }
    }
    {
        $r.color \leftarrow$ \texttt{diameter}\;
        terminate\;
    }
  }
  \ElseIf{$r.color = $ \texttt{chord}}
  {
    \uIf{$r$ sees no robot with color \texttt{diameter}}
    {
        \If{There is a robot with color \texttt{chord} on $\mathcal{L}_V(r)$, there is no robot on $H_L^O(r), H_R^O(r)$ and $H(r)$ where $H(r) \in \{H_B^C(r), H_U^C(r)\}$}
        {
             $r.color \leftarrow$ \texttt{diameter}\;
             terminate\;
        }
    }
    \ElseIf{$r$ sees a robot with color \texttt{diameter} on $\mathcal{L}_V(r)$}
    {
        \eIf{$r$ is terminal}
        {
            $r.color \leftarrow$ \texttt{diameter}\;
            terminate\;
        }
        {
            Execute \textsc{ChordMove}()\;
        }
    }
  }

  }

    \caption{\textbf{Phase 1}}
    \label{Algo_Phase1}
\end{algorithm}

 \end{minipage}
\end{figure}
\subsubsection{Correctness of Phase 1}
Phase 1 is divided into two parts. In the first part, the robots first form a line where all robots have either the color \texttt{chord} or \texttt{moving1}. To be specific only terminal robots can have the color \texttt{moving1} on that line. Now at least one of the terminal robots eventually change its  color to \texttt{diameter}. Then in the second part, the non terminal robots with color \texttt{chord} moves left or right once, reaching either $\mathcal{L}_1$ or $\mathcal{L}_3$ and thus forming a P1FC eventually. 

So, first we have to show that all robots must move to a single line eventually where all robots have color either \texttt{chord} or \texttt{moving1}. We prove this by ensuring that all robots with color \texttt{off} eventually change its color to \texttt{moving1} (Lemma~\ref{lemma: off changes to moving11=}). Then all robot with color \texttt{moving1} moves to $\mathcal{L}_1$ (Lemma~\ref{lemma:moving1 on L_j moves to L_(j-1)}). For this, we also ensured that $\mathcal{L}_1$ becomes fixed after a finite time (Lemma~\ref{lemma: L1 is fixed}), otherwise a potential livelock situation may occur.

In the following, we have stated some observations proved some claims  which will be needed to prove the above mentioned lemmas.

\begin{observation}
\label{observation:2 hop Visible}
    A robot $r$ can see all robots of $\mathcal{L}_I(r)$ (resp. $\mathcal{R}_I(r)$) if $\mathcal{L}_I(r)$ (resp. $\mathcal{R}_I(r)$) is at least two hop away from $r$.
\end{observation}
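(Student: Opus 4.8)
This is a purely planar fact about disks on a grid, so the plan is to argue directly in the plane. I would set coordinates with the center of $r$ at the origin, so that $\mathcal{L}_V(r)$ is the line $x=0$, the line $\mathcal{L}_I(r)$ is $x=-k$ for some integer $k\ge 2$, and an arbitrary robot $r'$ on $\mathcal{L}_I(r)$ has center $(-k,m)$ with $m\in\mathbb Z$. Recall that all robots are disks of one common radius $\rho\le\tfrac12$ centered at grid points; write $D_s$ for the disk occupied by a robot $s$. It suffices to produce, for each such $r'$, a single segment joining a point of $\partial D_r$ to a point of $\partial D_{r'}$ that contains no point of any other robot's disk.

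The first step is to locate the possible obstructors. By the definition of $\mathcal{L}_I(r)$ as the nearest vertical line to the left of $\mathcal{L}_V(r)$ carrying a robot, no robot other than $r$ and $r'$ has its center in the open strip $-k<x<0$; hence every other robot has its center on $x=0$, on $x=-k$, or on some line $x\le -k-1$, and therefore its disk lies in the band $\{|x|\le\rho\}$, in the band $\{|x+k|\le\rho\}$, or in the halfplane $\{x\le -k-1+\rho\}$, respectively. Since $\rho\le\tfrac12$ and $k\ge 2$, these three regions are well separated in the $x$-direction; this is the only place where the hypothesis $k\ge 2$ enters.

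The second step is to choose the sightline and verify it. I would take $\sigma$ to be the segment from the leftmost point $P:=(-\rho,0)$ of $D_r$ to the rightmost point $P':=(-k+\rho,m)$ of $D_{r'}$. Along $\sigma$ the abscissa decreases monotonically from $-\rho$ to $-k+\rho$, and $-k+\rho<-\rho$ because $k>2\rho$, so every point of $\sigma$ satisfies $-k+\rho\le x\le -\rho$. Then: (i) a robot at $(0,j)$ with $j\ne 0$ has its disk inside $\{|x|\le\rho\}$, whose abscissa-range meets that of $\sigma$ only at $x=-\rho$, where $\sigma$ is the single point $P$, whereas the disk meets the line $x=-\rho$ only at $(-\rho,j)\ne P$; (ii) symmetrically, a robot at $(-k,j')$ with $j'\ne m$ meets the abscissa-range of $\sigma$ only at $x=-k+\rho$, where $\sigma$ is the single point $P'\ne(-k+\rho,j')$; (iii) a robot on a line $x\le -k-1$ has its disk inside $\{x\le -k-1+\rho\}$, disjoint from the abscissa-range of $\sigma$ entirely. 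Hence $\sigma$ meets no robot other than $r$ (at $P$) and $r'$ (at $P'$), so $r$ sees $r'$; since $r'$ was arbitrary on $\mathcal{L}_I(r)$, $r$ sees all of them, and $\mathcal{R}_I(r)$ follows by the reflection $x\mapsto -x$. I foresee no real obstacle once this sightline is fixed; the only subtlety worth checking is that $\sigma$ just touches each ``danger band'' $\{|x|\le\rho\}$, $\{|x+k|\le\rho\}$ at a single abscissa and there avoids occupied points — which holds since the occupied points on $x=-\rho$ (resp. $x=-k+\rho$) coming from radius-$\rho$ robots sit at integer ordinates $j\ne 0$ (resp. $j'\ne m$), whereas $\sigma$ meets those lines at ordinates $0$ (resp. $m$); note this is exactly where $k\ge 2$ is essential, since for $k=1$, $\rho=\tfrac12$ a robot directly above $r$ on $\mathcal{L}_V(r)$ can obstruct.
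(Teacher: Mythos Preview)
The paper states this as an Observation without proof, so there is no argument to compare against; your task was to supply what the authors left as evident, and your geometric argument does so correctly. The choice of sightline from the leftmost point of $D_r$ to the rightmost point of $D_{r'}$ is the natural one, and your case analysis on the $x$-ranges of potential obstructing disks is clean and uses the hypothesis $k\ge 2$ in exactly the right place.

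One small omission: in your first step you assert that every robot other than $r$ and $r'$ has its center on $x=0$, on $x=-k$, or on some line $x\le -k-1$, but this forgets robots on vertical lines $x\ge 1$ to the right of $r$ (the definition of $\mathcal{L}_I(r)$ constrains only the left side of $\mathcal{L}_V(r)$). This is harmless for your conclusion, since any such robot's disk lies in $\{x\ge 1-\rho\}$, which is disjoint from the $x$-range $[-k+\rho,-\rho]$ of $\sigma$ because $1-\rho\ge\tfrac12>0\ge-\rho$; but the case should be listed for completeness.
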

\begin{observation}
    \label{observation: right half of off is  all off}
    If $r$ be a robot of color \texttt{off} executing Phase 1 such that $H_R^O(r)$ is non empty, then all robots on $H_R^O(r)$ must have color \texttt{off} 
\end{observation}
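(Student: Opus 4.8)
The plan is to prove, by induction on the discrete sequence of events --- i.e.\ on the completions of individual \textsc{Look}-\textsc{Compute}-\textsc{Move} cycles --- a global invariant strictly stronger than the observation: \emph{at every instant during Phase~1, no robot whose color differs from \texttt{off} lies strictly to the right of any robot with color \texttt{off}}. The observation is then immediate, since every robot of $H_R^O(r)$ lies strictly to the right of $\mathcal{L}_V(r)$, hence must be \texttt{off}.

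First the base case: in the initial configuration every robot is \texttt{off}, so the invariant holds vacuously. For the inductive step I would consider a single robot $r$ completing one LCM cycle (the only events that change a configuration) and check that the invariant survives. Transitions among the non-\texttt{off} colors (\texttt{moving1}$\to$\texttt{chord}, \texttt{moving1}$\to$\texttt{diameter}, \texttt{chord}$\to$\texttt{diameter}) neither change the set of \texttt{off} robots nor --- by inspection of Algorithm~\ref{Algo_Phase1} --- move $r$, so they are harmless. An \texttt{off} robot never moves before changing color, so the only cases needing real work are (i) $r$ passing from \texttt{off} to \texttt{moving1}, and (ii) an already non-\texttt{off} robot moving.

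The heart of the proof is case (i). By Algorithm~\ref{Algo_Phase1} this happens in one of two ways. If $H_L^O(r)$ is empty and $r$ is terminal, then $r$ also steps one unit left; but $H_L^O(r)=\emptyset$ forces $r\in\mathcal{L}_1$ (the leftmost occupied vertical line), so there is no robot whatsoever to the left of $r$, before or after the step, and the invariant cannot be violated. Otherwise $\mathcal{L}_I(r)$ is at least two hops away from $r$, all visible robots on $\mathcal{L}_I(r)$ are \texttt{chord}, and $r$ does not move; by Observation~\ref{observation:2 hop Visible}, $r$ then sees \emph{all} of $\mathcal{L}_I(r)$, so every robot of $\mathcal{L}_I(r)$ has color \texttt{chord}. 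If some \texttt{off} robot $r_o$ were strictly left of $r$, then --- since by definition there is no robot strictly between $\mathcal{L}_I(r)$ and $\mathcal{L}_V(r)$ and no robot of $\mathcal{L}_I(r)$ is \texttt{off} --- $r_o$ would lie strictly left of $\mathcal{L}_I(r)$, whence the \texttt{chord} robots of $\mathcal{L}_I(r)$ would be non-\texttt{off} robots strictly to the right of the \texttt{off} robot $r_o$, contradicting the inductive hypothesis.

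For case (ii): a \texttt{moving1} robot only ever moves left or vertically, and a \texttt{chord} robot moves only inside \textsc{ChordMove}. Vertical moves preserve every left--right order, and a leftward move by a non-\texttt{off} robot cannot place it to the right of an \texttt{off} robot unless it already was, which the inductive hypothesis forbids. The one subtle step --- and the part I expect to be the main obstacle --- is a \emph{rightward} step inside \textsc{ChordMove}, which a priori could carry a \texttt{chord} robot past an \texttt{off} robot sharing its vertical line. I would dispose of it using that \textsc{ChordMove} runs only when the executing robot sees a robot of color \texttt{diameter}, together with the auxiliary claim that once any robot has color \texttt{diameter} there are no \texttt{off} robots left: the first \texttt{diameter} robot arises from a \texttt{chord} robot with both open half-planes $H_L^O$ and $H_R^O$ empty, which forces all robots onto a single vertical line --- a line reached only by robots that had earlier become \texttt{moving1} --- and since no rule ever re-assigns color \texttt{off}, no \texttt{off} robot can survive. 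Hence no \texttt{off} robot is present whenever a \textsc{ChordMove} step is taken, so the rightward case is vacuous, the invariant is preserved by every event, and the induction goes through.
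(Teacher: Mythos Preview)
Your proof is correct. The paper records this fact as an \emph{observation} and gives no argument at all, so there is nothing to compare your approach against; your event-indexed induction on the stronger invariant ``no non-\texttt{off} robot lies strictly right of any \texttt{off} robot'' is exactly the natural way to make the observation rigorous, and your case analysis against Algorithm~\ref{Algo_Phase1} is accurate (in particular \texttt{moving1} robots indeed move only left or vertically in Phase~1, and the sole rightward step is inside \textsc{ChordMove}).

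Two minor remarks you may want to tighten. First, because the scheduler is asynchronous, the conditions you read off in cases (i) and (ii) are conditions at the \emph{snapshot} time $t_{\text{look}}$, whereas the invariant must be checked at the \emph{move} time $t_{\text{move}}$; your argument still goes through once you note explicitly that \texttt{off} robots never move and that no rule in Phase~1 re-assigns color \texttt{off}, so the set of positions of \texttt{off} robots can only shrink between $t_{\text{look}}$ and $t_{\text{move}}$. Second, your auxiliary claim that ``once some robot has color \texttt{diameter} there are no \texttt{off} robots left'' deserves one more line: the first \texttt{diameter} robot $r_d$ had both $H_L^O(r_d)$ and $H_R^O(r_d)$ empty at its snapshot, so every robot was then on $\mathcal{L}_V(r_d)$; but $\mathcal{L}_V(r_d)$ is strictly left of the initial $\mathcal{L}_1$ (since the first movement from the initial configuration already shifted $\mathcal{L}_1$ left and no robot moves right before a \texttt{diameter} robot exists), hence no robot that has never moved --- in particular no \texttt{off} robot --- can be on $\mathcal{L}_V(r_d)$. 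With these two sentences added your write-up is complete.
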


\begin{observation}
    \label{observation: before chord atmost two robot on L1} After a move by any robot from the initial configuration and before any robot changes its color to \texttt{chord}, $\mathcal{L}_1$ can have at most two robots of color \texttt{moving1} and all other robots has color \texttt{off}.
\end{observation}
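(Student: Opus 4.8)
\emph{Proof idea.} Let $t_b$ be the time of the first \textsc{Move} out of $\mathcal{C}(0)$ (as in the definition of ``fixed'' above) and let $\tau$ be the first instant at which some robot acquires the colour \texttt{chord}, with $\tau=\infty$ if this never happens; the statement concerns the window $W=[t_b,\tau)$. Write $\ell$ for the vertical line that is $\mathcal{L}_1$ in $\mathcal{C}(0)$, and recall that $\ell$ carries at most two robots that are terminal on $\ell$ in $\mathcal{C}(0)$; call them $r_1$ and $r_2$ (only $r_1$ if $\ell$ holds a single robot). The plan is to prove two facts valid at every time of $W$: \textbf{(i)} no robot has the colour \texttt{diameter}, so every robot is coloured \texttt{off} or \texttt{moving1}; and \textbf{(ii)} no robot outside $\{r_1,r_2\}$ has ever performed an effective action --- so each such robot still sits at its initial vertex with colour \texttt{off} --- while the only non-\texttt{off} colour $r_1$ or $r_2$ ever takes is \texttt{moving1}, acquired exactly in a step that simultaneously moves the robot one hop to the left. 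Granting (i)--(ii): throughout $W$ at most two robots are coloured \texttt{moving1} (namely those among $r_1,r_2$ that have moved), every other robot is coloured \texttt{off}, and each \texttt{moving1} robot lies strictly to the left of $\ell$; since during $W$ at least one of $r_1,r_2$ has moved, the current $\mathcal{L}_1$ lies strictly left of $\ell$ and is occupied only by such \texttt{moving1} robots, at most two of them --- which is the assertion.

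\textbf{Proof of (i).} In Algorithm~\ref{Algo_Phase1}, the \texttt{moving1}-to-\texttt{diameter} transition and the terminal-\texttt{chord}-to-\texttt{diameter} transition both require the executing robot to see a robot that is \emph{already} coloured \texttt{diameter}, while the only remaining way to produce colour \texttt{diameter} (the first alternative of the \texttt{chord} case) requires the executing robot to be coloured \texttt{chord}. Hence colour \texttt{diameter} cannot appear before colour \texttt{chord}, i.e.\ not before $\tau$; so on $W$ every robot is \texttt{off} or \texttt{moving1}, and a \texttt{moving1} robot keeps that colour on $W$, because every exit from \texttt{moving1} needs a visible \texttt{chord} or \texttt{diameter}.

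\textbf{Proof of (ii).} At time $0$ all robots are \texttt{off}, so the \texttt{off}-sub-case that requires all visible robots of $\mathcal{L}_I(r)$ to be coloured \texttt{chord} is inapplicable, and the other \texttt{off}-sub-case needs the robot to be terminal with $H_L^O(r)=\emptyset$, i.e.\ terminal on $\ell$; thus only $r_1,r_2$ can move at time $0$, and doing so turns them \texttt{moving1} and shifts them one hop left. I would then induct along the ordered sequence of \textsc{Move} events of $W$ with the invariant: every robot outside $\{r_1,r_2\}$ has so far performed no effective action (hence occupies its initial vertex, on $\ell$ or strictly to its right, with colour \texttt{off}); each of $r_1,r_2$ is coloured \texttt{moving1}, lies on $\ell$ or strictly to its left, and retains its initial height; and every move executed so far is a one-hop left step. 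The inductive step is a finite case check on the pseudocode. A robot $r_3\notin\{r_1,r_2\}$ is \texttt{off} and can trigger neither the all-\texttt{chord} \texttt{off}-sub-case (no \texttt{chord} exists, by (i)) nor the $H_L^O(r)=\emptyset$ one, because $r_3$ is never on the current $\mathcal{L}_1$: either some $r_i$ has already stepped strictly left of $\ell$, and then $r_i\in H_L^O(r_3)$ as $r_3$ lies on $\ell$ or to its right, or no $r_i$ has yet moved and then $r_3$ is non-terminal on $\ell$ or lies on $\mathcal{L}_j$ of $\mathcal{C}(0)$ with $j\ge 2$; so $r_3$ never acts. When $r_i$ ($i\in\{1,2\}$) is activated it is \texttt{moving1}, and its vertical-move sub-case is disabled, because the only robot that can ever populate $\mathcal{L}_I(r_i)$ is the other element of $\{r_1,r_2\}$ (everything else remains on $\ell$ or to its right, hence not strictly left of $r_i$), and by the invariant that robot keeps its initial height --- distinct from $r_i$'s --- so it is not on $\mathcal{L}_H(r_i)$ and cannot meet both $H_U^C(r_i)$ and $H_B^C(r_i)$; therefore $r_i$ either steps one hop left, stays put, or (once the nearest occupied line to its right is two hops away) recolours to \texttt{chord}, which ends $W$. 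This preserves the invariant, establishing (ii); the one-robot case is identical with $r_2$ omitted.

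\textbf{Main obstacle.} The delicate point is the asynchronous bookkeeping inside this induction: under $\mathcal{ASYNC}$ a robot may finish its \textsc{Compute} while on $\mathcal{L}_1$ yet execute the resulting \textsc{Move} only later, after $\mathcal{L}_1$ has shifted further left. I would handle this by formulating the invariant at the instant immediately \emph{before} each \textsc{Move} and using that a pending ``step left'' decided by $r_i$ forces $H_L^O(r_i)=\emptyset$ at the corresponding \textsc{Compute}; together with ``\texttt{moving1} robots move only left'' this is exactly what carries the invariant through the \textsc{Move} and, crucially, prevents any third robot from ever becoming active. The remainder is a routine reading of Algorithm~\ref{Algo_Phase1} against the definitions of $\mathcal{L}_1,\mathcal{L}_I,\mathcal{R}_I,H_L^O,H_U^C,H_B^C,\mathcal{L}_H$ and ``terminal''.
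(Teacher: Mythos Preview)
Your proof is correct in substance; the paper records this as an Observation without proof, so there is no argument to compare against. Two small remarks on presentation. First, your stated invariant claims ``each of $r_1,r_2$ is coloured \texttt{moving1}'', which is slightly too strong, since during $W$ one of them may not yet have acted; you clearly intend (and use in the concluding paragraph) the weaker ``each $r_i$ that has acted is coloured \texttt{moving1} and lies strictly left of $\ell$, while one that has not is still \texttt{off} on $\ell$''. Second, your justification that the vertical-move sub-case is disabled via the $H_U^C/H_B^C$ count is more work than necessary: that sub-case lives inside the branch guarded by ``all visible robots on $\mathcal{L}_I(r)$ have colour \texttt{chord}'', and by your point~(i) no robot is \texttt{chord} on $W$, so the guard already fails whenever $\mathcal{L}_I(r_i)$ is occupied (and when $r_i$ is on the current $\mathcal{L}_1$ the branch is, by the algorithm's intent, not entered either). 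Neither point affects correctness; the induction along \textsc{Move} events and the handling of pending decisions under $\mathcal{ASYNC}$ are sound.
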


\begin{claim}
     \label{Lemma: r sees off on right imidiate}
    Let $r$ be a robot with color \texttt{moving1} such that $\mathcal{R}_I(r)$ is one hop away from $\mathcal{L}_V(r)$ and there is at least one robot with color \texttt{off} on $\mathcal{R}_I(r)$. If $r$ is activated in this configuration, then $r$ always sees a robot having color \texttt{off} on $\mathcal{R}_I(r)$ in Phase 1.
\end{claim}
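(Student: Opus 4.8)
The plan is to prove the statement in two steps: use the invariants of Phase~1 to pin down the configuration around $r$, then give an explicit visibility argument exhibiting an \texttt{off} robot of $\mathcal{R}_I(r)$ that $r$ must see. (Note that if $\mathcal{R}_I(r)$ were two or more hops away the claim would be immediate from Observation~\ref{observation:2 hop Visible}; the content is precisely the one-hop case, where opacity can bite.)

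\emph{Step 1 (the configuration around $r$).} Because $r$ has color \texttt{moving1} while $\mathcal{R}_I(r)$ still carries a robot of color \texttt{off}, the configuration lies in the line-gathering part of Phase~1: no robot has color \texttt{diameter}, and -- from the way color \texttt{chord} is produced in Phase~1, together with Observations~\ref{observation: right half of off is  all off} and~\ref{observation: before chord atmost two robot on L1} -- every robot of color \texttt{chord}, if any, lies on the current $\mathcal{L}_1$. Hence $\mathcal{R}_I(r)=\mathcal{L}_k$ for some $k\ge 2$, so $\mathcal{R}_I(r)$ has no \texttt{chord} robot and every robot on it has color \texttt{off} or \texttt{moving1}. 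Using that in Phase~1 only terminal robots ever acquire color \texttt{moving1} (and that before any \texttt{chord} robot exists only the two robots terminal on the initial $\mathcal{L}_1$ do so), I would establish: $\mathcal{R}_I(r)$ contains at most one robot of color \texttt{moving1}, and such a robot is a terminal robot of the column $\mathcal{R}_I(r)$; $r$ is a terminal robot of its own column $\mathcal{L}_V(r)$, so all other robots of $\mathcal{L}_V(r)$ lie on one side of $r$; and in the principal case -- $r$ being one of the two robots terminal on the initial $\mathcal{L}_1$, after exactly one leftward move -- the line $\mathcal{L}_H(r)$ lies strictly above (or strictly below) every robot of $\mathcal{R}_I(r)$, with the at most one companion robot of $r$ on $\mathcal{L}_V(r)$ lying even beyond $\mathcal{R}_I(r)$ in that direction.

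\emph{Step 2 (a visible \texttt{off} robot).} Let $r_o$ be the robot of $\mathcal{R}_I(r)$ whose grid point is vertically nearest to $\mathcal{L}_H(r)$ (in the principal case, the topmost or bottommost robot of that column). By Step~1 the unique \texttt{moving1} robot of $\mathcal{R}_I(r)$, if present, sits at the opposite extreme of the column, so $r_o$ has color \texttt{off}. To see that $r$ sees $r_o$: since $\mathcal{L}_V(r)$ and $\mathcal{R}_I(r)$ are one hop apart and $rad\le\frac12$, any robot blocking the segment from $r$ to $r_o$ must lie on $\mathcal{L}_V(r)$ or on $\mathcal{R}_I(r)$ at a height strictly between those of $r$ and $r_o$. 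On $\mathcal{R}_I(r)$ there is no such robot, by the choice of $r_o$; on $\mathcal{L}_V(r)$ the only candidate is $r$'s companion robot, which by Step~1 lies beyond $r_o$, hence not at an intermediate height. So nothing blocks the segment and $r$ sees the \texttt{off} robot $r_o\in\mathcal{R}_I(r)$.

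\emph{Main obstacle.} The weight of the argument is in Step~1, where one must read the Phase~1 description carefully to verify that the hypotheses really force $r$, $\mathcal{R}_I(r)$ and $\mathcal{L}_V(r)$ into the claimed positions -- in particular handling the \texttt{moving1} robots that acquired their color only after \texttt{chord} robots appeared (for such a robot $\mathcal{R}_I(r)$ is entirely \texttt{off}, but one still has to argue $r$ sees some robot of it), which is where Observations~\ref{observation: right half of off is  all off} and~\ref{observation: before chord atmost two robot on L1} and the ``only terminal robots become \texttt{moving1}'' rule are used. The geometry in Step~2 is short but still needs the customary care for fat opaque robots: the reduction ``a blocking robot must sit on $\mathcal{L}_V(r)$ or $\mathcal{R}_I(r)$ at an intermediate height'' uses that all robots occupy grid points and that $rad\le\frac12$, and tangency must be watched when $rad=\frac12$.
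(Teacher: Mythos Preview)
Your approach is genuinely different from the paper's: the paper argues by contradiction (assume $r$ sees no \texttt{off} robot, produce intermediate robots $r_1,r_2$ on the two columns, then trace their colour histories to a contradiction), whereas you try to exhibit directly a visible \texttt{off} robot $r_o$. That is a perfectly reasonable strategy, and your Step~1 case analysis is on the right track.

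The gap is in Step~2, and precisely in the non-principal case you flag at the end. Take $r$ a robot that has just turned \texttt{moving1} via the ``all \texttt{chord} on $\mathcal{L}_I(r)$'' rule while still sitting on its original column $\mathcal{L}_k$; such an $r$ is terminal there, but $\mathcal{L}_k$ may carry \emph{many} other (\texttt{off}) robots below it, not just one ``companion''. If $\mathcal{R}_I(r)=\mathcal{L}_{k+1}$ is one hop away with \texttt{off} robots whose nearest member $r_o$ sits well below $r$, then there can be several $\mathcal{L}_V(r)$-robots at heights strictly between $r$ and $r_o$. So the sentence ``on $\mathcal{L}_V(r)$ the only candidate is $r$'s companion robot, which \dots\ lies beyond $r_o$'' is simply false in this case, and your blocking argument as written collapses. (A concrete instance: $\mathcal{L}_k$ carries robots at heights $0,1,2$ with $r$ at height $2$, and $\mathcal{L}_{k+1}$ carries robots at heights $-5$ and $10$; then $r_o$ is at height $-5$ and the two \texttt{off} robots on $\mathcal{L}_k$ sit at intermediate heights.)

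The fix is to strengthen your visibility lemma to what the paper actually uses: for two fat robots on adjacent grid columns, one is hidden from the other only if \emph{each} of the two columns carries a robot at a strictly intermediate height. With that, your choice of $r_o$ (vertically nearest on $\mathcal{R}_I(r)$) already guarantees there is no intermediate robot on the $\mathcal{R}_I(r)$ side, and hence $r$ sees $r_o$ regardless of what sits on $\mathcal{L}_V(r)$. Once you have this stronger geometric fact, the whole of Step~1 collapses to the single observation that $\mathcal{R}_I(r)$ is entirely \texttt{off} (or \texttt{off} except for a terminal \texttt{moving1} robot at the far end), which you already sketch; the delicate positioning claims about $\mathcal{L}_V(r)$ are then no longer needed.
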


\begin{proof}
    Let $r$ be a robot with color \texttt{moving1}. Let $\mathcal{R}_I(r)$ be one hop away from $\mathcal{L}_V(r)$ which has a robot, say $r'$ of color \texttt{off} on it. If possible $r$ does not see any robot with color \texttt{off} on $\mathcal{R}_I(r)$. That is $r$ does not see $r'$ upon activation, say at a time $t >0$.
    Let $\mathcal{L}_H(r)$ and $\mathcal{L}_H(r')$ be two horizontal lines passing through $r$ and $r'$ respectively. Then the above assumption is true only when $\mathcal{L}_V(r)$ and $\mathcal{L}_V(r')$ each contains at least one robot between $\mathcal{L}_H(r)$ and $\mathcal{L}_H(r')$. Let $r_1$ and $r_2$ be two such robots on $\mathcal{L}_V(r)$ and $\mathcal{L}_V(r')$ respectively. without loss of generality, let $r$ can see $r_2$ and $r'$ can see $r_1$. Note that $r_2$ must be of color \texttt{moving1} at time $t$. Also $r_1$ either have color \texttt{off} or color \texttt{moving1} at time $t$ ($r_1$ can not be of color \texttt{chord} at time $t$ as $r'$ is on $\mathcal{R}_I(r_1)$ having color \texttt{off} at time $t$).
    
    Now, if $r_1$ has color \texttt{off} then in the interval $(0,  t)$, $\mathcal{L}_V(r_1)$ has not changed. Now in this interval, all robots on $\mathcal{L}_V(r_1)$ must have color \texttt{moving1} or \texttt{off}. So in the interval, $(0, t)$, $r_2$ can never see all robots with color \texttt{chord} on $\mathcal{L}_I(r_2)$ and thus can not change its initial color \texttt{off} to \texttt{moving1} in the mentioned interval. So, $r_2$ can not be of color \texttt{moving1} at time $t$ which is a contradiction to our assumption.

    So, let $r_1$ has color \texttt{moving1} at time $t$. Now, since at time $t$, $r_2$ has color \texttt{moving1}, there is a time $t_1 <t$ when $r_2$ has color \texttt{off}, is terminal on $\mathcal{L}_V(r_2)$ and either sees  $H_L^O(r_2)$ is empty or sees all robots  on $\mathcal{L}_I(r_2)$ having color \texttt{chord}. This implies $r$ and $r_1$ must have moved to $\mathcal{L}_I(r_2)$ after time $t_1$. Thus, at time $t_1$ all of $r, r_1$ and $r_2$ were along with $r'$ on $\mathcal{L}_V(r_2) = \mathcal{L}_V(r')$. Now since, $r$ and $r_1$ moves to $\mathcal{L}_I(r_2)$, they must have changed their color to \texttt{moving1} from initial color \texttt{off}. There are three cases. Firstly, let both $r$ and $r_1$ get activated and see that they are terminal on $\mathcal{L}_V(r')$ and change their color to \texttt{moving1} on or after time $t_1$. This is not possible as in this case $r,r_1$ and $r_2$ all have to be terminal on $\mathcal{L}_V(r')$ at time $t_1$ which is not possible. Secondly, let both $r$ and $r_1$ be activated and see themselves terminal on $\mathcal{L}_V(r')$ and change their color to \texttt{moving1} before $t_1$. This case is also impossible as at time $t_1$ since $r$ and $r_1$ are still on $\mathcal{L}_V(r')$, $r_2$ can not see itself as a terminal robot which is a contradiction. So first, let us assume  $r$ changed its color before time $t_1$ and $r_1$ changed its color after time $t_1$ and before time $t$. We claim that, for $r_1$ to change its color after time $t_1$, $r$ must have to move left from $\mathcal{L}_V(r')$. 
     \begin{figure}[!ht]
        \centering
        \includegraphics[height=5cm, width =5cm]{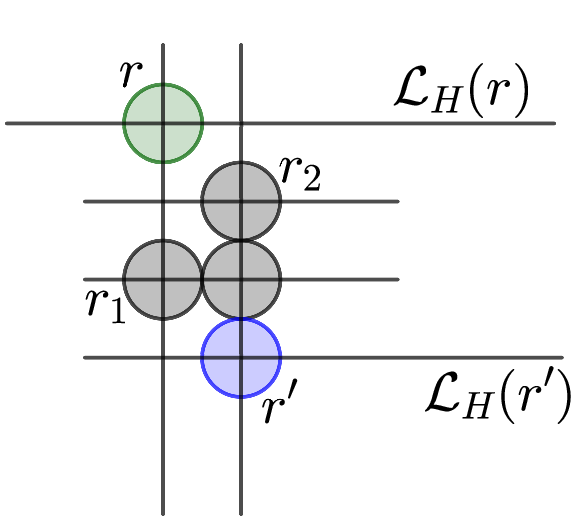}
        \caption{$r$ has color \texttt{moving1}, $r'$ has color \texttt{off} at time $t$ and $r$ can not see $r'$.}
        \label{fig:Moving1SeesOffOnRight}
    \end{figure}
    This is because, at time $t_1$, $r$ and $r_2$ both are terminal on $\mathcal{L}_V(r')$ so until $r$ moves $r_1$ can not become terminal and thus can not change its color. Now after time $t_1$ if $r_1$ is activated before time $t$, even if it is terminal now it will not change its color to \texttt{moving1} as $\mathcal{L}_I(r_1)$ is exactly one hop away (less than two hop away). Thus, even if $r_1$ is activated it does not change its color from \texttt{off} until time $t$. So at time $t$, $r$ can see $r_2$ with color \texttt{off} contrary to our assumption. Now, if $r_1$ had changed its color to \texttt{moving1} before time $t_1$ and $r$ after time $t_1$, then by similar argument it can be said that at time $t$, $r$ stays at $\mathcal{L}_V(r')$. So, in this case, if $\mathcal{R}_I(r)$ is non-empty, all robots on $\mathcal{R}_I(r)$ will have color \texttt{off} (Observation~\ref{observation: right half of off is  all off}). Thus, $r$ must see at least one robot having color \texttt{off} on $\mathcal{R}_I(r)$ even if it is one hop away contrary to our assumption. Thus, for a robot $r$ if $\mathcal{R}_I(r)$ has a robot with color \texttt{off} and $\mathcal{R}_I(r)$ is one hop away then $r$ always sees at least one robot of color \texttt{off} on $\mathcal{R}_I(r)$.

\end{proof}

\begin{claim}
    \label{chord only after L1 is fixed}
    Before $\mathcal{L}_1$ is fixed, no robot in the configuration can have color \texttt{chord}.
\end{claim}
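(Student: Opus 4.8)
The plan is to locate the earliest instant a \texttt{chord} appears and to show that $\mathcal{L}_1$ has already stopped shifting left by that instant. Let $t^{*}$ be the first time some robot holds the color \texttt{chord}, and let $r$ be such a robot at $t^{*}$. From Algorithm~\ref{Algo_Phase1}, the color \texttt{chord} is assigned only inside the \texttt{moving1} branch, and there only in one of two sub-cases: (A) $r$ sees a robot of color \texttt{chord} on $\mathcal{L}_V(r)$; or (B) $H_L^O(r)$ is empty, $r$ sees no \texttt{chord} on $\mathcal{L}_V(r)$, and the first vertical line to the right of $r$ that carries a robot is at least two hops away and carries a robot of color \texttt{off}. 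Case~(A) cannot occur at $t^{*}$, since no \texttt{chord} exists before $t^{*}$. In case~(B) the enclosing guard ``$r$ sees no \texttt{diameter} on $\mathcal{L}_V(r)$'' holds automatically, because a \texttt{diameter} is only produced from a \texttt{chord} (directly, or via a \texttt{moving1} robot that sees a \texttt{diameter} on its own column), so no \texttt{diameter} exists before $t^{*}$ either. Hence $r$ turns \texttt{chord} through case~(B): at $t^{*}$, $r$ is \texttt{moving1}, $\mathcal{L}_V(r)=\mathcal{L}_1$ (because $H_L^O(r)$ is empty), and $\mathcal{R}_I(r)$ lies at least two hops to the right and carries an \texttt{off} robot.

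Next I would pin down the configuration at $t^{*}$. Before the first \texttt{chord}, a robot becomes \texttt{moving1} only by the rule under which a terminal robot of $\mathcal{L}_1$ recolors itself and simultaneously \emph{moves left}, because the other rule that recolors \texttt{off} to \texttt{moving1} requires a visible \texttt{chord} robot; in particular an \texttt{off} robot never moves and never stands strictly left of its initial column. With Observation~\ref{observation: before chord atmost two robot on L1}, this yields: the current $\mathcal{L}_1$ at $t^{*}$ lies strictly left of the initial $\mathcal{L}_1$ and carries only the one or two former terminal robots of the initial $\mathcal{L}_1$ --- one of them, $r$, now \texttt{chord}, the other (if present) still \texttt{moving1} --- while every other robot is \texttt{off}, with the possible exception of one lagging \texttt{moving1} robot allowed by Observation~\ref{observation: before chord atmost two robot on L1}, and all of them occupy columns at least two hops to the right of $\mathcal{L}_1$ (otherwise $\mathcal{R}_I(r)$ would be one hop away).

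Then I would verify that $\mathcal{L}_1$ is \textit{fixed} with $t_f=t^{*}$, i.e.\ that both conditions of the definition hold. For condition~1, before $t^{*}$ the only robots that ever move are the at most two terminal robots of the initial $\mathcal{L}_1$, each sliding $\mathcal{L}_1$ left while all other robots stand still on columns strictly to its right, so at every time of $[t_b,t^{*})$ some robot is off $\mathcal{L}_1$ (the instances with only one or two robots in total are checked separately). For condition~2, I must show that from $t^{*}$ onward no robot ever moves left off $\mathcal{L}_1$ until all robots have gathered on $\mathcal{L}_1$. Throughout that window the robots on $\mathcal{L}_1$ are only \texttt{moving1}, \texttt{chord}, or \texttt{diameter}: a \texttt{moving1} robot on $\mathcal{L}_1$ sees a \texttt{chord} robot (namely $r$, which does not move) on its own column, or else has $\mathcal{R}_I$ at distance at least two, so the only applicable branch of Algorithm~\ref{Algo_Phase1} recolors it \texttt{chord} without moving it; a \texttt{chord} robot on $\mathcal{L}_1$ never executes \textsc{ChordMove} here because \textsc{ChordMove} requires a \texttt{diameter} robot on the executing robot's own column and no \texttt{diameter} reaches $\mathcal{L}_1$ before all robots have gathered; and a robot still approaching from the right moves only \emph{towards} $\mathcal{L}_1$ and halts upon reaching it, never overshooting. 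Observation~\ref{observation:2 hop Visible} and Claim~\ref{Lemma: r sees off on right imidiate} are used to control what the relevant robots actually see when they act. It follows that $\mathcal{L}_1$ is fixed at $t_f=t^{*}$, and since no robot is \texttt{chord} before $t^{*}=t_f$, no robot is \texttt{chord} before $\mathcal{L}_1$ is fixed.

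The real work, I expect, is condition~2: one must go through every branch of Algorithm~\ref{Algo_Phase1} that could make a robot currently on $\mathcal{L}_1$ output a leftward step, confirm that each such branch is disabled in the situation just described, and do so while tracking asynchrony (a robot may \textsc{Look} at one configuration and \textsc{Move} several configurations later, so pending moves must be handled too). Identifying case~(B) and reading off the configuration at $t^{*}$ is routine bookkeeping on the algorithm's guards.
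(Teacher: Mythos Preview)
Your plan is logically the contrapositive of the paper's argument: the paper assumes $\mathcal{L}_1$ is not fixed and derives a contradiction with the existence of a first \texttt{chord}, whereas you take the first \texttt{chord} at $t^{*}$ and try to verify directly that $\mathcal{L}_1$ is fixed at $t^{*}$. The content is the same; the difference is packaging.

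Where your proposal falls short is exactly where you say ``the real work'' lies: the pending-move case for condition~2. Your paragraph on condition~2 argues only about robots that \textsc{Look} \emph{after} $t^{*}$ (they see $r$ with \texttt{chord} on their column and recolor). You never dispatch the scenario in which the second \texttt{moving1} robot $r'$ on $\mathcal{L}_1$ took its \textsc{Look} at some $t'<t^{*}$, saw an \texttt{off} robot one hop to the right, and has a leftward move still pending at $t^{*}$. That is precisely the case the paper isolates and kills, and it does so with a short observation you are missing: the \texttt{off} robot $r_1$ that $r'$ saw at $t'$ does not move (it is \texttt{off}), and $r'$ has not moved either, so at the instant $r$ takes the \textsc{Look} that produces \texttt{chord}, $r_1$ still sits one hop to the right of $\mathcal{L}_1$. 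By Claim~\ref{Lemma: r sees off on right imidiate}, $r$ must see $r_1$; then the guard ``distance of $\mathcal{R}_I(r)$ having a robot of color \texttt{off} $\ge 2$'' fails and $r$ cannot turn \texttt{chord}, a contradiction. This one line closes the gap, and once you have it your direct verification of condition~2 collapses to the same two cases the paper treats (look after $t_0$, look before $t_0$).

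A secondary remark: your treatment of condition~2 also spends effort on \texttt{chord} robots executing \textsc{ChordMove} and on approaching robots ``not overshooting''. None of that is needed here. The only branch of Algorithm~\ref{Algo_Phase1} that can move a robot on $\mathcal{L}_1$ leftward is the \texttt{moving1} branch with ``$\mathcal{R}_I(r)$ has \texttt{off} at distance~1 and $H_L^O(r)$ empty''; ruling out that single branch (for both fresh and pending cycles) suffices, and is all the paper does.
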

\begin{proof}
    Let there is a time $t$ when $\mathcal{L}_1$ is not fixed but the configuration at time $t$ has a robot, say $r$ with color \texttt{chord}. without loss of generality let $r$ be the first robot that changes its color from \texttt{moving1} to \texttt{chord}. So, there must be a time $t_0 <t$ when $r$ is activated on $\mathcal{L}_1$ with color \texttt{moving1} and changes its color to \texttt{chord}. Note that at time $t_0$, $\mathcal{L}_1$ is not fixed. Also, since there are more than 2 robots in the system, there must be at least one robot with color \texttt{off} which is not on $\mathcal{L}_1$ for the whole duration $[t_b, t_0)$. Thus there must be a robot on $\mathcal{L}_1$ with color \texttt{moving1} which moves left after $t_0$. Let $r'$ be that robot. Note that $r'$ can not be $r$ as a robot with color \texttt{chord} does not move in Phase 1. Now, if $r'$ is activated after time $t_0$ for executing the LCM cycle where it moves left from $\mathcal{L}_1$ then, upon activation, it must have seen a robot $r_1$ with color \texttt{off} on $\mathcal{R}_I(r') = \mathcal{L}_2$ which is one hop away from $\mathcal{L}_1$. But this is not possible because if $r_1$ is on $\mathcal{L}_2$ after time $t_0$ it must have been there at time $t_0$ also. So at time $t_0$, $r$ does not change its color to \texttt{chord} upon activation contrary to our assumption. So, let $r'$ be activated at a time $t' < t_0$ for executing the LCM cycle where it moves left from $\mathcal{L}_1$ after time $t_0$. This is only possible if at time $t'$, $\mathcal{R}_I(r') = \mathcal{L}_2$ had a robot, say $r_1$,  with color \texttt{off}. Now since $r'$ moves left after time $t_0$, at time $t_0$ upon activation $r$ must have seen a robot with color \texttt{off} on $\mathcal{R}_I(r) =\mathcal{L}_2$ which is one hop away from $\mathcal{L}_2$. Thus $r$ doesn't change its color to \texttt{chord} upon activation at time $t_0$. This is also a contradiction. Thus,  before $\mathcal{L}_1$ is fixed, no robot changes their color to \texttt{chord}.
\end{proof}
\begin{claim}
    \label{lemma: L1 only moving1 and chord} In Phase 1, between the time of first move by any robot from the initial configuration and the time when all robots move to a single line for the first time,  a robot on $\mathcal{L}_1$ have color either \texttt{moving1} or \texttt{chord}.
\end{claim}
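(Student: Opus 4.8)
\emph{Proof proposal.} Write $t_b$ for the time of the first move and $t_s$ for the first time at which all robots lie on a single vertical line; the statement concerns the configurations $\mathcal{C}(t)$ with $t\in[t_b,t_s)$ (at $t_b$ itself we take the configuration just after the first move). Since the claim allows exactly the colours \texttt{moving1} and \texttt{chord} on $\mathcal{L}_1$, I would reduce it to ruling out the colours \texttt{off}, \texttt{diameter}, and \texttt{done} on $\mathcal{L}_1$ during $[t_b,t_s)$. The colour \texttt{done} never arises, since Algorithm~\ref{Algo_Phase1} never assigns it. For \texttt{diameter} the plan is a first–occurrence argument: let $t^\star$ be the earliest time at which some robot becomes \texttt{diameter}. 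Reading off Algorithm~\ref{Algo_Phase1}, a robot turns \texttt{diameter} in exactly three situations; two of them (a \texttt{moving1} robot, or a terminal \texttt{chord} robot, seeing a \texttt{diameter} robot on its own vertical line) presuppose an already existing \texttt{diameter} robot, contradicting minimality of $t^\star$, while the third (a \texttt{chord} robot with $H_L^O(r)$ and $H_R^O(r)$ both empty) forces every robot onto $\mathcal{L}_V(r)$, i.e. onto a single vertical line, so $t^\star\ge t_s$. Hence no robot is ever \texttt{diameter} on $[t_b,t_s)$; in particular \textsc{ChordMove}, which requires seeing a \texttt{diameter} robot, is never executed on $[t_b,t_s)$, so \emph{no robot moves right during this interval}.

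Next I would characterise the first move and establish monotonicity of the leftmost line. At time $t_b$ all robots are still \texttt{off}, so the only firing rule is the \texttt{off}-rule ``$H_L^O(r)$ empty and $r$ terminal'', which forces $r\in\mathcal{L}_1$, recolours $r$ to \texttt{moving1}, and shifts it one step left; thus immediately after $t_b$ the leftmost occupied vertical line lies strictly to the left of the initial $\mathcal{L}_1$ and carries only the colour \texttt{moving1} (and, by Observation~\ref{observation: before chord atmost two robot on L1}, at most two robots until the first \texttt{chord} appears). Now, on $[t_b,t_s)$ no robot moves right, and a robot can create a new leftmost line only by moving left \emph{off} the current leftmost line — which only pushes that line further left — whereas every other left move lands on an already occupied line or strictly to the right of the current leftmost line. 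Hence the leftmost occupied line is non-increasing on $[t_b,t_s)$, so for every $t\in(t_b,t_s)$ the line $\mathcal{L}_1$ of $\mathcal{C}(t)$ lies strictly left of the initial $\mathcal{L}_1$ and therefore contains no robot still at its initial position; equivalently, every robot that is ever on the current $\mathcal{L}_1$ during $(t_b,t_s)$ has performed at least one left move, and the only left move available to an \texttt{off} robot also recolours it to \texttt{moving1}.

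Finally, a short induction on events (activations that change the configuration) closes the argument: assuming every robot on $\mathcal{L}_1$ of $\mathcal{C}(t)$ has colour \texttt{moving1} or \texttt{chord}, I would check the next event by the acting robot's colour — an \texttt{off} robot is, by the induction hypothesis, not on $\mathcal{L}_1$, so it cannot fire the ``$H_L^O(r)$ empty'' rule and at most recolours to \texttt{moving1} in place; a \texttt{moving1} robot on (or moving onto) $\mathcal{L}_1$ can only stay \texttt{moving1} or turn \texttt{chord} (the within-column move of the \texttt{moving1} case needs $\mathcal{L}_I(r)$, which is undefined on $\mathcal{L}_1$); and a \texttt{chord} robot is inactive on $[t_b,t_s)$, since every rule acting on it requires either a visible \texttt{diameter} robot or $H_L^O(r)=H_R^O(r)=\emptyset$ (a single line), neither of which holds. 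Thus $\mathcal{L}_1$ stays populated only by \texttt{moving1} and \texttt{chord} robots throughout $[t_b,t_s)$. The step I expect to be the main obstacle is this last case analysis combined with the monotonicity of the leftmost line: one must carefully handle the boundary cases in the \texttt{moving1} branch of Algorithm~\ref{Algo_Phase1} (e.g. a \texttt{moving1} robot on $\mathcal{L}_1$, for which $\mathcal{L}_I(r)$ and the ``both halves of $\mathcal{L}_I(r)$ non-empty'' test are undefined) and verify that no branch ever sends such a robot rightward or to a colour other than \texttt{moving1}/\texttt{chord} before $t_s$.
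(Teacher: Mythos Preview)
Your proposal is correct and follows the paper's strategy: rule out \texttt{diameter} (both open halves cannot be empty before a single line forms) and rule out \texttt{off} (after the first move $\mathcal{L}_1$ lies strictly left of every initial position, an \texttt{off} robot's only left move recolours it, and no Phase~1 transition returns to \texttt{off}). You spell out more than the paper does---the monotonicity of the leftmost line and a first-occurrence analysis over all three \texttt{diameter} transitions, both of which the paper uses only implicitly---but the core argument is the same; your closing induction on events is largely redundant given the monotonicity paragraph, though harmless.
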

\begin{proof}
    Let $t_b>0$ be the time when the first move by a robot happened from the initial configuration. Also, let $t_f>t_b$ be the time, when all robots move to a single line after time $t_b$ for the first time ($t_f$ can be infinite if all robots never move to a single line). Now, $\mathcal{L}_1$ can not have a robot with color \texttt{diameter} in the time interval $(t_b,t_f)$ as, in this interval no robot sees both its left and right open halves empty. Also, since at time $t_b$ at least one leftmost terminal robot moves left after changing its color to \texttt{moving1}, $\mathcal{L}_1$ also shifts left at time $t_b$. Note that, after this move, $\mathcal{L}_1$ does not have any robot with color \texttt{off}. So in the interval $(t_b,t_f)$,  $\mathcal{L}_1$ can not have any robot with color \texttt{off} as robots with color \texttt{off} never moves left to reach $\mathcal{L}_1$ (algorithm~\ref{Algo_Phase1}) and no robot of different color change their color to \texttt{off} in Phase 1. So, within the time interval $(t_b,t_f)$, $\mathcal{L}_1$ can have robots of color either \texttt{moving1} or of color \texttt{chord}. 
\end{proof}

\begin{lemma}
    \label{lemma: L1 is fixed}
   $\mathcal{L}_1$ can not shift left infinitely often without all robots being on $\mathcal{L}_1$.
\end{lemma}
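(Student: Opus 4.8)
The plan is to argue by contradiction. Suppose there is an execution in which $\mathcal{L}_1$ shifts one column to the left infinitely often while at no time do all robots lie on a single vertical line. Under this hypothesis $\mathcal{L}_1$ is never \emph{fixed}, so by Claim~\ref{chord only after L1 is fixed} no robot ever acquires color \texttt{chord}; since in Algorithm~\ref{Algo_Phase1} the only routes to color \texttt{diameter} start either from a robot of color \texttt{chord} or from a robot that already sees a \texttt{diameter} robot, no robot ever acquires color \texttt{diameter} either. Hence throughout the execution every robot has color \texttt{off} or \texttt{moving1}, and (invoking Claim~\ref{lemma: L1 only moving1 and chord}) every left shift of $\mathcal{L}_1$ is caused by some robot of color \texttt{moving1} that sits on $\mathcal{L}_1$ and moves left.

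The first real step is to pin down which robots can ever move. Inspecting Algorithm~\ref{Algo_Phase1}: with no \texttt{chord} robot present, a robot of color \texttt{off} can become \texttt{moving1} only through the branch requiring it to be terminal with empty $H_L^O$, i.e.\ to lie on the current $\mathcal{L}_1$ (the other branch needs \texttt{chord} robots on $\mathcal{L}_I$), and a robot of color \texttt{moving1} never moves right, never moves along its own column (that vertical move also needs \texttt{chord} robots on $\mathcal{L}_I$), and never reverts to \texttt{off}. Since \texttt{off} robots never move, the instant one robot steps left of the initial line $\mathcal{L}_1=\ell_0$, every remaining \texttt{off} robot has a robot permanently to its left and can never again have an empty left open half. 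Consequently the set of robots that are ever activated with a non-trivial output is contained in the (at most two) robots $r_1,r_2$ that are terminal on $\ell_0$ in $\mathcal{C}(0)$, and these robots move only leftward.

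Next I would bound the number of such leftward moves. While $r_1$ (say) sits on $\mathcal{L}_1$ with color \texttt{moving1}, the only move-left branch of Algorithm~\ref{Algo_Phase1} that can fire is the one requiring an \texttt{off} robot exactly one hop to its right (the ``singleton with $\mathcal{L}_I(r)$ all \texttt{moving1}'' alternative is vacuous for a robot on $\mathcal{L}_1$). The \texttt{off} robots stay at fixed columns, all weakly to the right of $\ell_0$; so after the first shift, which brings $\mathcal{L}_1$ to column $\ell_0-1$, there is at most one further shift, to $\ell_0-2$, and from $\ell_0-2$ no \texttt{off} robot is one hop to the right any more. At that point $r_1$ either sees an \texttt{off} robot at distance $\ge 2$ with empty $H_L^O$ and no \texttt{chord} on its column and therefore turns \texttt{chord} --- contradicting the absence of \texttt{chord} robots --- or it produces no move, and likewise every subsequent activation of $r_1$ and $r_2$ leaves the configuration unchanged. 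In either case $\mathcal{L}_1$ shifts left only finitely often, contradicting the hypothesis; hence $\mathcal{L}_1$ cannot shift left infinitely often without all robots being on $\mathcal{L}_1$.

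I expect the genuine difficulty to lie in carrying out the last two steps against asynchrony. A robot may compute ``move left'' in one configuration and execute it much later, and $r_1,r_2$ may interleave so that one trails behind the other or merges onto its column; with opaque bodies $r_1$ may temporarily fail to see the \texttt{off} robot to its right because $r_2$ blocks it. One has to run through every such interleaving --- and the small cases $|\ell_0|=1$, $|\ell_0|=2$, $|\ell_0|\ge 3$ --- checking that no third left shift occurs, that $r_2$ performs only finitely many moves (a couple of ``catch-up'' steps plus the bounded shifts), and that visibility is eventually restored so that the robot on $\mathcal{L}_1$ indeed perceives the \texttt{off} robot it needs; here Claim~\ref{Lemma: r sees off on right imidiate} (for a robot one hop away) and Observation~\ref{observation:2 hop Visible} (for two or more hops) supply the visibility guarantees. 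Everything beyond that is routine bookkeeping on the pseudocode.
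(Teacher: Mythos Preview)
Your argument is correct, but it takes a noticeably heavier route than the paper's. The paper's proof is a three-line pigeonhole argument: since there are finitely many robots, if $\mathcal{L}_1$ shifts left infinitely often then some single robot $r$ of color \texttt{moving1} moves left from $\mathcal{L}_1$ infinitely often; but a \texttt{moving1} robot sitting on $\mathcal{L}_1$ can move left only via the branch that requires an \texttt{off} robot on $\mathcal{R}_I(r)$ at distance exactly one, and once $r$ executes such a move the column that becomes the new $\mathcal{L}_2$ contains no \texttt{off} robot (since \texttt{off} robots never move left), nor will it ever again---so $r$ cannot repeat the move, a contradiction. In particular the paper never invokes Claim~\ref{chord only after L1 is fixed} or tracks which robots become \texttt{moving1}; it never needs to know that at most two robots move, and the asynchrony worries you flag in your last paragraph simply do not arise because the argument is about a single robot and a single snapshot.

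Your approach instead leans on Claim~\ref{chord only after L1 is fixed} to freeze the color palette to $\{\texttt{off},\texttt{moving1}\}$, identifies the at-most-two movers explicitly, and then bounds the total leftward drift of $\mathcal{L}_1$ to two columns by tracking the fixed column of the \texttt{off} robots. This works and yields more information (an explicit bound of two shifts rather than merely ``finitely many''), but at the cost of the interleaving case analysis you anticipate---small cases on $|\ell_0|$, pending moves, and the catch-up behaviour of the trailing robot---all of which the paper's pigeonhole shortcut sidesteps entirely.
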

\begin{proof}
    Let $t_b$ be the time when the first robot moves from the initial configuration. After $t_b$, suppose $\mathcal{L}_1$ shifts left infinitely often, while there remains at least one robot not positioned on $\mathcal{L}_1$ after $t_b$. This implies there is a robot, say $r$, which moves left from $\mathcal{L}_1$ infinitely often. Notably, robot $r$ must have the color \texttt{moving1}, and it retains this color without change. Now $r$ can move left from $\mathcal{L}_1$ only if it sees a robot, say $r'$, of color \texttt{off} on $\mathcal{R}_I(r)$ which is one hop away from $\mathcal{L}_1$ (Figure~\ref{fig:L1 fixed}). Let $t_0 > t_b$ be a time when $r$ is activated on $\mathcal{L}_1$ and observes $r'$ on $\mathcal{R}_I(r)$, situated at a distance of one unit from $\mathcal{L}_1$. In this case, $r$ moves left and shifts $\mathcal{L}_1$ to left along with it. Note that after this move is completed, no robot on $\mathcal{L}_1$ will ever see another robot of color \texttt{off} on $\mathcal{L}_2$ (as robots with color \texttt{off} never move left in Phase 1). Consequently, for all subsequent times $t_1 > t_0$, if $r$ is reactivated at $t_1$ on $\mathcal{L}_1$ with the color \texttt{moving1}, it will not perceive any robot with the color \texttt{off} on $\mathcal{L}_2$. As a result, it will not move left, contradicting our initial assumption. Hence, the leftward shift of $\mathcal{L}_1$ cannot continue infinitely without all robots being positioned on the same line.
    \begin{figure}[!ht]
        \centering
        \includegraphics[height=4cm,width=7cm]{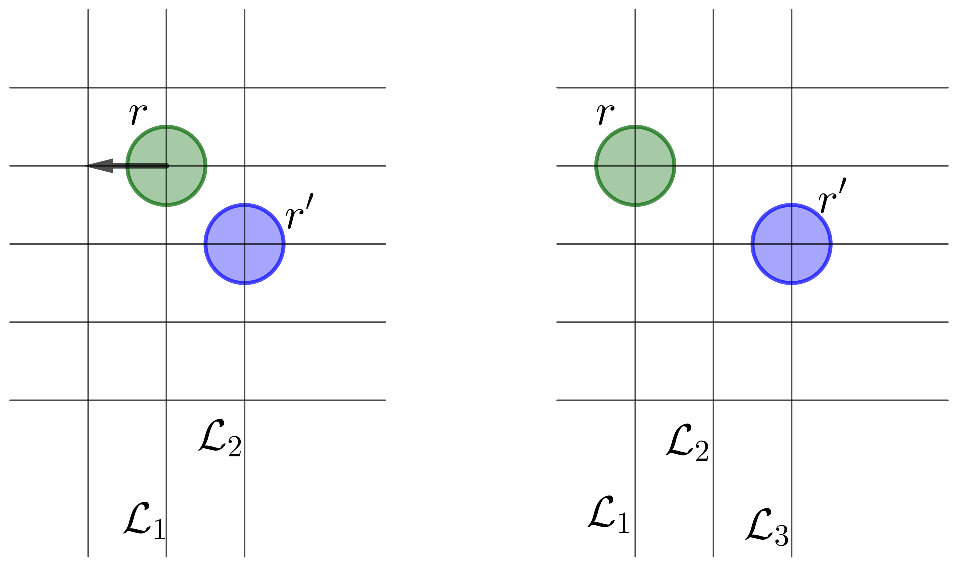}
        \caption{in the configuration on left, $r$ can see $r'$ with color \texttt{off} on $\mathcal{L}_2$ so it moves left. 
        in the configuration on right, $r$ can not see any robot on $\mathcal{L}_2$ with color \texttt{off} so it does not move}
        \label{fig:L1 fixed}
    \end{figure}
\end{proof}

 \begin{claim}
     \label{lemma: not more than two moving1 excluding L1}
     Excluding $\mathcal{L}_1$ there can not be more than two robots with color \texttt{moving1} in Phase 1.
 \end{claim}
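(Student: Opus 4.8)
The plan is to trace the only two mechanisms by which a robot can acquire the color \texttt{moving1} in Phase~1, and to argue that at any moment outside $\mathcal{L}_1$ at most two robots can be in a state for which that acquisition has already occurred but the subsequent leftward move has not yet carried them onto $\mathcal{L}_1$. Reading Algorithm~\ref{Algo_Phase1}, a robot of color \texttt{off} turns \texttt{moving1} only in one of two ways: (i) it is a terminal robot of $\mathcal{L}_1$ itself seeing an empty $H_L^O(r)$ — but such a robot is \emph{on} $\mathcal{L}_1$, so it does not contribute to the count outside $\mathcal{L}_1$; or (ii) it is terminal on its own line $\mathcal{L}_V(r)$, sees $\mathcal{L}_I(r)$ at least two hops away, and sees every robot on $\mathcal{L}_I(r)$ colored \texttt{chord}. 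I would first establish, using Claim~\ref{chord only after L1 is fixed}, that mechanism (ii) can only ever fire \emph{after} $\mathcal{L}_1$ has become fixed, since it requires a visible robot of color \texttt{chord}. Hence before $\mathcal{L}_1$ is fixed, the only \texttt{moving1} robots not on $\mathcal{L}_1$ are the (at most two) original terminal robots of $\mathcal{L}_1$ that have begun shifting left; combined with Observation~\ref{observation: before chord atmost two robot on L1} and the definition of ``fixed'', these robots are in fact still on (the shifting) $\mathcal{L}_1$ except transiently, so the count outside $\mathcal{L}_1$ is governed entirely by mechanism (ii).

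Next I would analyze the post-fixed regime. After $\mathcal{L}_1$ is fixed, a robot created \texttt{moving1} via mechanism (ii) lives on some line $\mathcal{L}_k$ with $k>1$; by Lemma~\ref{lemma:moving1 on L_j moves to L_(j-1)} (which I may assume) it subsequently moves left toward $\mathcal{L}_1$, possibly after some vertical adjustment moves along $\mathcal{L}_V(r)$ to make one of $H_U^C(r)\cap\mathcal{L}_I(r)$, $H_B^C(r)\cap\mathcal{L}_I(r)$ empty. The key structural fact I want is: at any time, among the vertical lines strictly to the right of $\mathcal{L}_1$, at most one such line can contain a \texttt{moving1} robot, and that line can contain at most two of them. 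The argument for ``at most one line'' is that mechanism (ii) on $\mathcal{L}_k$ requires \emph{all} robots of $\mathcal{L}_I(r)=\mathcal{L}_{k-1}$ (more precisely, the nearest occupied line to the left) to be colored \texttt{chord}; a robot can only have turned \texttt{chord} after reaching $\mathcal{L}_1$ and then moving under \textsc{ChordMove}, or via the ``$\mathcal{R}_I(r)$ at least two hops away with an \texttt{off} robot'' clause — I would check that in Phase~1 before P1FC this forces the \texttt{chord} robots to sit on $\mathcal{L}_1$, so the line feeding the new \texttt{moving1} robot is $\mathcal{L}_1$ itself, i.e. $k=2$. Thus every \texttt{moving1} robot outside $\mathcal{L}_1$ lives on $\mathcal{L}_2$.

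Finally I would bound how many robots can simultaneously be \texttt{moving1} on $\mathcal{L}_2$. A robot turns \texttt{moving1} on $\mathcal{L}_2$ only when it is \emph{terminal} on $\mathcal{L}_V(r)=\mathcal{L}_2$; so at the instant of color change $\mathcal{L}_2$ has at most two \texttt{moving1} robots (its two terminal robots — and in the configurations under consideration the set of robots still on $\mathcal{L}_2$ waiting to go left is itself being depleted one at a time by the sequenced left-moves). To convert ``at most two terminal robots turn \texttt{moving1}'' into ``at most two \texttt{moving1} robots ever coexist on $\mathcal{L}_2$'', I would use the guard structure: once a terminal robot of $\mathcal{L}_2$ is \texttt{moving1} it immediately moves left on its next activation (the vertical-adjustment branch only fires while both halves of $\mathcal{L}_I(r)$ are nonempty, which cannot persist once it is terminal and $\mathcal{L}_I(r)=\mathcal{L}_1$ with all robots between the eventual diameter endpoints), so before a \emph{third} robot of $\mathcal{L}_2$ could become terminal-and-\texttt{moving1}, one of the first two has already left $\mathcal{L}_2$ for $\mathcal{L}_1$. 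Hence the total over all lines right of $\mathcal{L}_1$ never exceeds two.

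The main obstacle I anticipate is the last paragraph: ruling out a transient window created by asynchrony in which a \texttt{moving1} robot on $\mathcal{L}_2$ has a pending (computed-but-not-executed) left move while two fresh terminal robots of $\mathcal{L}_2$ simultaneously turn \texttt{moving1} — giving three at once. Handling this cleanly requires a careful accounting of pending moves, very much in the style of the proof of Claim~\ref{Lemma: r sees off on right imidiate}: one argues that for a robot to be terminal on $\mathcal{L}_2$ and eligible for mechanism~(ii), the robots above or below it must already have vacated $\mathcal{L}_2$, so the ``three coexisting'' scenario would force $\mathcal{L}_2$ to have had strictly more than its terminal robots colored \texttt{moving1} at an earlier time, contradicting that \texttt{moving1} is only ever assigned to a terminal robot. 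I expect this case analysis, not the high-level structure, to be where the real work lies.
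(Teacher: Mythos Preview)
Your third paragraph contains a genuine error. From ``the line feeding the new \texttt{moving1} robot is $\mathcal{L}_1$'' you conclude ``$k=2$'', i.e., that every \texttt{moving1} robot off $\mathcal{L}_1$ lives on $\mathcal{L}_2$. But $\mathcal{L}_I(r)$ is the nearest \emph{occupied} line to the left, whereas $\mathcal{L}_2$ is by definition the grid line one hop right of $\mathcal{L}_1$; a robot on $\mathcal{L}_5$ with $\mathcal{L}_2,\mathcal{L}_3,\mathcal{L}_4$ empty still has $\mathcal{L}_I(r)=\mathcal{L}_1$. More damagingly, your preliminary claim that ``at most one such line can contain a \texttt{moving1} robot'' is false outright: the paper itself (Case~2 in the proof of Lemma~\ref{lemma:moving1 on L_j moves to L_(j-1)}) explicitly treats configurations with one \texttt{moving1} robot singleton on $\mathcal{L}_2$ and another singleton on $\mathcal{L}_k$, $k>2$. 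This arises when the two terminals of a common starting line both turn \texttt{moving1} and then drift left at different rates under asynchrony. So the structural fact on which your whole bounding argument rests does not hold, and the ``at most two on $\mathcal{L}_2$'' discussion that follows is beside the point. There is also a circularity: you invoke Lemma~\ref{lemma:moving1 on L_j moves to L_(j-1)}, but that lemma's proof uses the present claim.

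The paper's argument is much shorter and sidesteps all of this. Suppose three robots are simultaneously \texttt{moving1} off $\mathcal{L}_1$. First, they must all have started on the \emph{same} vertical line $\mathcal{L}_k$ in the initial configuration: otherwise the rightmost one could only have fired mechanism~(ii) after the leftmost had already reached $\mathcal{L}_1$ and turned \texttt{chord}, contradicting that all three are still \texttt{moving1} off $\mathcal{L}_1$. Second, none of the three can have left $\mathcal{L}_k$ before all three were activated in the LCM cycle that turns them \texttt{moving1} (once one leaves, it sits with color \texttt{moving1} on the others' $\mathcal{L}_I$, blocking mechanism~(ii) for them until it reaches $\mathcal{L}_1$ and turns \texttt{chord}). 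Hence there is a moment with all three on $\mathcal{L}_k$ and each having seen itself terminal on $\mathcal{L}_k$ --- impossible, since a line has at most two terminal robots. You brush against this idea in your final paragraph (``\texttt{moving1} is only ever assigned to a terminal robot''), but it is the whole proof, not a residual asynchrony wrinkle; making it the centrepiece and dropping the $\mathcal{L}_2$ reduction gives a clean argument.
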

 \begin{proof}
Suppose there are three robots, denoted as $r_1, r_2,$ and $r_3$, all with color \texttt{moving1}, positioned on $\mathcal{L}{k_1}, \mathcal{L}{k_2},$ and $\mathcal{L}{k_3}$ at time $t$, where $1 < k_1 \le k_2 \le k_3$. Initially, $r_1, r_2,$ and $r_3$ must all be located on the same vertical line. Otherwise, the rightmost robot among them in the initial configuration cannot transition its color from \texttt{off} to \texttt{moving1}, unless the leftmost robot(s) among the others reach $\mathcal{L}_1$ and alter their color to \texttt{chord}. However, this contradicts the observation that all three robots have the color \texttt{moving1} at time $t$. Therefore, let's assume that initially all of them are positioned on $\mathcal{L}_k$ with the color \texttt{off}. It is important to note that none of them moves left from $\mathcal{L}_k$ unless all of them are activated for their corresponding LCM cycle, in which they change their color to \texttt{moving1}. This implies that there exists a time $t' < t$ when all of $r_1, r_2,$ and $r_3$ are located on $\mathcal{L}_k$, and each of them either possesses the color \texttt{moving1} or becomes activated in their corresponding LCM cycle, during which they change their color to \texttt{moving1}. This implies $r_1, r_2$ and $r_3$ all are terminal on $\mathcal{L}_k$ at time $t'$ but this is not possible. Hence, excluding $\mathcal{L}_1$, there can not be more than two robots with color \texttt{moving1} in Phase 1.
 \end{proof}
\begin{claim}
    \label{lemma: moving1 terminal on L1 in phase 1}
    A robot with color \texttt{moving1} on $\mathcal{L}_1$ must be terminal on $\mathcal{L}_1$ in Phase 1.
\end{claim}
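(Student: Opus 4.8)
The plan is to prove, by induction over the events of Phase~1 (one event $=$ one robot finishing its \textsc{Look}-\textsc{Compute}-\textsc{Move} cycle), the invariant: \emph{at every moment of Phase~1 every robot of colour} \texttt{moving1} \emph{lying on} $\mathcal{L}_1$ \emph{is terminal on} $\mathcal{L}_1$ --- equivalently, the \texttt{moving1} robots on $\mathcal{L}_1$ are among the at most two terminal robots of $\mathcal{L}_1$. The base case is $\mathcal{C}(0)$, where no robot has colour \texttt{moving1}. For the inductive step I would repeatedly use three standing facts: (F1) by Algorithm~\ref{Algo_Phase1}, a robot switches from \texttt{off} to \texttt{moving1} only while it is terminal on its own vertical line; (F2) by Claim~\ref{lemma: L1 only moving1 and chord}, after the first move $\mathcal{L}_1$ never again carries a robot of colour \texttt{off}; (F3) after the first move the only way a robot can come to lie on $\mathcal{L}_1$ is a one-hop leftward step --- no robot ever steps right onto $\mathcal{L}_1$ (in Phase~1 only \texttt{chord} robots step right, and only after a \texttt{diameter} robot exists, i.e.\ after the single line has formed, and a rightward step cannot reach the leftmost line), \texttt{off} robots never step left, and as soon as $\mathcal{L}_1$ carries a \texttt{chord} robot it is fixed by Claim~\ref{chord only after L1 is fixed} together with the definition of ``fixed'', hence cannot shift.

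First I would dispose of the early regime, before any \texttt{chord} robot appears. By (F2) a robot can perform the transition \texttt{off}$\to$\texttt{moving1} while already on $\mathcal{L}_1$ only at the initial configuration, so such a robot is one of the at most two terminal robots $r_1,r_2$ of the initial $\mathcal{L}_1$. By Observation~\ref{observation: before chord atmost two robot on L1} together with (F2), from the first move until the first appearance of colour \texttt{chord} the line $\mathcal{L}_1$ carries only these at most two robots, all of colour \texttt{moving1}, and on a line with at most two robots both are terminal; this also covers the events in which $\mathcal{L}_1$ shifts left in this window, since each such shift leaves at most two robots on the new $\mathcal{L}_1$. Finally, once a \texttt{chord} robot appears on $\mathcal{L}_1$, each $r_i$ still sitting there with colour \texttt{moving1} sees that \texttt{chord} robot on $\mathcal{L}_V(r_i)$ and turns \texttt{chord} on its next activation, so no non-terminal \texttt{moving1} robot is born at this transition either.

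Now the later regime, where $\mathcal{L}_1$ carries \texttt{chord} robots and is fixed. By (F3) and the previous paragraph, the only event that can affect the invariant is a robot $r$ of colour \texttt{moving1} stepping one hop left onto $\mathcal{L}_1$ (a robot leaving $\mathcal{L}_1$, or turning from \texttt{moving1} to \texttt{chord}/\texttt{diameter} on it, cannot create a non-terminal \texttt{moving1} robot). The branch of Algorithm~\ref{Algo_Phase1} that emits this step requires that at least one of $H_U^C(r)\cap\mathcal{L}_I(r)$ and $H_B^C(r)\cap\mathcal{L}_I(r)$ be empty, i.e.\ that every robot of $\mathcal{L}_I(r)=\mathcal{L}_1$ lies strictly on one side of $\mathcal{L}_H(r)$; since a leftward step leaves $\mathcal{L}_H(r)$ and every other robot unchanged, $r$ is the topmost or the bottommost robot of the new $\mathcal{L}_1$, hence terminal. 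It remains to check that $r$'s arrival does not turn a \texttt{moving1} robot $r^{*}$ already on $\mathcal{L}_1$ into a non-terminal one. By the inductive hypothesis $r^{*}$ sits at an extreme of $\mathcal{L}_1$; if $r$ landed on the far side of $r^{*}$ then $r^{*}$ would be the robot of $\mathcal{L}_1$ closest to $r$, the segment joining them would be unoccluded, and $r$ would see the \texttt{moving1} colour of $r^{*}$, contradicting the step's precondition that every visible robot on $\mathcal{L}_I(r)$ have colour \texttt{chord}. Hence $r$ does not land beyond any extremal \texttt{moving1} robot, so after $r$ joins, every \texttt{moving1} robot of $\mathcal{L}_1$ (including $r$) is still terminal, and the invariant is preserved.

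The main obstacle is the occlusion check just invoked: proving that a robot of colour \texttt{moving1} located one hop to the right of $\mathcal{L}_1$ and strictly above (resp.\ below) every robot of $\mathcal{L}_1$ does see the topmost (resp.\ bottommost) robot of $\mathcal{L}_1$. This is where the hypothesis $rad\le\frac12$ enters --- a disk of radius at most $\frac12$ sitting on the stepping robot's own vertical line, or on $\mathcal{L}_1$, cannot completely occlude a robot one hop to the left once the line of sight is routed through the far side of the target --- together with Claim~\ref{lemma: not more than two moving1 excluding L1}, which caps the number of \texttt{moving1} robots off $\mathcal{L}_1$ at two so that only one potential occluder of colour \texttt{moving1} exists (a potential occluder of colour \texttt{chord} on $\mathcal{L}_V(r)$ being already excluded by the step's own precondition), and Observation~\ref{observation:2 hop Visible}, which settles the subcase where the stepping robot is two or more hops from $\mathcal{L}_1$, since it then sees all of $\mathcal{L}_1$ and cannot even have $\mathcal{L}_1$ all coloured \texttt{chord}. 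Once this geometric fact is pinned down, every event type has been accounted for, the invariant holds throughout Phase~1, and the claim follows.
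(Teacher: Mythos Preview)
Your invariant-based plan is clean, but it has a genuine gap that stems from the $\mathcal{ASYNC}$ scheduler: you repeatedly reason about what $r$ \emph{saw} at its \textsc{Look} phase and then conclude something about the configuration \emph{after} $r$'s \textsc{Move}. Those two moments can be separated by arbitrarily many other events. Concretely, when you write ``since a leftward step leaves $\mathcal{L}_H(r)$ and every other robot unchanged, $r$ is the topmost or the bottommost robot of the new $\mathcal{L}_1$'' and ``$r$ would see the \texttt{moving1} colour of $r^{*}$'', you are implicitly assuming that the contents of $\mathcal{L}_1$ at $r$'s \textsc{Move} coincide with what $r$ observed at its \textsc{Look}. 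But the second \texttt{moving1} robot $r'$ (there can be two off $\mathcal{L}_1$, by Claim~\ref{lemma: not more than two moving1 excluding L1}) may itself complete a leftward step onto $\mathcal{L}_1$ in between $r$'s \textsc{Look} and $r$'s \textsc{Move}. In that case $r$ never saw $r'$ on $\mathcal{L}_I(r)$, so your visibility contradiction does not fire; and if $r$ and $r'$ had both been sitting on $\mathcal{L}_2$ on the \emph{same} side of all the \texttt{chord} robots of $\mathcal{L}_1$, each would have seen one half empty, each would decide to step left, and after both steps one of them would be non-terminal on $\mathcal{L}_1$. Your induction does not exclude this interleaving.

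The paper's proof closes exactly this hole by a different mechanism: instead of arguing from $r$'s last snapshot, it traces the \emph{joint history} of the two \texttt{moving1} robots. It shows that while both sit on $\mathcal{L}_2$ and see each other with both halves of $\mathcal{L}_1$ non-empty, the ``move opposite to the other robot'' rule drives them vertically apart, so that by the time either is eligible to step left, one is strictly above the topmost \texttt{chord} robot of $\mathcal{L}_1$ and the other strictly below the bottommost one. Hence whenever both land on $\mathcal{L}_1$ with colour \texttt{moving1}, a \texttt{chord} robot lies between them and each is terminal. This history argument is what your proposal is missing; an event-by-event invariant cannot replace it unless you strengthen the invariant to also constrain the relative vertical positions of any two \texttt{moving1} robots still off $\mathcal{L}_1$ (so that the bad ``same-side'' configuration on $\mathcal{L}_2$ is itself ruled out inductively).
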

\begin{proof}
    From Observation~\ref{observation: before chord atmost two robot on L1} and Claim~\ref{chord only after L1 is fixed} it is evident that before $\mathcal{L}_1$ is fixed, it can have at most two robots and those are of color \texttt{moving1}. So before $\mathcal{L}_1$ is fixed, any robot with color \texttt{moving1} on $\mathcal{L}_1$ must be terminal on $\mathcal{L}_1$.

   Let $\mathcal{L}_1$ becomes fixed at a time $t_0$. Now let us assume $r$  be a robot with color \texttt{moving1} on $\mathcal{L}_1$ that is not terminal on $\mathcal{L}_1$ at a time $t > t_0$. This implies there must be another robot, say $r'$, which without loss of generality is directly below $r$ on $\mathcal{L}_1$ at the time $t$ and $r$ has color \texttt{moving1}. Now, $r$ and $r'$ must have moved to $\mathcal{L}_1$ from $\mathcal{L}_2$. Note that, in the initial configuration, $r$ and $r'$ can not be on two different vertical lines otherwise the rightmost robot among them can not change its color from \texttt{off} to \texttt{moving1} until the other one reaches $\mathcal{L}_1$ and change color to \texttt{chord}. 
   Also, even when they are on the same vertical line and one of them, say $r$, already moves left before the other one i.e., $r'$ wakes to change its color from \texttt{off} to \texttt{moving1}, it can't do so unless $r$ reaches $\mathcal{L}_1$ and change its color to \texttt{chord}. So, without loss of generality let us assume before one of $r$ and $r'$ moves from $\mathcal{L}_k$, the other robot must have been activated and seen $\mathcal{L}_1$ where all robots have color \texttt{chord}. This ensures that $r$ and $r'$ will change their color to \texttt{moving1} from \texttt{off}.
   Let in the initial configuration, $r$ and $r'$ were on the same line which is the line $\mathcal{L}_k$ at the time $t_0$ and $k \ge 3$ ($\mathcal{L}_2$ at time $t_0$ can not have any robot in the initial configuration). Also observe that, for $r$ and $r'$ to reach $\mathcal{L}_1$ they must move there from $\mathcal{L}_2$ (The lines $\mathcal{L}_i$ are denoted for the time when $\mathcal{L}_1$ becomes fixed). Since in the initial configuration $r$ and $r'$ were not on $\mathcal{L}_2$ of the current configuration, they must have color \texttt{moving1} while on $\mathcal{L}_2$. Suppose $r$ is located on $\mathcal{L}_2$ while $r'$ is on $\mathcal{L}_j$ with $j \ge 2$ and has color either \texttt{moving1} or is in the transitional LCM cycle where it would change its color to \texttt{moving1} on $\mathcal{L}_j$. Now for the former case, if $r$ is activated and it moves to $\mathcal{L}_1$  while $r'$ is idle then upon activation $r'$ can not move to $\mathcal{L}_{j-1}$ until $r$ changes its color to \texttt{chord}. For the latter case also, before the next activation of $r'$, if $r$ moves to $\mathcal{L}_1$, $r'$ can not move to $\mathcal{L}_{j-1}$ unless $r$ changes its color to \texttt{chord}. In either of these cases, when $r'$ reaches $\mathcal{L}_1$, $r$ must be of color \texttt{chord}, which contradicts our assumption.

    \begin{figure}[h]
       \centering
       \includegraphics[width=2cm]{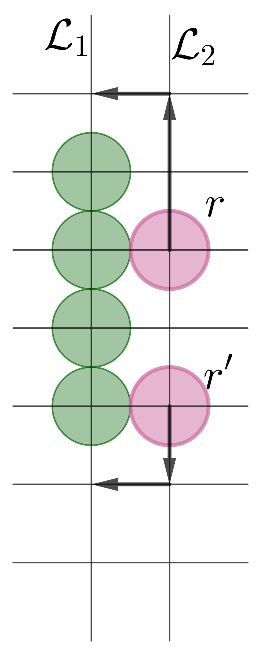}
       \caption{$r$ and $r'$ from $\mathcal{L}_2$ moves to $\mathcal{L}_1$ in such a way that at $\mathcal{L}_1$ all other robots are between $r$ and $r'$. Thus $r'$ can not be directly below or above $r$.}
       \label{fig:enter-label}
   \end{figure}
   Therefore, for both $r$ and $r'$ to reach $\mathcal{L}_1$ with the color \texttt{moving1}, there must exist a time $t_1 > t_0$ and $t_1 < t$ when both of them are on $\mathcal{L}_2$ with the color \texttt{moving1}. Furthermore, neither of them moves to $\mathcal{L}_1$ while the other one is inactive. Note that  $r'$ must have changed its color to \texttt{moving1} from \texttt{off} after seeing all robots with color \texttt{chord} as $\mathcal{L}_1$ of the initial configuration can not have more than two terminal robots of color \texttt{off}. This implies $\mathcal{L}_1$ must have at least one robot of color \texttt{chord} at the time when $r'$ moves on it after time $t_0$. Now since both $r$ and $r'$ are activated on $\mathcal{L}_2$ before any one of them moves left, they must see each other on $\mathcal{L}_2$ and move opposite of each other on $\mathcal{L}_2$ until one robot has moved above $\mathcal{L}_H(r_1)$ and the other moves below $\mathcal{L}_H(r_2)$, where $r_1$ is the uppermost robot on $\mathcal{L}_1$ and $r_2$ is the lowest robot on $\mathcal{L}_1$ before $r$ and $r'$ moves on to $\mathcal{L}_1$. Thus now when they move left on $\mathcal{L}_1$ they must have at least one robot with color \texttt{chord} between them. So at time $t$, $r'$  can not be directly below $r$. Thus A robot with color \texttt{moving1} on $\mathcal{L}_1$ must be terminal on $\mathcal{L}_1$.
\end{proof}

\begin{claim}
    \label{lemma: robot moving1 changes color to chord}  In Phase 1,  if the fixed $\mathcal{L}_1$ has a robot $r$ with color \texttt{moving1} which sees at least one robot that is not on $\mathcal{L}_1$ upon activation then $r$ changes its color to \texttt{chord} eventually.
 \end{claim}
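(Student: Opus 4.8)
The plan is to prove the (apparently stronger) statement that $r$ \emph{already} recolours to \texttt{chord} at the activation described, and to do this by contradiction. First I would record the structural facts. By Claim~\ref{lemma: moving1 terminal on L1 in phase 1}, $r$ is terminal on $\mathcal{L}_1$, say its topmost robot; since $\mathcal{L}_1$ is fixed, $r$ never moves left, and, having no left-immediate occupied line, $r$ never executes the repositioning branch of the \texttt{moving1} code of Algorithm~\ref{Algo_Phase1}. Moreover, at this activation not all robots lie on $\mathcal{L}_1$, so by Claim~\ref{lemma: L1 only moving1 and chord} every robot of $\mathcal{L}_1$ has colour \texttt{moving1} or \texttt{chord}, and since the first \texttt{diameter} robot can arise only once all robots occupy a single vertical line (a \texttt{chord} robot turns \texttt{diameter} only with both its open half-planes empty, a \texttt{moving1} robot only when a \texttt{diameter} robot already exists), no \texttt{diameter} robot is present. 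Hence on this activation $r$ can only keep its colour or switch to \texttt{chord}, so it suffices to exclude the former.

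Next I would assume that neither of the two \texttt{chord}-triggers of Algorithm~\ref{Algo_Phase1} fires, namely (a) $r$ sees a \texttt{chord} robot on $\mathcal{L}_V(r)$, or (b) $\mathcal{R}_I(r)$ is at least two hops away and carries an \texttt{off} robot, and derive a contradiction. If a robot sits on $\mathcal{L}_1$ immediately below $r$ it is visible to $r$; by Claim~\ref{lemma: L1 only moving1 and chord} it is \texttt{moving1} or \texttt{chord}, and \texttt{chord} would make (a) hold, so it is \texttt{moving1}, hence terminal by Claim~\ref{lemma: moving1 terminal on L1 in phase 1} and therefore the bottommost robot of $\mathcal{L}_1$; thus $\mathcal{L}_1$ carries at most two robots, all \texttt{moving1}, and in particular none of colour \texttt{chord}. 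On the right, $\mathcal{R}_I(r)$ exists because there is a robot off $\mathcal{L}_1$; were it one hop away with an \texttt{off} robot on it, Claim~\ref{Lemma: r sees off on right imidiate} would let $r$ see that robot and, $H_L^O(r)$ being empty, $r$ would move left, contradicting that $\mathcal{L}_1$ is fixed; were it at least two hops away, Observation~\ref{observation:2 hop Visible} would let $r$ see all of it, so by the failure of (b) it would carry no \texttt{off} robot. Either way $\mathcal{R}_I(r)$ contains no \texttt{off} robot.

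Finally I would close the loop by tracking colours. At this activation there is no \texttt{diameter} robot, no \texttt{chord} robot on $\mathcal{L}_1$, and, since a \texttt{chord} robot can leave $\mathcal{L}_1$ only through \textsc{ChordMove}, which runs only in the presence of a \texttt{diameter} robot, no \texttt{chord} robot anywhere. As a \texttt{chord} robot, once created, only ever becomes \texttt{diameter} and never reverts, the absence of both colours now, together with Claim~\ref{chord only after L1 is fixed}, forces that no \texttt{chord} robot has ever existed; consequently the only colour transition that has ever produced a \texttt{moving1} robot is the ``terminal robot with empty left open half'' rule, which fires on robots of $\mathcal{L}_1$, after which a \texttt{moving1} robot only moves left (remaining on $\mathcal{L}_1$) or recolours. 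Hence every \texttt{moving1} robot lies on $\mathcal{L}_1$. But $\mathcal{R}_I(r)$ is a nonempty occupied line off $\mathcal{L}_1$ all of whose robots are neither \texttt{off}, nor \texttt{chord}, nor \texttt{diameter}, so they must be \texttt{moving1} --- a contradiction. Therefore one of (a),(b) holds, $r$ recolours to \texttt{chord}, and in particular does so eventually.

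The step I expect to be the main obstacle is the opaque-visibility bookkeeping: one must be certain that whenever the global configuration would force $r$ either to recolour or to shift $\mathcal{L}_1$ leftward, $r$ in fact \emph{perceives} the witnessing robot --- the \texttt{chord} neighbour directly below it on $\mathcal{L}_1$, or an \texttt{off} robot on $\mathcal{R}_I(r)$ --- and this is exactly where Observation~\ref{observation:2 hop Visible} and Claim~\ref{Lemma: r sees off on right imidiate} are used. A secondary delicate point is certifying that at the relevant activation no \texttt{diameter} robot is present and no \texttt{chord} robot has strayed off $\mathcal{L}_1$, which hinges on the fact that both \textsc{ChordMove} and the transition from \texttt{chord} to \texttt{diameter} require the robots to be essentially gathered on a single vertical line.
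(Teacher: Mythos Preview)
Your attempt to prove the stronger statement --- that $r$ recolours to \texttt{chord} at the very activation described --- cannot succeed, because that stronger statement is false. Consider an initial $\mathcal{L}_1$ carrying three robots $r_1,r_2,r_3$ (top to bottom). Both terminals $r_1,r_3$ take their snapshots simultaneously; $r_1$ completes its move first (becoming \texttt{moving1} one step left), then activates again, sees $r_2$ of colour \texttt{off} on its one-hop right neighbour, and moves left a second time. Only now does $r_3$ finish its pending move, landing with colour \texttt{moving1} on what is currently $\mathcal{L}_2$. At this point $\mathcal{L}_1$ is fixed (no robot will ever move left of $r_1$'s column), $r_1$ is alone on $\mathcal{L}_1$ with colour \texttt{moving1}, and $\mathcal{R}_I(r_1)=\mathcal{L}_2$ carries only $r_3$, which is \texttt{moving1}. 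When $r_1$ activates it sees a robot off $\mathcal{L}_1$, yet neither \texttt{chord}-trigger fires: there is no \texttt{chord} robot on $\mathcal{L}_V(r_1)$, and $\mathcal{R}_I(r_1)$ is one hop away with no \texttt{off} robot, so $r_1$ does nothing. The paper's proof treats exactly this situation (its ``$r$ singleton on $\mathcal{L}_1$, $r'$ with colour \texttt{moving1} singleton on $\mathcal{L}_j$'' case) and explicitly argues that $r$ waits while $r'$ drifts left to $\mathcal{L}_1$; only after that does $r$ see an \texttt{off} robot on a $\ge 2$-hop $\mathcal{R}_I(r)$ and recolour.

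The precise step that fails in your argument is the assertion that a \texttt{moving1} robot created by the ``terminal with empty left open half'' rule thereafter ``only moves left (remaining on $\mathcal{L}_1$)''. Under $\mathcal{ASYNC}$ the colour change is based on a possibly stale snapshot: between a robot's \textsc{Look} and its \textsc{Move}, another terminal robot may already have shifted $\mathcal{L}_1$ one or two steps left, so the robot becomes \texttt{moving1} on a column that is no longer $\mathcal{L}_1$, and even its subsequent leftward step need not bring it to the current $\mathcal{L}_1$. Hence your conclusion ``every \texttt{moving1} robot lies on $\mathcal{L}_1$'' is unwarranted, the contradiction does not close, and the ``eventually'' in the claim is genuinely needed.
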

    \begin{proof}
        If $\mathcal{L}_1$ has more than two robots then  $r$ must see a robot with color \texttt{chord} on $\mathcal{L}_V(r) = \mathcal{L}_1$ upon activation (Claim~\ref{lemma: moving1 terminal on L1 in phase 1} and Claim~\ref{lemma: L1 only moving1 and chord}) (Figure.~\ref{fig:moving1 On L1 to Chord}(a)). So, it must change its color to \texttt{chord}. Therefore, let $\mathcal{L}_1$ have at most two robots when $r$ is activated on the fixed $\mathcal{L}_1$ at a time, say $t_0$. If another robot has color \texttt{chord} on $\mathcal{L}_1$ when $r$ is activated at  $t_0$, by the same reason $r$ changes its color to \texttt{chord}. So let us assume that if there is another robot, say $r'$, on $\mathcal{L}_1$ at time $t_0$ then it has color \texttt{moving1} (Figure~\ref{fig:moving1 On L1 to Chord}(b)). In this configuration, there is no robot of color \texttt{chord}. So all other robots except $r$ and $r'$ are of color \texttt{off} in this configuration, thus they are at their initial positions. For this case, upon activation, $r$ must see at least one robot with color \texttt{off} on $\mathcal{R}_I(r)$ which is at least two hop away from $\mathcal{L}_1$ and thus $r$ changes its color to \texttt{chord}. Now at time $t_0$ upon activation if $r$ is singleton on $\mathcal{L}_1$ then there can be at most another robot with color \texttt{moving1} on $\mathcal{L}_j$ ($j > 1$) (Figure.~\ref{fig:moving1 On L1 to Chord}(c)). Now if there is no other robot with color \texttt{moving1} at time $t_0$ then $r$ sees $\mathcal{R}_I(r)$ has a robot of color \texttt{off} and it is at least 2 hop away from $\mathcal{L}_1$. Thus, in this case, $r$ changes its color to \texttt{chord}. Otherwise, the other robot, say $r'$ with color \texttt{moving1} is either singleton on some $\mathcal{L}_j$ or not ($j > 1$). Also in between $\mathcal{L}_1$ and $\mathcal{L}_j$, there is no other robot at time $t_0$. This is because, $\mathcal{L}_j$ at time $t_0$ must be either the vertical line $\mathcal{L}_1$ of the initial configuration or is strictly left of $\mathcal{L}_1$ of the initial configuration and since there is no robot with color \texttt{chord} on $\mathcal{L}_1$, except $r$ and $r'$ all robots has color \texttt{off} at time $t_0$ (this ensures all robots except $r$ and $r'$ never moved from their initial position until time $t_0$). If, at time $t_0$, $r'$ is singleton on some $\mathcal{L}_j$ ($j > 1$) then upon activation $r'$ moves left to $\mathcal{L}_{j-1}$. Note that if $r'$ is not on $\mathcal{L}_1$, $r$ on $\mathcal{L}_1$ does nothing even when activated. So $r'$ eventually reaches $\mathcal{L}_1$. Now when $r'$ reaches $\mathcal{L}_1$, by the above argument when $r$ activates next, it changes the color to \texttt{chord} as it sees $\mathcal{R}_I(r)$ has a robot with color \texttt{off} which is at least two hop away from $\mathcal{L}_1$. Now let at time $t_0$, $r'$ was not singleton on $\mathcal{L}_j$, then there must be another robot, say $r_1$, with color \texttt{off} on $\mathcal{L}_j$ which is seen by $r$ at time $t_0$. For this case also, $r$ changes its color to \texttt{chord}. 
    \end{proof}
\begin{figure}[h]
    \centering
    \includegraphics[width=10cm]{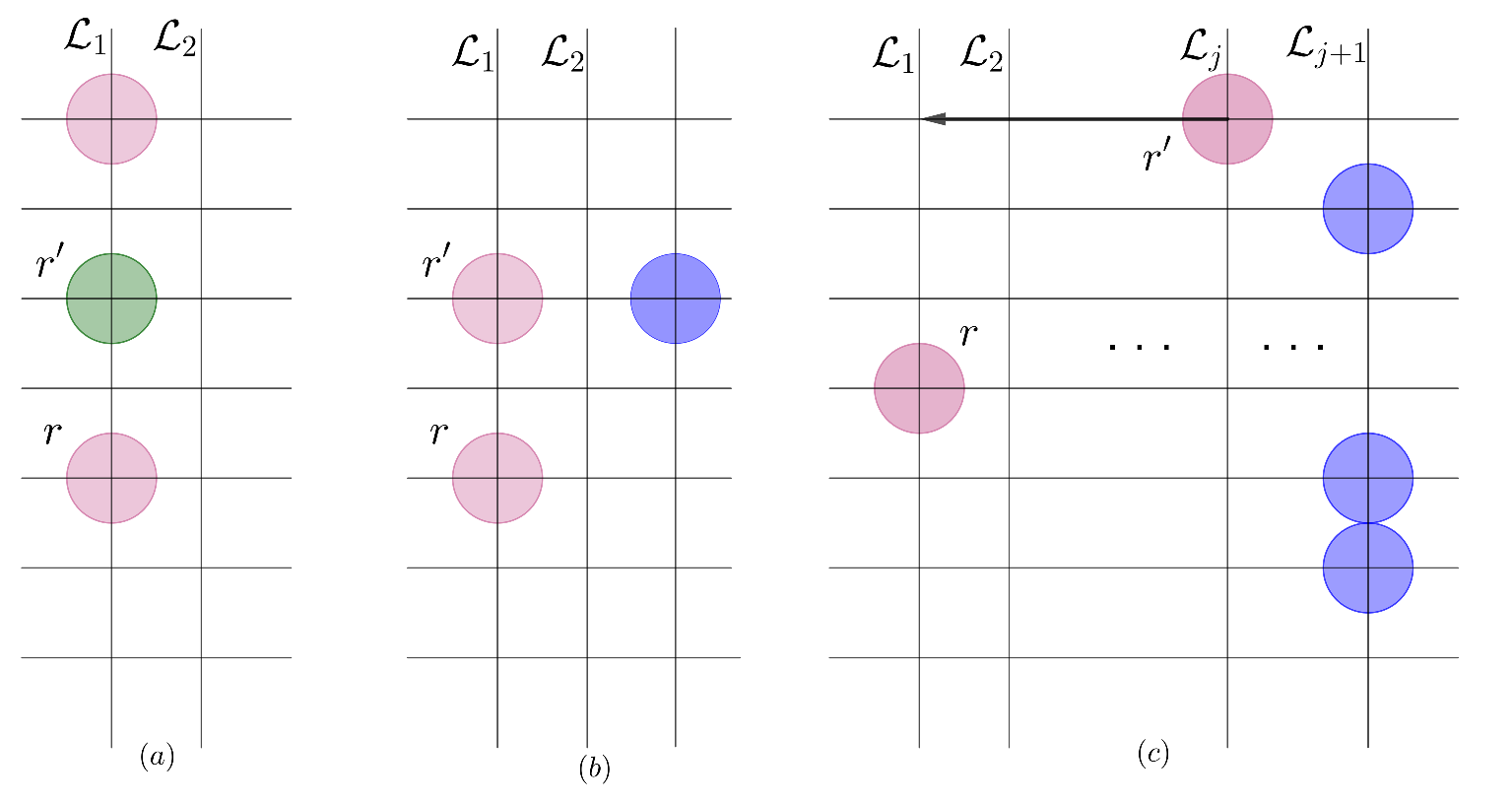}
    \caption{\textbf{(a)} $\mathcal{L}_1$ has more than two robots including $r$. Then $r$ sees $r'$ of color \texttt{chord} on $\mathcal{L}_1$. \textbf{(b)} There are exactly two robots, $r$ and $r'$ on $\mathcal{L}_1$ and both of them has color \texttt{moving1}.  \textbf{(c)} $r$ is singleton on $\mathcal{L}_1$ with color \texttt{moving1} and $r'$ is another robot with color \texttt{moving1} on $\mathcal{L}_j$. $r'$ moves to $\mathcal{L}_1$ and transforms into case (b). Here the blue color denotes color \texttt{off}} 
    \label{fig:moving1 On L1 to Chord}
\end{figure}
    \begin{lemma}
        \label{lemma:moving1 on L_j moves to L_(j-1)}
        In Phase 1, a robot with color \texttt{moving1} on $\mathcal{L}_j$ eventually moves to $\mathcal{L}_{j-1}$ where $j \ge 2$ if $\mathcal{L}_1$ is fixed.
    \end{lemma}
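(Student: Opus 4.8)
The plan is to prove, by strong induction on $j\ge 2$, the equivalent statement that every \texttt{moving1} robot on $\mathcal{L}_j$ eventually makes a one-hop move onto $\mathcal{L}_{j-1}$; inside the inductive step I suppose for contradiction that some such robot $r$ on $\mathcal{L}_j$ never moves left. Two preliminary facts are used throughout. Since $j\ge 2$ the occupied line $\mathcal{L}_1$ lies in $H_L^O(r)$, so $H_L^O(r)\neq\emptyset$; hence the only branch of Algorithm~\ref{Algo_Phase1} that recolours a \texttt{moving1} robot to \texttt{chord} is unreachable for $r$, which therefore keeps colour \texttt{moving1}. Also, as long as a \texttt{moving1} robot sits off $\mathcal{L}_1$ there is no \texttt{diameter} robot in the configuration — by the algorithm a \texttt{diameter} robot first arises only from a \texttt{chord} robot on $\mathcal{L}_1$ with empty open left and right halves, i.e.\ only once all robots lie on one line — so each activation of $r$ enters the ``no \texttt{diameter} on $\mathcal{L}_V(r)$'' branch and yields a left step, a vertical step along $\mathcal{L}_V(r)$, or nothing; under the contradiction hypothesis only the last two occur. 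Finally, from the algorithm a robot receives colour \texttt{chord} only while on $\mathcal{L}_1$, and a \texttt{chord} robot does not move in this part of Phase 1; in particular $\mathcal{L}_V(r)=\mathcal{L}_j$ never carries a \texttt{chord} robot.

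\textbf{Step 1 (vertical moves are finite).} A vertical move of $r$ occurs only when all robots of $\mathcal{L}_I(r)$ that $r$ sees are \texttt{chord} and both $H_U^C(r)\cap\mathcal{L}_I(r)$ and $H_B^C(r)\cap\mathcal{L}_I(r)$ are non-empty, in which case $r$ steps opposite to the companion robot on $\mathcal{L}_V(r)$ — unique if present, by Claim~\ref{lemma: not more than two moving1 excluding L1}, itself \texttt{moving1} and stepping the opposite way — or along positive $Y$ if $r$ is singleton there. I would argue these steps are eventually monotone in one vertical direction (if a companion is present it and $r$ step apart, and there are at most two \texttt{moving1} robots off $\mathcal{L}_1$ in total; no \texttt{chord} robot can appear on $\mathcal{L}_V(r)$), while the robot set of $\mathcal{L}_I(r)$ is a fixed finite set of immobile \texttt{chord} robots. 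Hence after finitely many vertical steps $r$ is strictly beyond every robot of $\mathcal{L}_I(r)$ on the side it is moving, so one of the two half-intersections empties and $r$ moves left at the next activation — a contradiction. Thus $r$ performs only finitely many vertical moves, so from some time $t^\ast$ on, every activation of $r$ is a no-op, which by inspection of the \texttt{moving1} branch forces that at each such activation some robot of $\mathcal{L}_I(r)$ visible to $r$ fails to be \texttt{chord} (the remaining move-left guard, requiring either $r$ singleton on $\mathcal{L}_V(r)$ with $\mathcal{L}_I(r)$ all-\texttt{moving1} or $H_L^O(r)=\emptyset$, is likewise unmet, since the second alternative is impossible here).

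\textbf{Step 2 (the obstruction eventually clears).} Since $r$ never moves left, $\mathcal{L}_V(r)=\mathcal{L}_j$ is fixed; the lines strictly between $\mathcal{L}_1$ and $\mathcal{L}_j$ carry no \texttt{chord} robot and (by Claim~\ref{lemma: not more than two moving1 excluding L1}) at most one \texttt{moving1} robot, while each \texttt{off} robot there eventually turns \texttt{moving1} (the companion Phase-1 progress fact, Lemma~\ref{lemma: off changes to moving11=}). By the induction hypothesis, applied in turn on $\mathcal{L}_2,\dots,\mathcal{L}_{j-1}$, every \texttt{moving1} robot to the left of $\mathcal{L}_j$ reaches $\mathcal{L}_1$, where, seeing $r\notin\mathcal{L}_1$, it eventually recolours to \texttt{chord} by Claim~\ref{lemma: robot moving1 changes color to chord}. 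Hence after finite time $\mathcal{L}_I(r)=\mathcal{L}_1$ and every robot of $\mathcal{L}_1$ visible to $r$ is \texttt{chord}; then $r$ is in the branch of Step 1, so it makes at most finitely many further vertical steps and then a left step — the final contradiction. The base case $j=2$ is the same argument with $\mathcal{L}_I(r)=\mathcal{L}_1$ from the outset and a vacuous induction hypothesis, which completes the induction and the proof.

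The crux is Step 2: whereas Step 1 is essentially local, Step 2 links $r$'s progress to the global advance of Phase 1 on its left, so it has to be carried through the induction on $j$ jointly with the \texttt{off}-to-\texttt{moving1} fact, Claim~\ref{lemma: robot moving1 changes color to chord}, and the bound of at most two \texttt{moving1} robots off $\mathcal{L}_1$. A secondary delicate point, already visible in the base case, is opaque-robot visibility when $\mathcal{L}_I(r)$ is only one hop away: ``all robots of $\mathcal{L}_I(r)$ visible to $r$'' is then weaker than ``all robots of $\mathcal{L}_1$'', so one must check that the extremal \texttt{moving1} robots of $\mathcal{L}_1$ do become \texttt{chord} (again Claim~\ref{lemma: robot moving1 changes color to chord}) rather than remaining hidden behind the \texttt{chord} robots in front of them.
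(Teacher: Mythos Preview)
Your overall framework — strong induction on $j$, assuming for contradiction that $r$ never steps left — matches the paper's approach. However, Step~2 contains a genuine circularity: you invoke Lemma~\ref{lemma: off changes to moving11=} to clear \texttt{off} robots sitting strictly between $\mathcal{L}_1$ and $\mathcal{L}_j$, but that lemma's proof in the paper relies on the very statement you are proving. Your closing remark about carrying the argument ``jointly with the \texttt{off}-to-\texttt{moving1} fact'' acknowledges the dependency without resolving it. In fact no such joint induction is needed, because the situation you are trying to handle cannot arise: by Observation~\ref{observation: right half of off is  all off}, any \texttt{off} robot has only \texttt{off} robots strictly to its right, so an \texttt{off} robot on some $\mathcal{L}_i$ with $1<i<j$ would force $r$ itself to be \texttt{off}. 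The paper's proof uses exactly this observation to conclude that the open strip between $\mathcal{L}_1$ and $\mathcal{L}_j$ carries at most the one other \texttt{moving1} robot guaranteed by Claim~\ref{lemma: not more than two moving1 excluding L1}, and then does an explicit case split on that robot's position.

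A secondary gap is in Step~1, where you assert that ``the robot set of $\mathcal{L}_I(r)$ is a fixed finite set of immobile \texttt{chord} robots''. This need not hold: a companion \texttt{moving1} robot on $\mathcal{L}_j$, or an \texttt{off} robot on $\mathcal{L}_j$ that subsequently turns \texttt{moving1}, can step left into the gap and become $\mathcal{L}_I(r)$, repeatedly disrupting the all-\texttt{chord} view you need. The paper handles this by showing, case by case, that every such intruder reaches $\mathcal{L}_1$ and recolours to \texttt{chord} (via the induction hypothesis on indices strictly below $j$ together with Claim~\ref{lemma: robot moving1 changes color to chord}), so the population of $\mathcal{L}_j$ strictly decreases until $r$ is singleton there with $\mathcal{L}_I(r)=\mathcal{L}_1$ stable — only then does the vertical-then-left reasoning of your Step~1 apply cleanly.
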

    \begin{proof}
        We will prove this using mathematical induction on $j$.
        
        \textbf{Base case:} In the base case we first establish that a robot $r$ with color \texttt{moving1} on $\mathcal{L}_2$ moves to $\mathcal{L}_1$ eventually. For that, let $r$ be a robot on $\mathcal{L}_2$ with color \texttt{moving1} at a time $t$. If possible, let $r$ never reaches $\mathcal{L}_1$. Note that there can be at most one robot, say $r'$, other than $r$ which is not on $\mathcal{L}_1$ and has color \texttt{moving1} at time $t$(Claim~\ref{lemma: not more than two moving1 excluding L1}). Now let at time $t$, $r'$ is on $\mathcal{L}_k$ for some $k \ge 2$. 

        \textit{Case 1:} Suppose $r'$ is on $\mathcal{L}_2$ along with $r$ (i.e., $k=2$) at time $t$. Note that $\mathcal{L}_2$ can not have any other robot of color \texttt{off} as it is strictly to the left of the  $\mathcal{L}_1$ in the initial configuration. So all robots on right of $\mathcal{L}_2$ must be of color \texttt{off} and does not do anything even if they are activated as whenever they are activated they never see any robot with color \texttt{chord} on their left immediate vertical lines. Now at time $t$, $\mathcal{L}_1$ either has all robot with color \texttt{chord} or,has all robots of color \texttt{chord} except at most two robots on the terminal. For the later case, by Claim~\ref{lemma: robot moving1 changes color to chord} all robot on $\mathcal{L}_1$ will have color \texttt{chord}  eventually at a time, say $t'$, where $t'> t$ (Figure~\ref{fig:moving1 move left}(a)). Until then $r$ and $r'$ does nothing even if they are activated. after $t'$ when $r'$ activates next it sees $\mathcal{L}_1$ has all robot with color \texttt{chord} and thus move left to $\mathcal{L}_1$. with similar argument after moving to $\mathcal{L}_1$ eventually $r'$ end up with color \texttt{chord}. next when $r$ is activated on $\mathcal{L}_2$ it must see that all robot on $\mathcal{L}_1$ has color \texttt{chord} and thus moves left contrary to the assumption.
        
        \textit{Case 2:} Next suppose $r$ is singleton on $\mathcal{L}_2$ and $r'$ is singleton on some $\mathcal{L}_k$ where $k > 2$. Note that between $\mathcal{L}_2$ and $\mathcal{L}_k$ there is no other robots (Claim~\ref{lemma: not more than two moving1 excluding L1} and Observation~\ref{observation: right half of off is  all off}). Note that all other robot on the right of $\mathcal{L}_k$ at time $t$ has color \texttt{off} and upon activation they do nothing from time $t$ onwards as they never see all robots with color \texttt{chord} on their left immediate vertical line as $r$ never moves to $\mathcal{L}_1$. In this case whenever $r'$ is activated it sees $r$ singleton on $\mathcal{L}_2$ and moves left to $\mathcal{L}_{k-1}$ (Figure~\ref{fig:moving1 move left}(b)). This way eventually $r'$ reaches $\mathcal{L}_2$ along with $r$. This is same configuration as described in case 1. So, again it would reach a contradiction.

        \begin{figure}[h]
            \centering
            \includegraphics[width=8cm]{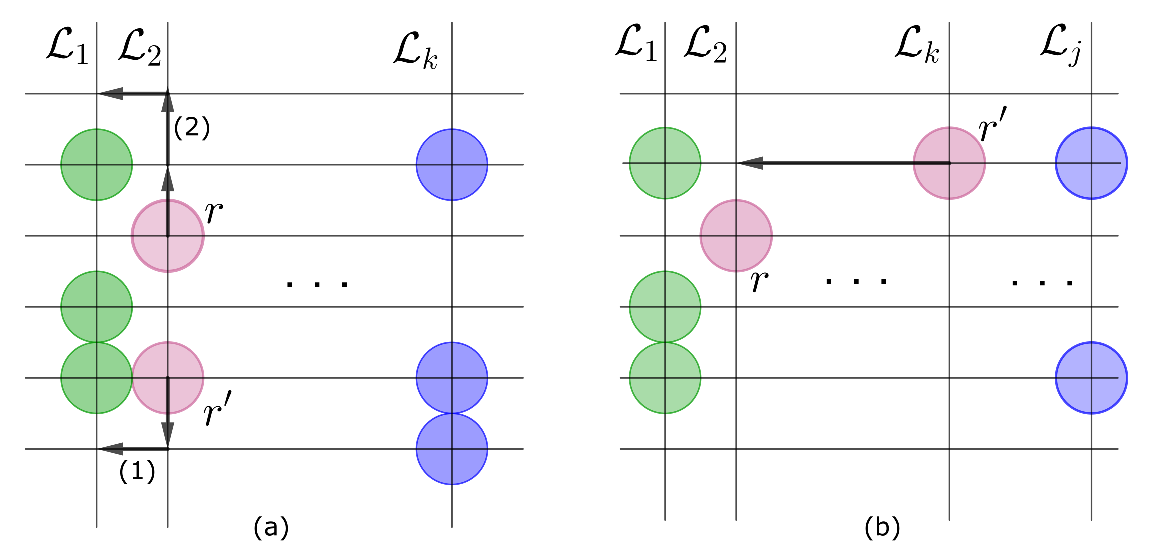}
            \caption{\textbf{(a)} $r$ and $r'$ are on $\mathcal{L}_2$ with color \texttt{moving1}. If $r$ do not move left, $r'$ moves left upon seeing all robots on $\mathcal{L}_1$ of color \texttt{chord} and in the next activation changes color to \texttt{chord}. Next $r$ moves to $\mathcal{L}_1$. \textbf{(b)} $r$ with color \texttt{moving1} is singleton on $\mathcal{L}_2$ and $r'$ with color \texttt{moving1} is also singleton on $\mathcal{L}_k$ ($k>2$). If $r$ do not move to $\mathcal{L}_1$, $r'$ moves and reaches $\mathcal{L}_2$ and converts to the case in (a).}
            \label{fig:moving1 move left}
        \end{figure}

        \textit{Case 3: } Let, $r$ is singleton on $\mathcal{L}_2$ and $r'$ is not singleton on $\mathcal{L}_k$ at time $t$ where $k > 2$. Then if $r'$ has a pending move, it moves to $\mathcal{L}_{k-1}$, otherwise if $r'$ activates after time $t$, it does nothing. For the former case, after $r'$ moves to $\mathcal{L}_{k-1}$, it  either falls into the case 1 or, the case 2. Now, for the later case, if $r'$ does nothing on $\mathcal{L}_k$ then $r$ always remains singleton on $\mathcal{L}_2$ and no new robot moves onto $\mathcal{L}_1$ from time $t$ on wards. So, at time $t$ if not all robots on $\mathcal{L}_1$ is of color \texttt{chord}, eventually they all becomes of color \texttt{chord (Claim~\ref{lemma: robot moving1 changes color to chord})}. So, upon next activation $r$ must move to $\mathcal{L}_1$. This is again a contradiction to our assumption. 

        \textit{Case 4:} In this case, let us assume $r$ is only robot in the configuration  of color \texttt{moving1} on $\mathcal{L}_2$ at time $t$. If from $t$ onwards no new robot changes its color to \texttt{moving1} then eventually there will be a time, say $t' > t$ when all robots on $\mathcal{L}_1$ have color \texttt{chord}. Also no new robot moves to $\mathcal{L}_1$ after $t'$. Thus, when $r$ activates next it must move to $\mathcal{L}_1$. So, let us now assume there is a time when a new robot $r'$ changes its color to \texttt{moving1}. Then as described in the previous cases, $r$ will move to $\mathcal{L}_1$ contradicting our assumption. 
        So, There will be a time when $r$ moves to $\mathcal{L}_1$ from $\mathcal{L}_2$.

        \textbf{Hypothesis:} For some $j >2$ and for any $i \le j$, a robot having a color \texttt{moving1} on $\mathcal{L}_i$ moves to $\mathcal{L}_{i-1}$ where $i \ge 2$.

        \textbf{Inductive step:} Let $r$ be a robot on $\mathcal{L}_{j+1}$ with color \texttt{moving1} at a time $t$, where $j \ge 2$. Now  by Claim~\ref{lemma: not more than two moving1 excluding L1}, there can be at most another robot, say $r'$ with color \texttt{moving1} on $\mathcal{L}_k$ at time $t$ ($k >1$). If possible let $r$ never move to $\mathcal{L}_{j}$. Now there are two cases.

        \textit{Case 1:} For the first case let us assume $k \le j+1$ (Figure~\ref{fig:moving1 moves left induction}(a)). First assume that $r$ is singleton on $\mathcal{L}_{j+1}$ at time $t$. Then $k < j+1$. By Observation~\ref{observation: right half of off is  all off} between $\mathcal{L}_1$ and $\mathcal{L}_{j+1}$ there are no other robot except $r'$ on $\mathcal{L}_k$ at time $t$. Note that by induction hypothesis $r'$ eventually moves to $\mathcal{L}_1$ and changes its color to \texttt{chord}, say at a time $t'>t$. Before that no robot on $H_R^O(r)$ does anything even if they are activated as they can not change their color from \texttt{off} to \texttt{moving1} due to the fact that they can never see $\mathcal{L}_1$ as their left immediate vertical line due to $r$ being on $\mathcal{L}_{j+1}$. So, after $t'$, whenever $r$ activates it sees $\mathcal{L}_I(r)=\mathcal{L}_1$ where all robots have color \texttt{chord}. Thus eventually it moves left to $\mathcal{L}_{j}$. Also, before $t'$ if $r$ is activated and sees $r'$ on some $\mathcal{L}_{k'}$ where $1<k'\le k$ then it moves left to $\mathcal{L}_j$ contrary to the assumption. So, let $r$ is not singleton on $\mathcal{L}_{j+1}$ at time $t$. Let there are $p$ robots on $\mathcal{L}_{j+1}$ at time $t$ (Figure~\ref{fig:moving1 moves left induction}(a)). Now if $r'$ is on some $\mathcal{L}_k$ where $1<k <j+1$ then, it must be singleton on $\mathcal{L}_k$. Now, until $r'$ reaches $\mathcal{L}_1$ and changes its color to \texttt{chord}, no other robot on the right of $r'$ does anything. By induction hypothesis, it can be ensured that $r'$ will reach $\mathcal{L}_1$ and eventually all robot on it would change their color to \texttt{chord} at a time say, $t'$. Now since $p> 1$, there must exist another robot, say $r_1$, having color \texttt{off} on $\mathcal{L}_{j+1}$ which is terminal on it at time $t'$. Also at time $t'$ there is no other robot in between $\mathcal{L}_1$ and $\mathcal{L}_{j+1}$ and no other robot except $r_1$ moves in between $\mathcal{L}_1$ and $\mathcal{L}_{j+1}$ from time $t'$ onwards unless $r_1$ reaches $\mathcal{L}_1$ and changes its color to \texttt{chord}. This is because only $r_1$ can change its color to \texttt{moving1} from \texttt{off}, after time $t'$ and until it reaches $\mathcal{L}_1$ and changes color to \texttt{chord} (by Claim~\ref{lemma: not more than two moving1 excluding L1}). This implies, after $t'$ whenever $r_1$ is activated it will change its color to \texttt{moving1} from \texttt{off} and will eventually move to $\mathcal{L}_j$ seeing all robots on $\mathcal{L}_1$ having color \texttt{chord}. Now by the hypothesis $r_1$ will reach $\mathcal{L}_1$ and change its color to \texttt{chord} eventually at a time, say $t_1$. Now at time $t_1$, $\mathcal{L}_{j+1}$ has $p-1$ robots. if $p-1 = 1$ then $r$ becomes singleton on $\mathcal{L}_{j+1}$ and as described above it will eventually move to $\mathcal{L}_j$. otherwise there always will be a robot on $\mathcal{L}_{j+1}$ which is terminal and has color \texttt{off}. For this case the terminal robot will eventually reach $\mathcal{L}_1$ as described above and will change its color to \texttt{chord}. This implies eventually number of robots on $\mathcal{L}_{j+1}$ will decrease until only $r$ remains. For this case, $r$ moves to $\mathcal{L}_j$ eventually, contrary to the assumption.

        \textit{Case 2:} Let $k > j+1$ (Figure~\ref{fig:moving1 moves left induction}(b)). This implies $ r$ is singleton on $\mathcal{L}_{j+1}$ also, there are no robots between $\mathcal{L}_1$ and $\mathcal{L}_k$ except $r$. First assume $r'$ is singleton on $\mathcal{L}_k$ (Figure~\ref{fig:moving1 moves left induction}(b)). Then upon seeing only $r$ on $\mathcal{L}_I(r')$ it will first move left until it reaches $\mathcal{L}_{j+1}$. Now as argument-ed for the previous case 1, $r'$ will reach $\mathcal{L}_1$ eventually and change its color to \texttt{chord}. Note that since $r$ is  singleton on $\mathcal{L}_{j+1}$, no robot from $H_R^O(r)$ does anything upon activation. Also, there are no robots in between $\mathcal{L}_1$ and $\mathcal{L}_{j+1}$ and all robots on $\mathcal{L}_1$ has color \texttt{chord}. Note that now when $r$ will be activated it will always see $\mathcal{L}_I(r)=\mathcal{L}_1$ where all robots have color \texttt{chord} and thus eventually it will move to $\mathcal{L}_j$. This is again a contradiction. Thus, let us assume at time $t$, $r'$ is not singleton on $\mathcal{L}_k$. For this case $r'$ will not do anything even if it is activated. Thus, eventually $r$ will see all robots with color \texttt{chord} on  $\mathcal{L}_I(r)$ (i.e., $\mathcal{L}_1$). Thus $r$ will move to $\mathcal{L}_j$.

\begin{figure}[h]
    \centering
    \includegraphics[width=10cm]{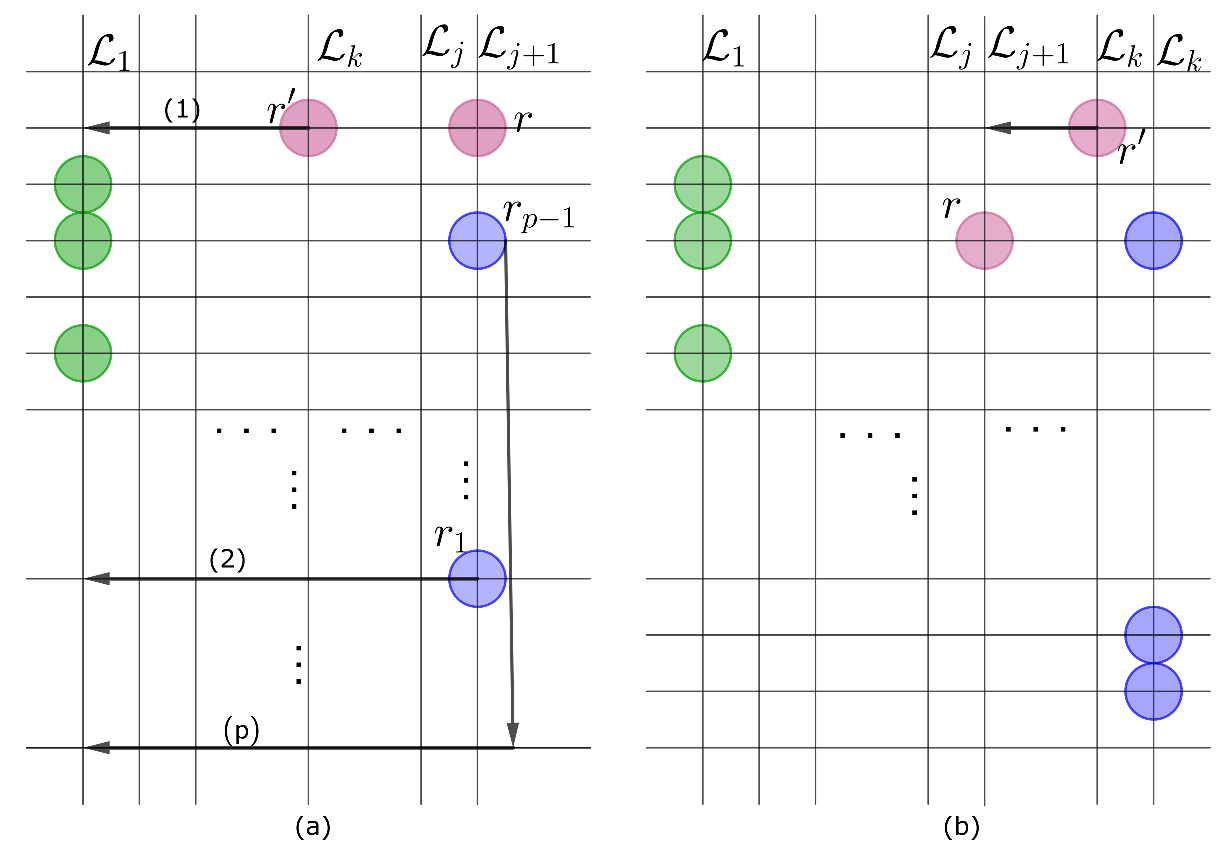}
    \caption{\textbf{(a)} $r'$ moves to $\mathcal{L}_1$ first and changes color to \texttt{chord}. Next $r_1$ changes color to \texttt{moving1} and do same as $r'$. Eventually $r_{p-1}$ also moves to $\mathcal{L}_1$ and changes color to \texttt{chord}. Next whenever $r$ activates it moves and eventually reaches $\mathcal{L}_j$. \textbf{(b)} Here if $r$ does not move, singleton $r'$ moves left. After reaching $\mathcal{L}_{j+1}$ it transforms into case I.}
    \label{fig:moving1 moves left induction}
\end{figure}
        So considering another robot, $r'$ having color \texttt{moving1} on $\mathcal{L}_k$ ($k >1$) at time $t$ we always reach a contradiction. So, let at time $t$, $r$ is the only robot with color \texttt{moving1} on $\mathcal{L}_{j+1}$ ($j \ge 2$). For some pending move, another robot may change its color to \texttt{moving1} later, say at $t' > t$. Now before $t'$ if $\mathcal{L}_1$ has all robot with color \texttt{chord} and $r$ is activated, it will move  to $\mathcal{L}_{j}$. Otherwise, we reach the same configuration described in case 1 and case 2. So, for this case also, we have a contradiction. Thus, $r$ will eventually move to $\mathcal{L}_j$. 
    \end{proof}

\begin{lemma}
    \label{lemma: off changes to moving11=}
    After $\mathcal{L}_1$ is fixed, let $\mathcal{L}_k$ be the first vertical line with a robot $r$ of color \texttt{off} at a time $t$. Then $r$ eventually changes its color to \texttt{moving1} in Phase 1.
\end{lemma}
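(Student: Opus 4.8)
The plan is to argue by contradiction: assume that $r$ never changes its colour to \texttt{moving1}, and derive that the part of the configuration lying weakly to the left of $\mathcal{L}_k$ eventually freezes into a shape in which some terminal \texttt{off} robot of $\mathcal{L}_k$ is forced, on its next activation, to turn \texttt{moving1}; iterating this drains $\mathcal{L}_k$ of \texttt{off} robots, eventually makes $r$ terminal on $\mathcal{L}_k$, and forces $r$ itself to switch --- the desired contradiction. Two facts are used throughout. Since $\mathcal{L}_k$ is the leftmost line carrying an \texttt{off} robot at time $t$ and (under the standing assumption) not all robots ever assemble on one line, only the colours \texttt{off}, \texttt{moving1} and \texttt{chord} occur, and every robot strictly left of $\mathcal{L}_k$ has colour \texttt{moving1} or \texttt{chord}. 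Moreover, by Algorithm~\ref{Algo_Phase1} a robot turns \texttt{chord} only while on $\mathcal{L}_1$ (that branch requires $H_L^O(\cdot)$ to be empty), so every \texttt{chord} robot sits on $\mathcal{L}_1$ and the lines $\mathcal{L}_2,\dots,\mathcal{L}_{k-1}$ carry only \texttt{moving1} robots, if any.

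The crucial preliminary is that $\mathcal{L}_1$ is at distance at least two from $\mathcal{L}_k$, i.e. $\mathcal{L}_2$ holds no \texttt{off} robot. Since an \texttt{off} robot never moves, an \texttt{off} robot on $\mathcal{L}_2$ would remain on that grid column for all time after $\mathcal{L}_1$ is fixed; but then, by Claim~\ref{Lemma: r sees off on right imidiate}, any \texttt{moving1} robot that is (or later becomes) on $\mathcal{L}_1$ sees it on its right-immediate line at distance one while having $H_L^O(\cdot)$ empty, and the corresponding branch of Algorithm~\ref{Algo_Phase1} makes that robot step left --- contradicting the defining property of a fixed $\mathcal{L}_1$ that no robot leaves it leftward until all robots reach it. The remaining subcase, where $\mathcal{L}_1$ never again holds a \texttt{moving1} robot once it is fixed, is settled by looking at the last leftward shift of $\mathcal{L}_1$ (which exists by Lemma~\ref{lemma: L1 is fixed}) and, using Claim~\ref{chord only after L1 is fixed}, at the transition of the last \texttt{moving1} robot of $\mathcal{L}_1$ to \texttt{chord}: each of these can happen only when that robot sees no \texttt{off} robot one hop to its right, which together with the layout of the initial configuration forces the gap to $\mathcal{L}_k$ to be at least two.

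Next I would show that, under the contradiction hypothesis, there is a finite time $T$ after which no \texttt{moving1} robot exists, the lines $\mathcal{L}_2,\dots,\mathcal{L}_{k-1}$ are empty, and $\mathcal{L}_1$ carries only \texttt{chord} robots. By Claim~\ref{lemma: not more than two moving1 excluding L1} at most two \texttt{moving1} robots lie off $\mathcal{L}_1$ at a time, and by Lemma~\ref{lemma:moving1 on L_j moves to L_(j-1)} each reaches $\mathcal{L}_1$ in finite time; once there, since $r$ is a robot not on $\mathcal{L}_1$, Claim~\ref{lemma: robot moving1 changes color to chord} (together with Observation~\ref{observation: before chord atmost two robot on L1}, Claim~\ref{lemma: moving1 terminal on L1 in phase 1} and Claim~\ref{lemma: L1 only moving1 and chord}) makes it turn \texttt{chord} in finite time. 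A new \texttt{moving1} robot can arise only from an \texttt{off} robot of $\mathcal{L}_k$ switching colour: an \texttt{off} robot on a line to the right of $\mathcal{L}_k$ has its left-immediate occupied line carrying an \texttt{off} robot, hence not ``all \texttt{chord}'', so the relevant branch of Algorithm~\ref{Algo_Phase1} fails and it cannot switch. Since $r$ stays \texttt{off}, the set of \texttt{off} robots on $\mathcal{L}_k$ is always nonempty and never grows, so only finitely many such switches occur; taking $T$ past the last one and past the finite time needed for the lingering \texttt{moving1} robots to drain to $\mathcal{L}_1$ and become \texttt{chord} proves the claim. Then, past $T$, the topmost \texttt{off} robot $r'$ of $\mathcal{L}_k$ is terminal on $\mathcal{L}_k$ and sees $\mathcal{L}_I(r')=\mathcal{L}_1$ at distance $\ge 2$ with all visible robots \texttt{chord}, so on its next activation --- guaranteed by fairness of the $\mathcal{ASYNC}$ scheduler --- Algorithm~\ref{Algo_Phase1} makes $r'$ turn \texttt{moving1}: a colour switch after $T$, contradicting the choice of $T$. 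Hence the assumption is false and $r$ (indeed every \texttt{off} robot of $\mathcal{L}_k$) eventually turns \texttt{moving1}.

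I expect the real obstacle to be the distance-at-least-two step of the second paragraph: ruling out an \texttt{off} robot on $\mathcal{L}_2$ once $\mathcal{L}_1$ is fixed, and in particular handling the delicate case where $\mathcal{L}_1$ has already become entirely \texttt{chord}, so that no \texttt{moving1} robot remains on it to ``react'' to such an $\mathcal{L}_2$ robot. That requires a careful argument from the definition of ``$\mathcal{L}_1$ is fixed'', Lemma~\ref{lemma: L1 is fixed}, Claim~\ref{chord only after L1 is fixed}, and the fact that the terminal robot(s) of the initial $\mathcal{L}_1$ shift it left far enough --- twice, when the initial $\mathcal{L}_1$ carries more than two robots --- before any robot becomes \texttt{chord}. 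Everything else is finiteness and fairness bookkeeping over the boundedly many \texttt{moving1} robots in play.
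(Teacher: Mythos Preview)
Your argument is correct and follows essentially the same route as the paper: assume $r$ never switches, use Lemma~\ref{lemma:moving1 on L_j moves to L_(j-1)} and Claim~\ref{lemma: robot moving1 changes color to chord} to drain all \texttt{moving1} robots into $\mathcal{L}_1$ as \texttt{chord}, and then observe that a terminal \texttt{off} robot on $\mathcal{L}_k$ is forced to switch (the paper packages this as a case split on $|\mathcal{L}_k|$ and a decreasing counter, you package it as a ``last switch time $T$'' --- these are the same idea). Your second paragraph, establishing that the gap from $\mathcal{L}_1$ to $\mathcal{L}_k$ is at least two, is actually more careful than the paper's proof, which silently uses $k>2$ (justified only informally in the surrounding narrative and in the proof of Claim~\ref{lemma: moving1 terminal on L1 in phase 1}); so the step you flagged as the ``real obstacle'' is exactly the one the paper glosses over.
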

\begin{proof}
   Let after $\mathcal{L}_1$ is fixed, $t$ be a time when $\mathcal{L}_k$ is the first vertical line that contains a robot $r$ having color \texttt{off}.This implies, all robots between $\mathcal{L}_1$ and $\mathcal{L}_k$ (if exists) has color \texttt{moving1} at time $t$. Now let from $t$ onwards $r$ never changes its color to \texttt{moving1}. Let there be $p \ge 1$ robots at time $t$ on $\mathcal{L}_k$. Now we have two cases.

   \textit{Case 1:} Let $p =1$. That is $r$ is singleton on $\mathcal{L}_k$ at time $t$ (Figure~\ref{fig:off to moving1}(a)). Then robots on $H_R^O(r)$ must be of the color \texttt{off} at time $t$ (Observation~\ref{observation: right half of off is  all off}) and does nothing upon activation from time $t$ onwards as $r$ does not change its color. Now, by lemma~\ref{lemma:moving1 on L_j moves to L_(j-1)},  all robots in between $\mathcal{L}_1$ and $\mathcal{L}_k$ eventually moves to $\mathcal{L}_1$ and change their color to \texttt{chord} at a time, say $t_0$. And within this time no new robot changes its color to \texttt{moving1} from \texttt{off}. Thus from $t_0$ onwards, if $r$ does not change its color, the configuration remains unchanged. Now, when $r$ is activated after time $t_0$, it sees all robots on $\mathcal{L}_I(r) = \mathcal{L}_1$ has color \texttt{chord} and thus changes its color to \texttt{moving1}, contrary to our assumption.

\begin{figure}[h]
    \centering
    \includegraphics[width=9cm]{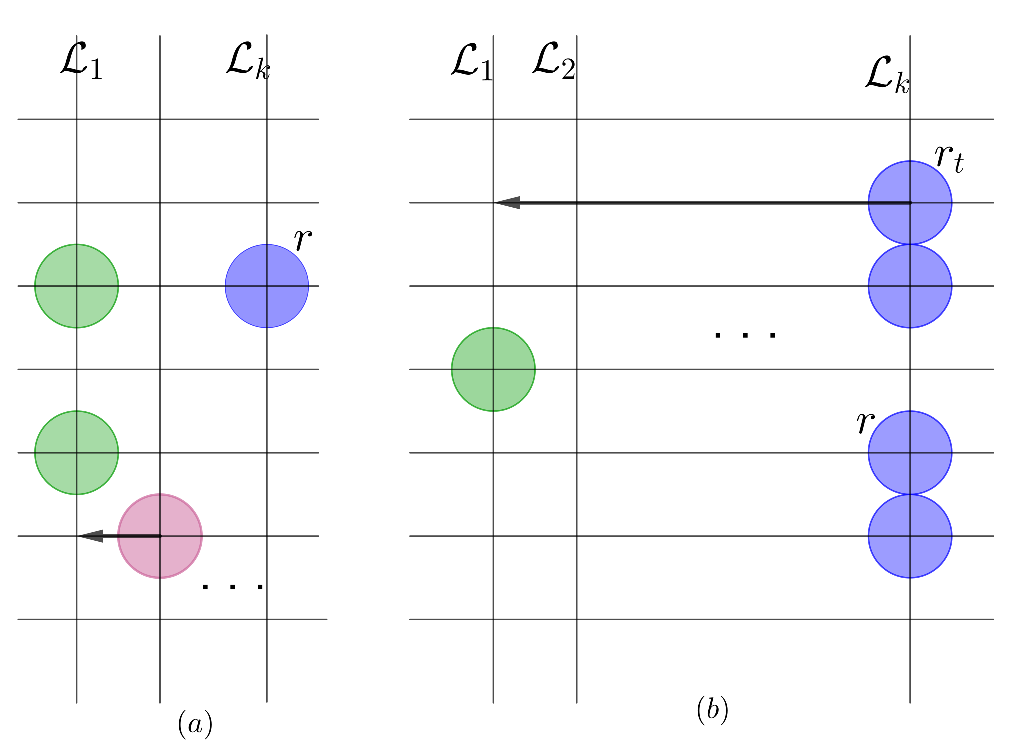}
    \caption{\textbf{(a)} $r$ with color \texttt{off} is singleton on $\mathcal{L}_k$ the robots between $\mathcal{L}_1$ and $\mathcal{L}_k$ moves to $\mathcal{L}_1$ and changes color to \texttt{chord}. Next, whenever $r$ activates it changes color to \texttt{moving1}. \textbf{(b)} $r$ is not singleton on $\mathcal{L}_k$ . All robots between $\mathcal{L}_1$ and $\mathcal{L}_k$ has already moved to $\mathcal{L}_1$ and changed color to \texttt{chord}. Next, whenever $r_t$ activates it changes color \texttt{moving1} and moves left decreasing the number of robots on $\mathcal{L}_k$ and eventually transforming into the case shown in (a).} 
    \label{fig:off to moving1}
\end{figure}

   \textit{Case 2:} Let $p >1$ (Figure~\ref{fig:off to moving1}(b)). For this case, we will show that number of robots on $\mathcal{L}_k$ eventually decreases until it becomes one. Which we already discussed in case 1. First observe that no robot on $H_R^O(r)$ at time $t$, moves to $\mathcal{L}_k$  as they will remain of color \texttt{off} (observation~\ref{observation: right half of off is  all off}) from time $t$ onwards if $r$ does not change its color to \texttt{moving1}. So the number of robots on $\mathcal{L}_k$ never increase from time $t$ onwards. Now let the number of robots on $\mathcal{L}_k$ never decrease i.e., it remains the same. This implies, no robot from $\mathcal{L}_k$ changes its color to \texttt{moving1} from time $t$ onwards. Now by Lemma~\ref{lemma:moving1 on L_j moves to L_(j-1)} all robots in between $\mathcal{L}_1$ and $\mathcal{L}_k$ have color \texttt{moving1} at time $t$ and they will eventually move to $\mathcal{L}_1$ and change their color to \texttt{chord} at a time, say $t_1 \ge t$. Now let $r_t$ be the terminal robot on $\mathcal{L}_k$ which is activated first after time $t_1$. Now upon activation, it must see all robots on $\mathcal{L}_I(r_t)= \mathcal{L}_1$ has color \texttt{chord} and thus changes its color to \texttt{moving1} contrary to our assumption. Thus the number of robots on $\mathcal{L}_k$ will decrease and eventually have only $r$. So according to case 1, we will again reach a contradiction.
   As for both cases contradictions are achieved, our assumption that $r$ never changes its color to \texttt{moving1} from time $t$ onwards is false. So there is a time when $r$ changes its color to \texttt{moving1}.
 \end{proof}

So using Lemma~\ref{lemma: L1 is fixed}
we guarantee that within a finite time after the first move by any robot from the initial configuration $\mathcal{L}_1$ will be fixed.  Let $\mathcal{L}_k$ be the first vertical line on the right of $\mathcal{L}_1$ where a robot of color \texttt{off} exists ($k > 2$). So all robots in between $\mathcal{L}_1$ and $\mathcal{L}_k$ has color \texttt{moving1}. Thus by Lemma~\ref{lemma:moving1 on L_j moves to L_(j-1)} eventually all these robots will move to $\mathcal{L}_1$. Now observe that all robots on $\mathcal{L}_k$ will have color either \texttt{moving1} or \texttt{off}. Now using Lemma~\ref{lemma:moving1 on L_j moves to L_(j-1)} it can  be said that all robots of color \texttt{moving1} on $\mathcal{L}_k$ will move to $\mathcal{L}_1$ eventually. Also by Lemma~\ref{lemma: off changes to moving11=} all robots of color \texttt{off} on $\mathcal{L}_k$ change their color to \texttt{moving1} and move to $\mathcal{L}_1$ eventually. Thus  Within finite time all robots of $\mathcal{L}_k$ move to $\mathcal{L}_1$. Now if there are no other robots on the right of $\mathcal{L}_k$ then we are done. Otherwise the first vertical line on the right of $\mathcal{L}_1$ containing a robot of color \texttt{off}, shifts right.  Eventually, there will be one such line that does not have any other robot on its right and all other robots will be on $\mathcal{L}_1$. Now as described above all other robots of that line will also eventually move to $\mathcal{L}_1$. In this moment all non terminal robots on $\mathcal{L}_1$ will have color \texttt{chord} and the terminal robots either will have color \texttt{chord} or \texttt{moving1} (Claim~\ref{lemma: moving1 terminal on L1 in phase 1}). From this above discussion, we can have the following theorem.

\begin{theorem}
    \label{Thm: Phase1 line}
    There exists a time $t$  when all robots move to a single line with non terminal robots having color \texttt{chord} and terminal robots having color either \texttt{chord} or, \texttt{moving1} by executing the algorithm~\ref{Algo_Phase1}: Phase 1 from any initial configuration assuming one axis agreement, under asynchronous scheduler. 
\end{theorem}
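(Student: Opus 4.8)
The plan is to chain the lemmas already proved into a single convergence argument whose progress measure is the number of robots not lying on $\mathcal{L}_1$, refined by the index of the right-most occupied vertical line. First I would invoke Lemma~\ref{lemma: L1 is fixed} to fix a finite time $t_0$ (after the first move) at which $\mathcal{L}_1$ becomes fixed and stays fixed until all robots are on it. Combining Observation~\ref{observation: before chord atmost two robot on L1}, Claim~\ref{chord only after L1 is fixed} and Claim~\ref{lemma: L1 only moving1 and chord}, from $t_0$ onwards every robot off $\mathcal{L}_1$ has color \texttt{off} or \texttt{moving1}, every robot on $\mathcal{L}_1$ has color \texttt{moving1} or \texttt{chord}, and in particular no robot has color \texttt{diameter} yet (nobody sees both its left and right open halves empty before all robots share one line).

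Next I would run a ``peeling'' step. Assume some robot is off $\mathcal{L}_1$, and let $\mathcal{L}_k$ ($k\ge 3$) be the first vertical line right of $\mathcal{L}_1$ carrying a robot of color \texttt{off}; then all robots strictly between $\mathcal{L}_1$ and $\mathcal{L}_k$ have color \texttt{moving1}. By repeated application of Lemma~\ref{lemma:moving1 on L_j moves to L_(j-1)} these robots reach $\mathcal{L}_1$ in finite time, and by Claim~\ref{lemma: robot moving1 changes color to chord} each of them turns \texttt{chord} as long as some robot still sits off $\mathcal{L}_1$. Then Lemma~\ref{lemma: off changes to moving11=} makes every \texttt{off} robot currently on $\mathcal{L}_k$ become \texttt{moving1}, and Lemma~\ref{lemma:moving1 on L_j moves to L_(j-1)} again carries each of them onto $\mathcal{L}_1$; hence $\mathcal{L}_k$ is emptied in finite time. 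Because \texttt{off} robots never move left (Algorithm~\ref{Algo_Phase1}) and no robot re-acquires color \texttt{off} in Phase 1, emptying $\mathcal{L}_k$ strictly pushes ``the first column right of $\mathcal{L}_1$ carrying an \texttt{off} robot'' further right (or makes it vanish). Since the robot set is finite there are only finitely many occupied columns, so this peeling happens only finitely often: at some time $t$ no \texttt{off} robot remains off $\mathcal{L}_1$, and one final pass of Lemma~\ref{lemma:moving1 on L_j moves to L_(j-1)} brings the remaining \texttt{moving1} robots onto $\mathcal{L}_1$.

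Finally, at that time $t$ all robots lie on the single line $\mathcal{L}_1$; each of them has color \texttt{moving1} or \texttt{chord} by Claim~\ref{lemma: L1 only moving1 and chord}, and any robot still colored \texttt{moving1} must be terminal on $\mathcal{L}_1$ by Claim~\ref{lemma: moving1 terminal on L1 in phase 1}. Thus every non-terminal robot carries \texttt{chord} and the (at most two) terminal robots carry \texttt{chord} or \texttt{moving1}, which is exactly the configuration the theorem asserts; if all robots already lie on $\mathcal{L}_1$ at $t_0$ the same last paragraph applies directly.

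The hard part will be making the peeling argument genuinely well-founded under asynchrony. A \textsc{Move} scheduled long ago can turn an additional \texttt{off} robot into \texttt{moving1} at an awkward moment; one must use Claim~\ref{lemma: not more than two moving1 excluding L1} to keep the number of traveling robots off $\mathcal{L}_1$ bounded by two, argue that such a latecomer is still absorbed by Lemma~\ref{lemma:moving1 on L_j moves to L_(j-1)}, and verify that every invocation of a lemma is legitimate because its standing hypothesis — $\mathcal{L}_1$ fixed — remains valid throughout (Lemma~\ref{lemma: L1 is fixed}). The crucial monotonicity — that the index of the right-most occupied column can never increase, since \texttt{off} robots never move left — is what forces the measure to strictly decrease and lets the induction close.
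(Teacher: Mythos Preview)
Your proposal is correct and follows essentially the same route as the paper: fix $\mathcal{L}_1$ via Lemma~\ref{lemma: L1 is fixed}, then repeatedly use Lemmas~\ref{lemma:moving1 on L_j moves to L_(j-1)} and~\ref{lemma: off changes to moving11=} to drain each successive leftmost column carrying an \texttt{off} robot until everyone is on $\mathcal{L}_1$, and conclude the color claim from Claims~\ref{lemma: L1 only moving1 and chord} and~\ref{lemma: moving1 terminal on L1 in phase 1}. Your explicit progress-measure framing and the remark about Claim~\ref{lemma: not more than two moving1 excluding L1} bounding stragglers under asynchrony add welcome precision, but the underlying argument is the same as the paper's.
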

We now proof the following theorem which states that A configuration where all robots are on a single line with all non terminal robots have color \texttt{chord} and terminal robots have color either \texttt{chord} or, \texttt{moving1} will eventually change into a Phase 1 Final Configuration. This theorem ensures the termination of Phase 1.

\begin{theorem}
    \label{thm: phase1 final config}
    Let in a configuration $\mathcal{C},$ all robots are on a single line where each non terminal robots have color \texttt{chord} and the terminal robots have color either \texttt{chord} or \texttt{moving1}. Then In finite time the configuration will change into a Phase 1 Final Configuration (P1FC).
\end{theorem}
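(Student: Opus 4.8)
The idea is to follow $\mathcal{C}$ through two overlapping stages. Call $L$ the single vertical line carrying all robots (so $L=\mathcal{L}_1$ as long as nobody has moved), let $t_1$ be its top robot and $t_2$ its bottom robot; these are the only terminal robots of $L$, and since $|\mathfrak{R}|>2$ there is at least one non-terminal robot, which by hypothesis has colour \texttt{chord}. \textbf{Stage 1} shows that $t_1$ and $t_2$ eventually get colour \texttt{diameter}; \textbf{Stage 2} shows that afterwards every non-terminal \texttt{chord} robot peels off $L$ exactly once through \textsc{ChordMove}, landing on the two grid lines adjacent to $L$, which turns the configuration into a P1FC.

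For Stage 1 I read off Algorithm~\ref{Algo_Phase1} at a robot on $L$. Every robot on $L$ has $H_L^O$ empty (nothing lies left of $\mathcal{L}_1$) and sees a \texttt{chord} robot on $L$ (the non-terminal one). A terminal robot with colour \texttt{moving1}, once activated and seeing no \texttt{diameter} robot, falls into the branch ``$H_L^O$ empty and a \texttt{chord} robot is on $\mathcal{L}_V(r)$'' and becomes \texttt{chord}; if it instead already sees the other terminal coloured \texttt{diameter} on $L$, it becomes \texttt{diameter} and terminates. A terminal robot with colour \texttt{chord} on $L$, while no robot has peeled off, has $H_L^O$ and $H_R^O$ empty, sees a \texttt{chord} robot on $L$, and has an empty closed half on the side it is terminal toward, so it becomes \texttt{diameter} and terminates; and if by then the other terminal is already \texttt{diameter}, it becomes \texttt{diameter} through the ``sees a \texttt{diameter} robot on $\mathcal{L}_V(r)$, is terminal'' branch once the robots lying between them on $L$ have cleared. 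In parallel I check that during Stage~1 a non-terminal \texttt{chord} robot on $L$ makes no move: both of its closed halves $H_U^C$ and $H_B^C$ are non-empty because it is sandwiched between $t_1$ and $t_2$ (which never leave $L$), so the \texttt{chord}-to-\texttt{diameter} rule cannot fire, and it cannot start \textsc{ChordMove} before it actually sees a \texttt{diameter} robot on $L$. Fairness of the $\mathcal{ASYNC}$ scheduler then gives a finite time by which both terminals carry colour \texttt{diameter}; rename them $r_1$ (top) and $r_2$ (bottom). They never move, so $\mathcal{L}_H(r_1)$ and $\mathcal{L}_H(r_2)$ are fixed.

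For Stage 2 the key geometric fact is that on the vertical line $L$ a robot sees exactly the nearest robot above it and the nearest robot below it on $L$: a robot sitting on a neighbouring grid line has radius $\le \frac{1}{2}$ and cannot intersect a segment contained in $L$, while any non-adjacent robot of $L$ is screened by the adjacent one. Hence among the \texttt{chord} robots still on $L$, only the one currently closest to $r_1$ or to $r_2$ can see a \texttt{diameter} robot, so it is the next one to execute \textsc{ChordMove}, which moves it one hop; once it has moved it no longer sees a \texttt{diameter} robot on its own vertical line, and being now on a line adjacent to $L$ it has a non-empty open half toward $L$, so it makes no further Phase~1 transition (the inertness asserted in the \textsc{ChordMove} description). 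Applying \textsc{ChordMove}'s direction rule to the very first \texttt{chord} robot that moves: it is adjacent to a terminal, so no robot lies strictly inside the relevant horizontal strip and its only in-strip neighbour along $L$ is not in its left open half, so it moves \emph{left}; the grid line just left of $L$ is thereby occupied and $L$ is from then on $\mathcal{L}_2$. Iterating by induction on the number of \texttt{chord} robots remaining on $L$, and using fairness, every non-terminal \texttt{chord} robot eventually moves, exactly once, ending on $\mathcal{L}_1$ or $\mathcal{L}_3$ while keeping colour \texttt{chord} and its ordinate; since in $\mathcal{C}$ each was strictly between $\mathcal{L}_H(r_1)$ and $\mathcal{L}_H(r_2)$, it still is. When the last one has moved, $\mathcal{L}_2$ holds exactly $r_1,r_2$ with colour \texttt{diameter} and every other robot is on $\mathcal{L}_1$ or $\mathcal{L}_3$ with colour \texttt{chord}, strictly between $\mathcal{L}_H(r_1)$ and $\mathcal{L}_H(r_2)$ --- a P1FC --- and all of this has occurred within finite time.

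The main obstacle is the asynchronous interleaving, and it is concentrated in Stage 2: one must argue that \textsc{ChordMove} fires exactly once per non-terminal \texttt{chord} robot although activations overlap, that nobody ever gets stuck (the robot nearest a \texttt{diameter} terminal can always eventually see it, because obstruction along $L$ comes only from robots of $L$, which keep peeling away), and that a robot that has completed \textsc{ChordMove} --- or a terminal still in transit from \texttt{moving1} to \texttt{diameter} --- is not misread by the algorithm. The bookkeeping splits into establishing the visibility facts on a vertical line, the invariant ``a non-terminal \texttt{chord} robot on $L$ stays inert until it alone sees a \texttt{diameter} terminal'', and the direction analysis ensuring at least one robot moves left so that the \texttt{diameter} line is $\mathcal{L}_2$ rather than $\mathcal{L}_1$.
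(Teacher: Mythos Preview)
Your overall approach matches the paper's: first get the terminals to colour \texttt{diameter}, then have the non-terminals peel off $L$ via \textsc{ChordMove}. The visibility argument along $L$ and the inertness of a robot after one \textsc{ChordMove} step are both handled correctly.

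However, your two-stage decomposition---even though you label the stages ``overlapping''---hides a case that the paper treats explicitly and that your Stage~1 argument does not cover. Suppose $t_1$ already has colour \texttt{chord}, becomes \texttt{diameter}, and the non-terminal adjacent to it peels \emph{left} before the other terminal $t_2$ (still \texttt{moving1}) is ever activated. When $t_2$ finally wakes up, it sees no \texttt{diameter} robot on $\mathcal{L}_V(t_2)$ (the remaining \texttt{chord} robots on $L$ obstruct $t_1$), \emph{and} $H_L^O(t_2)$ is now non-empty. Your Stage~1 analysis offers only two outcomes for a \texttt{moving1} terminal: ``$H_L^O$ empty, become \texttt{chord}'' or ``sees \texttt{diameter} on $L$, become \texttt{diameter}''. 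Neither applies here, and you cannot conclude ``both terminals carry \texttt{diameter}'' by fairness alone. What is missing is the paper's Case~II(b): one must trace through Algorithm~\ref{Algo_Phase1} and verify that in this situation $t_2$ does \emph{nothing} (the branch ``all visible robots on $\mathcal{L}_I(r)$ have colour \texttt{chord} and $r$ sees no \texttt{chord} on $\mathcal{L}_V(r)$'' fails because $t_2$ still sees the adjacent \texttt{chord} robot on $L$; the remaining branches fail because $H_L^O$ is non-empty), and that only after \emph{all} non-terminals have peeled does $t_2$ finally see $r_1$ on its own line and turn \texttt{diameter}. You flag ``a terminal still in transit from \texttt{moving1} to \texttt{diameter}'' as something to check in your final paragraph, but you never actually discharge it; the paper does, via its Cases I(b) and II(b). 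The same gap affects a \texttt{chord} terminal once peeling has begun: the ``$H_L^O$ and $H_R^O$ empty'' rule you invoke no longer fires, and you must instead wait for the line to clear and use the ``sees \texttt{diameter} on $\mathcal{L}_V(r)$, is terminal'' branch.
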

\begin{proof}
    By the theorem~\ref{Thm: Phase1 line}, there exists a time $t$ when all robots will be on a single line. All robots on that line which are not terminal must have color \texttt{chord} at time $t$ and the terminal robots can have color either \texttt{moving1} or, \texttt{chord} at time $t$. Let both terminal robots, say $r_1$ and $r_2$, has color \texttt{moving1} at time $t$. Then among $r_1$ and $r_2$, whichever is activated first, say $r_1$ without loss of generality, must see a robot of color \texttt{chord} on $\mathcal{L}_V(r_1) = \mathcal{L}_1$ and changes its color to \texttt{chord}.  The guarantee that $r_1$ will see a robot of color \texttt{chord} comes from the fact that the non terminal robots with color \texttt{chord} on $\mathcal{L}_1$ do not move out of $\mathcal{L}_V(r_1)$ until they see a robot with color \texttt{diameter} on $\mathcal{L}_V(r_1)$.   So there must exist a time $t_1 >t$ when all robots are on $\mathcal{L}_1$ and all but at most one terminal robot say $r_2$ has color \texttt{chord} and  $r_2$ has either color \texttt{chord} or \texttt{moving1}. Now there are two cases. In the first case, we assume all robots have color \texttt{chord} at $t_1$, and in the second case we assume all robots except $r_2$ have color \texttt{chord} and $r_2$ has color \texttt{moving1}.

\textit{Case I:} Let us consider the case when all robots on $\mathcal{L}_1$ has color \texttt{chord} at time $t_1$. Now after $t_1$ whichever robot of $r_1$ and $r_2$ is activated first, changes its color to \texttt{diameter}. Note that both terminal robots can have color \texttt{diameter} too. So, there exists a time  $t_2 >t_1$ when all robots are on $\mathcal{L}_1$, all non terminal robots have color \texttt{chord} and at least one terminal robot has color \texttt{diameter}. 

\textit{Case I(a):} Now  if both the terminal robots $r_1$ and $r_2$ have color \texttt{diameter} at time $t_2$ (Figure~\ref{fig:P1FC Formation}(a)), then the non terminal robots on $\mathcal{L}_1$ that can see  $r_1$ or $r_2$ with color \texttt{diameter} moves either left or right once by executing \textsc{ChordMove} subroutine. After a robot moves to the left or right of $\mathcal{L}_V(r_1)$ by executing \textsc{ChordMove}, the next non terminal robot on $\mathcal{L}_V(r_1)$ can now see at least one robot of color \texttt{diameter} and thus execute the \textsc{ChordMove} subroutine. This way all non terminal robots on $\mathcal{L}_V(r_1)$ will execute \textsc{Chordmove} at least once. Note that a robot that has executed \textsc{ChordMove} once does not move again until Phase 1 Final configuration is achieved. This is because the robot does not execute Phase 1 as it will see a robot of color \texttt{diameter}  which is not on its own vertical line. Also, it will not execute Phase 2 as it never sees only two robots with color \texttt{diameter} on its right immediate (or, left immediate) vertical line until Phase 1 Final Configuration is achieved. Now Let $t_3$ be a time the last non terminal robot on $\mathcal{L}_V(r_1)$  moved right or left after executing \textsc{ChordMove}. Then we can ensure that only $r_1$ and $r_2$ are on $\mathcal{L}_2$ with color \texttt{diameter} at time $t_3$ and all other robots are on $\mathcal{L}_1$ and $\mathcal{L}_3$ with color \texttt{chord} and also they are strictly between $\mathcal{L}_H(r_1)$ and $\mathcal{L}_H(r_2)$. Hence at $t_3$, the configuration becomes a Phase 1 Final Configuration. 

\textit{Case I(b):} Now let at time $t_2$, only $r_1$ has color \texttt{diameter} and $r_2$ has color \texttt{chord} (Figure~\ref{fig:P1FC Formation}(b)). Now if $r_2$ is activated before any of the non terminal robots move by executing \textsc{ChordMove} it changes its color to \texttt{diameter}. For this case, we can show that eventually Phase 1 Final Configuration will be achieved (using a similar argument as  Case I(a)). So, let before $r_2$ is activated a robot which is not terminal on $\mathcal{L}_1$  and sees $r_1$ executes \textsc{ChordMove} and moves left. Then $r_2$ never changes its color to \texttt{diameter} until all non terminal robots move left or right of $\mathcal{L}_V(r_1) = \mathcal{L}_V(r_2)$ by executing \textsc{ChordMove}. When all such robots move, $r_2$ then can see $r_1$ with color \texttt{diameter} and then it changes its color to \texttt{diameter}. Note that after $r_2$ changes its color \texttt{diameter} the configuration becomes a Phase 1 Final Configuration. 

\textit{Case II: } Let at time $t_1$ all but one terminal robot, say $r_2$ (without loss of generality), on $\mathcal{L}_1$ has color \texttt{chord}. Let $r_2$ has color \texttt{moving1} at time $t_1$. Now, if $r_1$ is activated and sees a robot of color \texttt{chord} on its vertical line and sees no robot on both of its left and right open halves then $r_1$ changes its color to \texttt{diameter}. Let this happens at a time $t_4 > t_1$.  Then similar to the above discussion all non terminal robots on $\mathcal{L}_V(r_1)$, that see $r_1$  after time $t_4$, on their own vertical line executes \textsc{ChordMove} and move left or right. If at least one such non terminal robot has already executed \textsc{ChordMove} then even if $r_2$ is activated after that, with color \texttt{moving1} it does not change its color as it sees its left open half non-empty. 

\textit{Case II(a):} If $r_2$ is activated before any robot moves from $\mathcal{L}_V(r_1) = \mathcal{L}_V(r_2)$ then it sees a robot with color \texttt{chord} on $\mathcal{L}_V(r_2)$ and sees $H_L^O(r_2)$ empty. Thus $r_2$ in this case changes its color to \texttt{chord} at a time say $t_5 > t_1$. In the configuration, if $t_5 > t_4$ then at time $t_5$, all robots have color \texttt{chord} except $r_1$, which has color \texttt{diameter}. This is similar to the case I(b). Thus for this case eventually the configuration will become a Phase 1 Final Configuration. So, let $t_5 \le t_4$. If $t_5 < t_4$, then at $t_5$, all robots will be on a single line with color \texttt{chord}. This is similar to case I and thus eventually the configuration will change to a Phase 1 Final Configuration. Now if $t_5 =t_4$, then at $t_5$, $r_1$ will have color \texttt{diameter} and $r_2$ will have color \texttt{chord} which is similar to the case I(b). Thus again the configuration will eventually become a Phase 1 Final Configuration. 

\textit{Case II(b):} Now let after $t_1$, $r_2$ is activated for the first time when at least one non terminal robot already executed \textsc{ChordMove} (Figure~\ref{fig:P1FC Formation}(c)). Then $r_2$ does not change its color even if it is activated as it does not have $H_L^O(r_2)$ non-empty. Now When all non terminal robots execute \textsc{ChordMove} once then on $\mathcal{L}_V(r_2)$ there are only two robots $r_1$ with color \texttt{diameter} and $r_2$ with color \texttt{moving1}. Now when $r_2$ is activated again, it sees $r_1$ with color \texttt{diameter} on $\mathcal{L}_V(r_2)$ and changes its color to \texttt{diameter}. After this, the configuration again becomes a Phase 1 Final Configuration.

For all of the above cases, it can be ensured that A configuration where all robots are on a single line with all non terminal robots having color \texttt{chord} and terminal robots having color either \texttt{chord} or, \texttt{moving1} will change into a Phase 1 Final Configuration within finite time.
\end{proof}
\begin{figure}
    \centering
    \includegraphics[width=10cm]{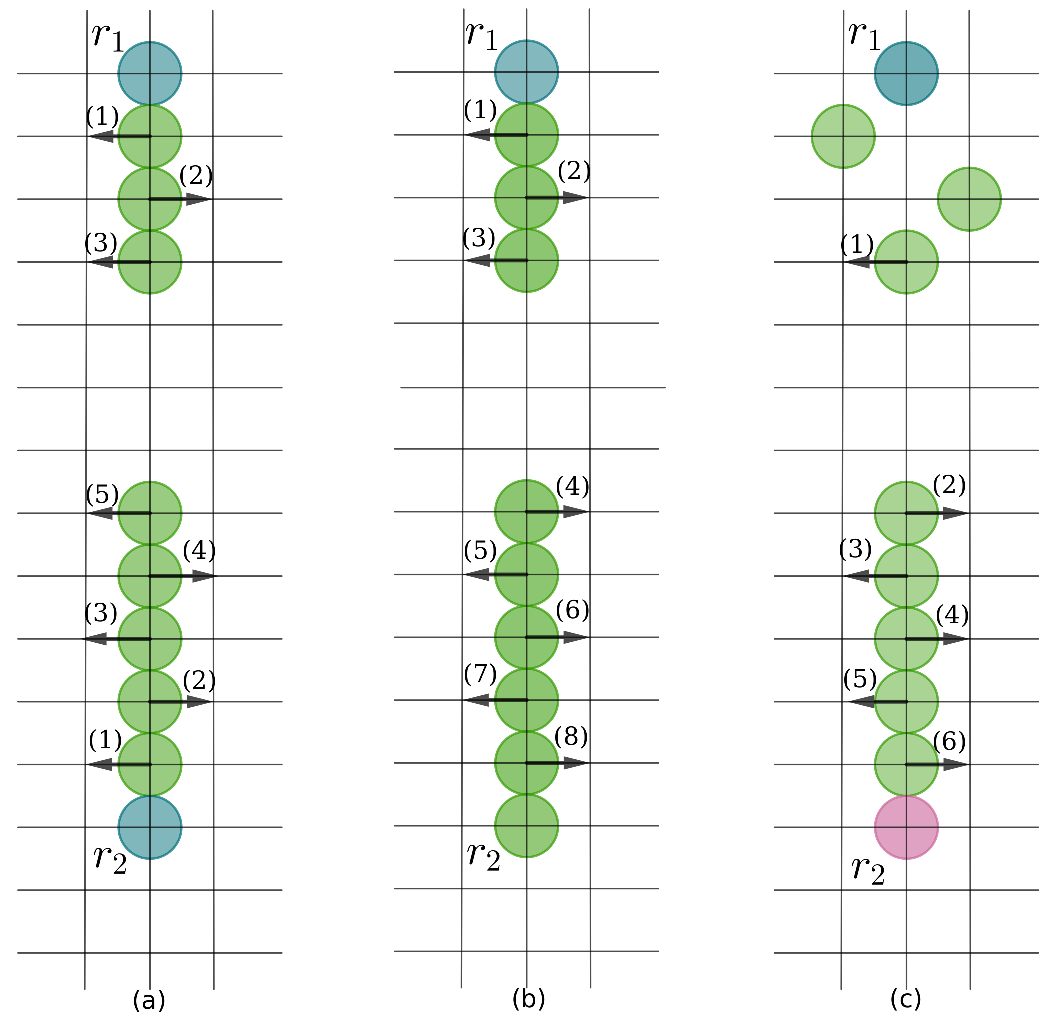}
    \caption{\textbf{(a)} Both terminal robots have color \texttt{diameter} in $\mathcal{C}(t_2)$. \textbf{(b)} $r_1$ has color \texttt{diameter} and $r_2$ has color \texttt{chord} in $\mathcal{C}(t_2)$. $r_2$ does not change color \texttt{diameter} until all non terminal robots of color \texttt{chord} execute \textsc{ChordMove} exactly once and moves to left or right. \textbf{(c)} $r_2$ still has color \texttt{moving1} while some non terminal robots from the line already executed \textsc{ChordMove}. In this case also $r_2$ changes color to \texttt{diameter} when all non terminal robots moves left or right. The numbers in the bracket denotes the order in which the robots move.}
    \label{fig:P1FC Formation}
\end{figure}
\subsection{Phase 2}
Phase 1 terminates when the configuration is a P1FC. So, the initial configuration in Phase 2 is a P1FC where $r_1$ and $r_2$ be the only robots on $\mathcal{L}_2$ and they have color \texttt{diameter}. Note that these two robots are already terminated in Phase 1 after changing color to \texttt{diameter}. Now, these two robots will help the other robots to agree on the circle to be formed. The line segment of $\mathcal{L}_V(r_1)$ between $r_1$ and $r_2$  will be agreed by other robots as a diameter of the circle to be formed. So, if a robot sees both $r_1$ and $r_2$ it knows the circle, say $\mathcal{CIR}$.

Now, observe that in a P1FC, each horizontal line between $\mathcal{L}_H(r_1)$ and $\mathcal{L}_H(r_2)$ contains at most one robot. Now each of these horizontal lines has exactly two grid points on the circumference of the circle $\mathcal{CIR}$, one on the left of the agreed diameter and the other on the right. So, for all $r \in \mathfrak{R}$, if $r$ terminates at a grid point on $\mathcal{L}_H(r)$  which is on the circumference of $\mathcal{CIR}$, we say that circle is formed. 
In Phase 2 of algorithm $CF\_FAT\_GRID$, any robot $r$ terminates at the grid point on $\mathcal{L}_H(r)$ which is on the circumference of $\mathcal{CIR}$ and on the left (resp. right) of the agreed diameter if in the P1FC  $r$ is on the left (resp. right) of the agreed diameter.
\subsubsection{Brief Description of Phase 2}
Phase 2 starts when there are exactly two robots with color \texttt{diameter}, say $r_1$ and $r_2$ on the same vertical line, and all other robots have color \texttt{chord} on $\mathcal{L}_I(r_1)$ and $\mathcal{R}_I(r_1)$. Difference between number of robots on  $\mathcal{L}_I(r_1)$ and $\mathcal{R}_I(r_1)$ is at most two. Note that the vertical line $\mathcal{L}_V(r_1)$ divides the circle on two halfs $H_L^O(r_1)$ and $H_R^O(r_1)$. Here, we describe Phase 2 for the robots only in $H_L^O(r_1)$. The algorithm for the robots in $H_R^O(r_1)$ will be similar. 

Note that due to the procedure \textsc{ChordMove} in Phase 1, all robots with color \texttt{chord} must be strictly between the horizontal lines passing through $r_1$ and $r_2$. Observe that, in Phase 1, if a robot $r$ of color \texttt{chord} sees at least one robot of color \texttt{diameter} then there are two possibilities. Either $r$ sees  at least one robot of color \texttt{diameter}, say $r_1$, on $\mathcal{L}_V(r)$ or, it sees $r_1$ on $\mathcal{R}_I(r)$ (resp. $\mathcal{L}_I(r)$) while there must be another robot of color \texttt{chord} on $\mathcal{R}_I(r)$ (resp. $\mathcal{L}_I(r)$). In Phase 2 we have shown that (Lemma~\ref{lemma: decide phase 2 chord}) a robot $r$ of color \texttt{chord} always sees at least one robot of color \texttt{diameter} which can not be on $\mathcal{L}_V(r)$ (as $r$ never moves to the vertical line where the robots of color \texttt{diameter} are located). Now if $r$ sees  a robot $r_1$ of color \texttt{diameter} on $\mathcal{R}_I(r)$ (resp.$\mathcal{L}_I(r)$) then, it must see the other robot $r_2$ of color \texttt{diameter} on $\mathcal{R}_I(r)$ (resp. $\mathcal{L}_I(r)$) too. But the difference from Phase 1 is that, here $\mathcal{L}_V(r_1)$ does not have any other robots of color \texttt{chord}. Thus a robot of color \texttt{chord} can always distinguish between Phase 1 and Phase 2.

Now, a robot which is terminal on $\mathcal{L}_I(r_1)$, say $r$, must see both the robots $r_1$ and $r_2$. In this case, it moves toward its left. Before this move $r$ changes its color to \texttt{off} only if the horizontal distance of $r$ from $r_1$ or $r_2$ is $\lceil \frac{d}{2}\rceil -1$. Observe that due to this rule after a finite time all robots which were initially on $\mathcal{L}_I(r_1)$ with color \texttt{chord}, reach a vertical line which is $\lceil\frac{d}{2}\rceil$ distance away from $r_1$ and $r_2$ with color \texttt{off}. Now we claim that a robot with color \texttt{off} always sees at least one of $r_1$ or $r_2$ (Lemma~\ref{lemma:off decide phase 2}). By this condition, a robot with color \texttt{off}, can identify whether it is in Phase 1 or Phase 2.  Let us name the vertical line $\mathcal{L}_V(r_1)$ as $v_0$, and $v_i$ be the $i$-th vertical line on the left of $v_0$. So, after a finite time, all robots will be on $v_{\lceil \frac{d}{2} \rceil}$ strictly between $\mathcal{L}_H(r_1)$ and $\mathcal{L}_H(r_2)$ having color \texttt{off}. In this configuration, all the robots can see both $r_1$ and $r_2$ and thus can calculate the point $c$ which is equidistant from both $r_1$ and $r_2$ on $v_0$. There can be at most two robots on $v_{\lceil \frac{d}{2} \rceil}$ which are nearest to $c$. Also, if there are two such robots then there can be no other robots strictly between them on $v_{\lceil \frac{d}{2} \rceil}$. Thus a nearest robot to $c$ on $v_{\lceil \frac{d}{2} \rceil}$ can understand if it is nearest to $c$. Such a robot, say $r$,  first moves right after changing the color to \texttt{moving1} and does not move further until it sees no other robots on its left. Now , after $r$ moves, another robot, say $r'$, on $v_{\lceil \frac{d}{2} \rceil}$ moves right only if it sees $r$ with color \texttt{moving1} on its immediate right vertical line $v_{\lceil \frac{d}{2} \rceil-1}$ and sees no other robot on the left of $v_0$ strictly between $\mathcal{L}_H(r)$ and $\mathcal{L}_H(r')$. Thus there exists a time when all robots on the left of $v_0$ are on $v_{\lceil \frac{d}{2} \rceil-1}$ and have color \texttt{moving1}. We call this configuration a $(\lceil\frac{d}{2}\rceil-1)-$\textit{Left Sub Circle Configuration  ($(\lceil\frac{d}{2}\rceil-1)-$LSCC)}. More formally,
\begin{definition}[$j-$ Left Sub Circle Configuration Left ($j$-LSCC)]
    A configuration where $r_1$ and $r_2$ with color \texttt{diameter} be the only two robots on $v_0$ is called a $j$- Left sub circle configuration (Figure~\ref{fig:jLSCC}) if 
    \begin{enumerate}
        \item All robots of color \texttt{moving1} on the left of $v_0$ are the only robots on vertical line $v_j$.
        \item There are no robots strictly between $v_j$ and $v_0$.
        \item All robots on the left of $v_0$ are strictly between $\mathcal{L}_H(r_1)$ and $\mathcal{L}_H(r_2)$
        and each horizontal line contains at most one robot.
        \item All robots on left of $v_j$ (if any) must be of color \texttt{done}
    \end{enumerate}
\end{definition}

\begin{figure}[h]
    \centering
    \includegraphics[width=3.5cm]{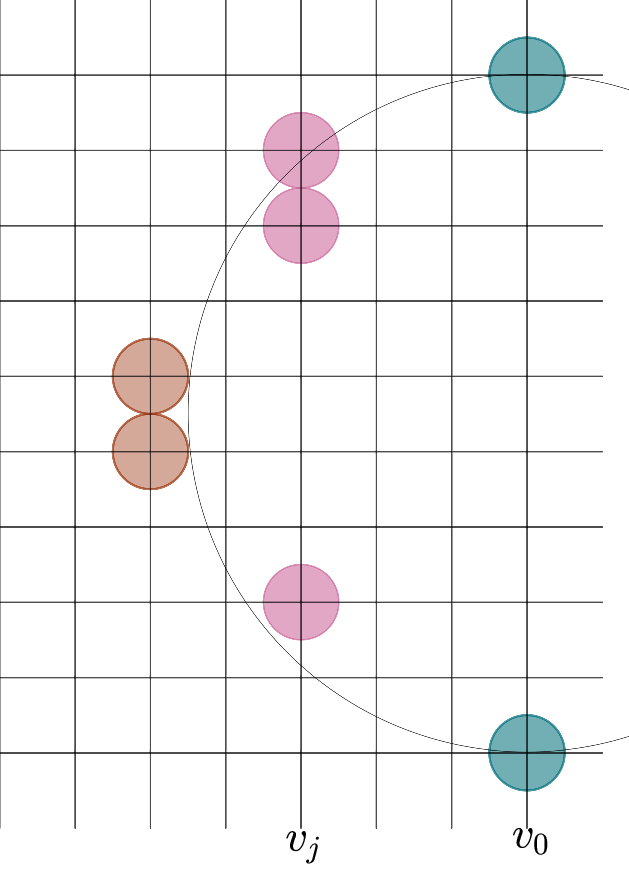}
    \caption{$j-$LSCC where on $v_0$ there are only two robots of color \texttt{diameter}, all robots of color \texttt{moving1} in the configuration  are on $v_j$ and all other robots are on left of $v_j$ with color \texttt{done}.}
    \label{fig:jLSCC}
\end{figure}
Let  $r$ be a robot on the left of $v_0$ that sees both $r_1$ and $r_2$ with color \texttt{diameter}. Then it can find out the point $c$  which is equidistant from $r_1$ and $r_2$ on $v_0$. Let $\mathcal{CIR}$ be the circle with center at $c$ and radius $\frac{d}{2}$ where $d$ is the length between $r_1$ and $r_2$ along $v_0$. Let $\mathcal{L}_{\perp}(r)$ be the line through $r$ and perpendicular to $v_0$. Let $C_r$ be the point of intersection of $\mathcal{CIR}$ and $\mathcal{L}_{\perp}(r)$. We say that a robot $r$ is on $\mathcal{CIR}$ if $0 \le distance(r, C_r) <1$ and $r$ is on left of $C_r$ on $\mathcal{L}_{\perp}(r)$. 

In a $j-$ LSCC configuration we can divide the robots of color \texttt{moving1} in two classes namely, $\mathcal{IN}_j$ and $\mathcal{OUT}_j$. We say that a robot of color \texttt{moving1} in a $j-$LSCC is in $\mathcal{IN}_j$ if it is strictly inside the circle $\mathcal{CIR}$ otherwise, it is in $\mathcal{OUT}_j$ (i.e., when the robot is either strictly outside or on the circle $\mathcal{CIR}$).

Now a robot of color \texttt{moving1} can distinguish Phase 2 from Phase 1 from the fact that a robot of color \texttt{moving1} always sees at least one robot of color \texttt{diameter} not on its vertical line in Phase 2 (Lemma~\ref{lemma:decide Phase 2 moving1}). Let $r$ be a robot of color \texttt{moving1} in Phase 2. If $r$ on some $v_j$ ($j>1$), sees both $r_1$ and $r_2$ with color \texttt{diameter} on $\mathcal{R}_I(r)$ and  no robot of color \texttt{off} or \texttt{moving1} visible to $r$ on $\mathcal{L}_I(r)$, also if  $r$ is nearest to $c$, then $r$ moves left after changing its color to \texttt{done} only if $r$ is strictly inside the circle $\mathcal{CIR}$. Otherwise, if $r$ is on or strictly outside the circle then, it moves right. Now if $j=1$ then $r$ moves left after changing the color to \texttt{done} when $r$ is strictly inside the circle, nearest to $c$ and sees no robot of color \texttt{off} or \texttt{moving1} on $\mathcal{L}_I(r)$. Otherwise, if it is on the circle, it changes its color to \texttt{done} and terminates on $\mathcal{L}_V(r) =v_1$. If $r$ with color \texttt{moving1} sees another robot $r'$ of color \texttt{moving1} on its right then it only moves right when it sees there is no other robot inside the rectangle bounded by the lines $\mathcal{L}_H(r)$, $\mathcal{L}_H(r')$, $\mathcal{L}_V(r)$ and $\mathcal{L}_V(r')$.

We now have the following observations. Also, the pseudo code of Phase 2 is presented in Algorithm~\ref{algo phase2}.

\begin{figure}[]
  \centering
   \begin{minipage}{1\linewidth}
\begin{algorithm}[H]
\small
\label{algo phase2}
     % \setstretch{0.1}
    \SetKwInOut{Input}{Input}
    \SetKwInOut{Output}{Output}
    \SetKwProg{Fn}{Function}{}{}
    \SetKwProg{Pr}{Procedure}{}{}

    \Pr{\textsc{Phase2()}}{

    $r \leftarrow$ myself\\
    $d=$ distance between two robots with color \texttt{diameter}\\
    robot with color \texttt{diameter} is on right (resp. left) open half of $r$\\
    $c=$ midpoint of two robots with color \texttt{diameter}
    %\\$\mathcal{L}_i$ is a horizontal line with $i \geq 2$.
%     \\$\mathcal{L}_r$ = horizontal line where $r$ resides.\\$\mathcal{L}_1$ = lowest horizontal line.

    %\While{\textsc{LineFormation() = False}}{

    \uIf{$r.color =$ \texttt{chord}}
         {
                \If{$r$ sees 
                  only two robots with color \texttt{diameter} on $\mathcal{R}_I(r)$ (resp. $\mathcal{L}_I(r)$)
                }
                {

         \uIf{ $r$ is terminal 
         % and there are exactly two robots on $\mathcal{R}_I(r)$ (resp. $\mathcal{L}_I(r)$) having colors \texttt{diameter} 
         }
            {
            \uIf{horizontal distance from robot with color \texttt{diameter} $= \lceil \frac{d}{2} \rceil-1$}
                 {
                 $r.color =$ \texttt{off}\\
                 move horizontally away from robot with color \texttt{diameter}
                  }
            \Else{ move horizontally away from robot with color \texttt{diameter}}

            }

                }
         
         }
         
    \uElseIf{$r.color =$ \texttt{off}}
            {
                \If{$r$ sees at least one robot of color \texttt{diameter} }
                {
                    \uIf{$r$ sees exactly two robots on $\mathcal{R}_I(r)$ (resp. $\mathcal{L}_I(r)$) with color \texttt{diameter} and $r$ is nearest to the center $c$ }
                    {
                            $r.color =$ \texttt{moving1}\;
                            move towards the robot of color \texttt{diameter}\;

                    }
                    \ElseIf{$r$ sees $ r'$, a robot of color \texttt{moving1} on $\mathcal{R}_I(r)$(resp. $\mathcal{L}_I(r)$) such that no robots between $\mathcal{L}_H(r)$ and $\mathcal{L}_H(r')$}
                    {   
                        $r.color = $ \texttt{moving1}\;
                        move horizontally towards the robot of color \texttt{diameter}\;
                    }
                }
            }

    \ElseIf{$r.color =$ \texttt{moving1}}
            {
               \If{$r$ sees at least one robot of color \texttt{diameter} not on $\mathcal{L}_V(r)$}
               {
                    \uIf{$\mathcal{R}_I(r)$ (resp. $\mathcal{L}_I(r)$) has exactly two robots of color \texttt{diameter}  }
                    {   
                        \eIf{$\mathcal{R}_I(r)$ (resp. $\mathcal{L}_I(r)$) is more than one hop away from $\mathcal{L}_V(r)$}
                        {
                            \If{$r$ is nearest to $c$}
                            {
                                \If{there is no robot of color \texttt{off} or \texttt{moving1} on $\mathcal{L}_I(r)$ (resp. $\mathcal{R}_I(r)$)}
                                {
                                    \eIf{$r$ is strictly inside of $\mathcal{CIR}$ }
                                    {
                                        $r.color=$ \texttt{done}\;
                                        move left (right resp.)\;
                                    }
                                    {
                                    move right (left resp.)\;
                                    }
                                }
                            }
                        }
                        {   
                            \If{$r$ is nearest to $c$}
                            {
                                \If{there is no robot of color \texttt{off} or \texttt{moving1} on $\mathcal{L}_I(r)$ (resp. $\mathcal{R}_I(r)$)}
                                {
                                    \eIf{$r$ is strictly inside of $\mathcal{CIR}$ }
                                    {
                                         $r.color=$ \texttt{done}\;
                                        move left (right resp.)\;
                                    }
                                    {
                                        $r.color=$ \texttt{done}
                                    }
                                }
                            }
                        }
                        
                    }

                    \ElseIf{$\mathcal{R}_I(r)$ ($\mathcal{L}_I(r)$ resp.) has a robot $r'$ of color \texttt{moving1} such that no robots on or strictly inside the rectangle bounded by  $\mathcal{L}_H(r)$, $\mathcal{L}_H(r')$, $\mathcal{L}_V(r)$ and $\mathcal{L}_V(r')$ except $r$ and $r'$}
                    {
                        move right (left resp.)\;     
                    }
               }
            }

        \ElseIf{$r.light =$ \texttt{done}}
        {
         terminate\;
        }

  }

    \caption{\textbf{Phase 2}}
    \label{leader selection}
\end{algorithm}
\end{minipage}
\end{figure}
\begin{observation}
\label{obs: chord m1 not together}
    In Phase 2, there can not be a configuration where two robots exists such that one is of color \texttt{chord} and another is color \texttt{moving1}.
\end{observation}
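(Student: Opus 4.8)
The plan is to argue purely by tracking which colours can be present during Phase~2. Phase~2 is entered only from a P1FC, and in a P1FC exactly two robots ($r_1,r_2$) are coloured \texttt{diameter} while every other robot is coloured \texttt{chord}; in particular no robot is coloured \texttt{moving1} at the instant Phase~2 begins, so the statement holds there. A direct inspection of Algorithm~\ref{algo phase2} shows that the only colour changes a robot can perform in Phase~2 are $\texttt{chord}\to\texttt{off}$, $\texttt{off}\to\texttt{moving1}$ and $\texttt{moving1}\to\texttt{done}$, and that no branch ever (re)assigns the colour \texttt{chord} (the two \texttt{diameter} robots do not execute Phase~2 at all). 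Hence each robot's colour only moves forward along the chain $\texttt{chord}\to\texttt{off}\to\texttt{moving1}\to\texttt{done}$, and the claim reduces to showing that no robot ever turns \texttt{moving1} while some robot still carries \texttt{chord}.

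I would then look at the first instant $t^\ast$ at which some robot $r$ switches its colour to \texttt{moving1}. Before $t^\ast$ no robot is \texttt{moving1}, so the only \texttt{off}$\to$\texttt{moving1} branch $r$ can be executing is the one guarded by ``$r$ sees exactly two robots on $\mathcal{R}_I(r)$ (resp.\ $\mathcal{L}_I(r)$), both coloured \texttt{diameter}'' (the alternative such branch requires $r$ to already see a \texttt{moving1} robot). Therefore $\mathcal{R}_I(r)$ (resp.\ $\mathcal{L}_I(r)$) is exactly the line $v_0$ carrying $r_1,r_2$, so at $t^\ast$ no robot sits on any vertical line strictly between $\mathcal{L}_V(r)$ and $v_0$ on $r$'s side of $v_0$. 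In parallel I would establish, straight from the \texttt{chord} and \texttt{off} branches of Algorithm~\ref{algo phase2}, the location invariant that throughout Phase~2 every \texttt{chord} robot lies on a vertical line at horizontal distance between $1$ and $\lceil d/2\rceil-1$ from $v_0$ on its own side, while $r$, being an \texttt{off} robot that has finished its previous cycle, sits at horizontal distance $\lceil d/2\rceil$ on that side: a \texttt{chord} robot only ever steps one hop \emph{away} from $v_0$ and is recoloured \texttt{off} precisely on the step that would carry it past distance $\lceil d/2\rceil-1$, and an \texttt{off} robot steps back toward $v_0$ only in the very act of becoming \texttt{moving1}. Combining the two facts, a \texttt{chord} robot on $r$'s side of $v_0$ would occupy a vertical line strictly between $\mathcal{L}_V(r)$ and $v_0$ — impossible.

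What remains is to rule out a \texttt{chord} robot on the \emph{opposite} side of $v_0$ at time $t^\ast$, and I expect this to be the only real obstacle: since the Phase~2 rules for each half of $\mathcal{CIR}$ read only that half together with $r_1,r_2$, the two halves could a priori be out of step. To close this I would combine the guard ``$r$ is nearest to the centre $c$'' with the P1FC structure to show that the first robot can turn \texttt{moving1} only once \emph{both} half-lines have been vacated down to horizontal distance $\lceil d/2\rceil$, so that no robot of the opposite half remains on a line between distances $1$ and $\lceil d/2\rceil-1$; the \texttt{chord}-location invariant above then forbids any \texttt{chord} robot at all. (If instead the statement is read per half, as Phase~2 is described, the first two paragraphs already suffice.) Finally, since colours only move forward along $\texttt{chord}\to\texttt{off}\to\texttt{moving1}\to\texttt{done}$ and no \texttt{chord} robot exists from $t^\ast$ onward, \texttt{chord} and \texttt{moving1} cannot coexist afterwards either, which completes the argument.
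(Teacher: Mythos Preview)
The paper states this as an observation without proof, so there is nothing to compare against line by line. Your first two paragraphs give a clean and correct argument for the per-half version: at the start of Phase~2 (a P1FC) no robot is \texttt{moving1}; Algorithm~\ref{algo phase2} only moves colours forward along $\texttt{chord}\to\texttt{off}\to\texttt{moving1}\to\texttt{done}$; and your location invariant (a \texttt{chord} robot on one side of $v_0$ stays at horizontal distance at most $\lceil d/2\rceil-1$, while the first $\texttt{off}\to\texttt{moving1}$ step on that side requires $\mathcal{R}_I(r)=v_0$, hence no robot strictly between $v_{\lceil d/2\rceil}$ and $v_0$ on that side) rules out any remaining \texttt{chord} robot on that side at the instant the first \texttt{moving1} appears there.

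Your third paragraph, however, does not go through. According to the Phase~2 description, the ``nearest to $c$'' test is taken among the robots on the line $v_{\lceil d/2\rceil}$ of the \emph{same} half, so it imposes no constraint across $v_0$; and the guard ``$\mathcal{R}_I(r)$ contains exactly the two \texttt{diameter} robots'' is likewise satisfied regardless of what is happening on the opposite half, since $\mathcal{R}_I(r)=v_0$ for a left-side robot at $v_{\lceil d/2\rceil}$ independently of the right side. Under an adversarial asynchronous scheduler one half can therefore finish its $\texttt{chord}\to\texttt{off}$ migration and produce a \texttt{moving1} robot while the other half still contains \texttt{chord} robots, so the literal global statement is not established by your argument (and is in fact dubious). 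This does no harm: the observation is only invoked in Lemmas~\ref{lemma: decide phase 2 chord} and~\ref{lemma:off decide phase 2}, always for robots lying in the same half of $v_0$, so the per-half version you establish in the first two paragraphs is exactly what the paper needs. Your own parenthetical remark to that effect is the right resolution; the attempted global extension should be dropped.
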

\begin{observation}
\label{obs: chord right of off}
    In Phase 2, if a configuration has a robot $r_c$ of color \texttt{chord} and a robot $r_o$ of color \texttt{off} on the left (respectively right) of $v_0$ then the $r_o$ must have to be on $H_L^C(r_c)$ (respectively $H_R^C(r_c)$).
\end{observation}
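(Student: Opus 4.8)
The plan is to pin down exactly which vertical lines a robot of color \texttt{chord} and a robot of color \texttt{off} can occupy on the left of $v_0$ during Phase~2, and then compare them; the right side is completely symmetric. Recall that Phase~2 starts from a P1FC, so at that moment every robot lying to the left of $v_0$ is on $v_1 = \mathcal{L}_I(r_1)$ and has color \texttt{chord}. I would prove the following two invariants, each holding at every instant of Phase~2. \textbf{Invariant A:} every robot of color \texttt{chord} on the left of $v_0$ is on one of $v_1,\dots,v_{\lceil \frac{d}{2}\rceil-1}$ (horizontal distance at most $\lceil \frac{d}{2}\rceil-1$ from $v_0$). \textbf{Invariant B:} every robot of color \texttt{off} on the left of $v_0$ is on $v_{\lceil \frac{d}{2}\rceil-1}$ or $v_{\lceil \frac{d}{2}\rceil}$ (horizontal distance at least $\lceil \frac{d}{2}\rceil-1$ from $v_0$).

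For Invariant~A I would induct on the number of moves performed since the beginning of Phase~2. Inspecting the \texttt{chord}-branch of Algorithm~\ref{algo phase2}, a \texttt{chord} robot on the left of $v_0$ moves only when it is terminal on its vertical line, and then by exactly one hop \emph{away} from $v_0$; it recolors itself \texttt{off} \emph{before} that step precisely when its horizontal distance from a \texttt{diameter} robot equals $\lceil \frac{d}{2}\rceil-1$. Since the \texttt{diameter} robots never move, this distance is constant during the whole \textsc{Look}-\textsc{Compute}-\textsc{Move} cycle of the robot, so no staleness of the snapshot affects it. Hence a robot can reach $v_j$ while still colored \texttt{chord} only by stepping left from $v_{j-1}$ with $j-1 \ne \lceil \frac{d}{2}\rceil-1$; by the induction hypothesis $j-1 \le \lceil \frac{d}{2}\rceil-1$, so $j-1 < \lceil \frac{d}{2}\rceil-1$ and thus $j \le \lceil \frac{d}{2}\rceil-1$. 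With the base case (all \texttt{chord} robots initially on $v_1$), this proves Invariant~A.

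For Invariant~B I would note that the only rule in Algorithm~\ref{algo phase2} assigning the color \texttt{off} is the conditional inside the \texttt{chord}-branch, which triggers exactly when the robot is on $v_{\lceil \frac{d}{2}\rceil-1}$ and, in the same cycle, moves it one hop left to $v_{\lceil \frac{d}{2}\rceil}$; and once a robot is colored \texttt{off} it does not move until it switches to \texttt{moving1} (every move in the \texttt{off}-branch is issued jointly with that color change). Therefore a robot of color \texttt{off} on the left of $v_0$ is either momentarily on $v_{\lceil \frac{d}{2}\rceil-1}$ or at rest on $v_{\lceil \frac{d}{2}\rceil}$. (If $\lceil \frac{d}{2}\rceil-1 = 0$, i.e.\ $d=2$, this trigger can never be met, because \texttt{chord} robots never reach $v_0$; hence no robot is ever colored \texttt{off} on the left and the statement is vacuous.)

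Combining the invariants finishes the proof: if a configuration contains a \texttt{chord} robot $r_c$ on $v_{j'}$ and an \texttt{off} robot $r_o$ on $v_j$, both on the left of $v_0$, then $j' \le \lceil \frac{d}{2}\rceil-1 \le j$, so $r_o$ lies on $\mathcal{L}_V(r_c)$ or strictly to its left, i.e.\ $r_o \in H_L^C(r_c)$; the mirror argument gives the right-side claim. The step I expect to be the main obstacle is making the induction for Invariant~A fully rigorous under the asynchronous scheduler — especially excluding the possibility that a \texttt{chord} robot carrying a move computed from a stale snapshot overshoots $v_{\lceil \frac{d}{2}\rceil-1}$ without first recoloring — together with a careful check of the small-diameter instances where $\lceil \frac{d}{2}\rceil-1$ collapses onto the starting line $v_1$.
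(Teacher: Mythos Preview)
The paper states this result as an \emph{observation} and gives no proof at all; it simply records the claim between the description of Phase~2 and the correctness lemmas, trusting the reader to see it directly from the movement rules of Algorithm~\ref{algo phase2}. Your argument is a correct and careful unpacking of exactly that intuition: Invariant~A (a \texttt{chord} robot on the left of $v_0$ sits on some $v_j$ with $j\le\lceil d/2\rceil-1$) and Invariant~B (an \texttt{off} robot on the left sits on $v_{\lceil d/2\rceil-1}$ or $v_{\lceil d/2\rceil}$) are precisely what the algorithm enforces, and together they give $r_o\in H_L^C(r_c)$ immediately. Your handling of asynchrony is also sound, since the only quantity a \texttt{chord} robot tests before recoloring is its horizontal distance to a \texttt{diameter} robot, and those robots have already terminated; so the snapshot cannot be stale in any way that matters. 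In short, there is nothing to compare against: the paper omits the proof, and your proposal supplies a valid one along the obvious lines.
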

\begin{observation}
\label{obs: moving1 right of off}
    In Phase 2, if a configuration has a robot $r_m$ of color \texttt{moving1} and a robot $r_o$ of color \texttt{off} on the left (respectively right) of $v_0$ then the $r_o$ must have to be on $H_L^C(r_m)$ (respectively $H_R^C(r_m)$). 
\end{observation}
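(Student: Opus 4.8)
The plan is to pin down exactly which columns of the left half $H_L^O(r_1)$ can carry a robot of color \texttt{off} and which can carry a robot of color \texttt{moving1} at a moment when both colors are simultaneously present in Phase~2; the inclusion $r_o\in H_L^C(r_m)$ will then be immediate. First I would record where \texttt{off} robots can live. By the Phase~2 rule for \texttt{chord} robots, a robot on the left of $v_0$ first acquires color \texttt{off} only as a terminal \texttt{chord} robot sitting on the column at horizontal distance $\lceil\frac{d}{2}\rceil-1$ from the diameter, i.e. on $v_{\lceil\frac{d}{2}\rceil-1}$, after which it moves one hop left onto $v_{\lceil\frac{d}{2}\rceil}$; and a robot of color \texttt{off} on the left of $v_0$ never moves further left than $v_{\lceil\frac{d}{2}\rceil}$ and never returns to color \texttt{chord}. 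Hence at every time a robot of color \texttt{off} on the left of $v_0$ is on $v_{\lceil\frac{d}{2}\rceil}$, with the sole transient exception of the instant inside the \textsc{Move} phase in which it has just changed color from \texttt{chord}, when it is still on $v_{\lceil\frac{d}{2}\rceil-1}$.

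Next I would argue that the presence of any \texttt{moving1} robot on the left of $v_0$ removes that transient exception and pins the \texttt{off} robots. The first robot to change color from \texttt{off} to \texttt{moving1} does so only when it sees exactly the two \texttt{diameter} robots on its right immediate line, i.e. when $\mathcal{R}_I(r)=v_0$; this forces all of $v_1,\dots,v_{\lceil\frac{d}{2}\rceil-1}$ to be empty at that moment, so every former \texttt{chord} robot on the left has already settled on $v_{\lceil\frac{d}{2}\rceil}$ with color \texttt{off}. Combined with Observation~\ref{obs: chord m1 not together}, once any \texttt{moving1} robot is present on the left there is no \texttt{chord} robot on the left, hence no \texttt{off} robot is in the transient position on $v_{\lceil\frac{d}{2}\rceil-1}$, so every \texttt{off} robot on the left then lies exactly on $v_{\lceil\frac{d}{2}\rceil}$. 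I would then follow the \texttt{moving1} robots while \texttt{off} robots still exist: the rule turning \texttt{off} into \texttt{moving1} moves that robot one hop right, from $v_{\lceil\frac{d}{2}\rceil}$ to $v_{\lceil\frac{d}{2}\rceil-1}$, and its guard forbids it to move again until it sees no robot to its left inside $H_L^O(r_1)$ — that is, until every \texttt{off} robot has joined it on $v_{\lceil\frac{d}{2}\rceil-1}$, at which moment the configuration is a $(\lceil\frac{d}{2}\rceil-1)$-LSCC and no \texttt{off} robot remains. Moreover the LSCC dynamics (robots of $\mathcal{IN}_j$ change color to \texttt{done} and move left, robots of $\mathcal{OUT}_j$ move right) never produce a robot of color \texttt{off}. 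So whenever a \texttt{moving1} robot $r_m$ and an \texttt{off} robot $r_o$ coexist on the left of $v_0$, we have $r_m$ on $v_{\lceil\frac{d}{2}\rceil}$ (the instant after its color change but before its hop, hence on the same column as the remaining \texttt{off} robots) or on $v_{\lceil\frac{d}{2}\rceil-1}$, while $r_o$ is on $v_{\lceil\frac{d}{2}\rceil}$; either way $r_o$ lies on $\mathcal{L}_V(r_m)$ or strictly to its left, i.e. $r_o\in H_L^C(r_m)$. The right-half claim is symmetric.

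I expect the main obstacle to be the bookkeeping around the asynchronous \textsc{Move} phase, namely ruling out an \texttt{off} robot being observed on a column strictly to the right of $v_{\lceil\frac{d}{2}\rceil}$ while a \texttt{moving1} robot is present. The only way an \texttt{off} robot can momentarily be to the right of $v_{\lceil\frac{d}{2}\rceil}$ (on $v_{\lceil\frac{d}{2}\rceil-1}$) is as a \texttt{chord} robot mid-transition, and this is exactly what the second step excludes once some \texttt{moving1} robot exists; getting the order right — first fix the possible positions of \texttt{off} robots, then use that the mere existence of a \texttt{moving1} robot has already emptied $v_1,\dots,v_{\lceil\frac{d}{2}\rceil-1}$ of \texttt{chord} robots — is what makes the geometric conclusion routine.
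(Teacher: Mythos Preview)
Your argument is correct. The paper states this result as an Observation and gives no proof, so there is nothing to compare against; your write-up supplies the details the paper omits. Your column-by-column analysis---pinning \texttt{off} robots to $v_{\lceil d/2\rceil}$, using Observation~\ref{obs: chord m1 not together} to exclude lingering \texttt{chord} robots once a \texttt{moving1} robot exists, and checking that the guard on \texttt{moving1} (no \texttt{off}/\texttt{moving1} on $\mathcal{L}_I(r)$) prevents any \texttt{moving1} robot from leaving $v_{\lceil d/2\rceil-1}$ while an \texttt{off} robot remains---matches how the surrounding text of Phase~2 describes the intended evolution, and your handling of the asynchronous transient (color changed, move still pending) is appropriate under the paper's \textsc{Move}-phase model.
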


\subsubsection{Correctness of Phase 2}
To prove the correctness of Phase 2 we have to prove two things. \begin{enumerate}
    \item For a robot $r$ on the left(resp. right) of the agreed diameter i.e., $v_0$, if $r$ terminates, then it terminates on the grid point on the circumference of the agreed circle on left (resp. right) of $v_0$ which is on $\mathcal{L}_H(r)$ (Lemma~\ref{lemma: terminating grid point})
    \item Except  $r_1$ and $r_2$ (the robots of color \texttt{diameter}), all other robots terminate in Phase 2. (Lemma~\ref{lemma: all robot terminates}).

\end{enumerate}
For $r_1$ and $r_2$, they are already terminated on Phase 1 on the two endpoints of the agreed diameter, so they are also on the circle.
Now to prove these two things we have some other lemmas that will be useful for the proof. In the following, we prove these lemmas along with the two above-mentioned results and summarize the main result in Theorem~\ref{thm: final}. 
\begin{lemma}
 \label{lemma: decide phase 2 chord}
In Phase 2, a robot $r$ with color \texttt{chord}  always sees a robot of color \texttt{diameter}. 
\end{lemma}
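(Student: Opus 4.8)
The plan is to reduce the statement to a short geometric claim about a single \texttt{chord} robot, once the shape of a Phase~2 configuration that still contains a \texttt{chord} robot has been pinned down. Inspecting Algorithm~\ref{algo phase2}, no robot ever assigns itself the colour \texttt{diameter} during Phase~2, and Phase~2 starts from a P1FC in which exactly the (already terminated) robots $r_1,r_2$ have colour \texttt{diameter}; hence throughout Phase~2 the \texttt{diameter} robots are precisely $r_1$ and $r_2$, and it suffices to show that every \texttt{chord} robot sees $r_1$ or $r_2$. Using the left--right symmetry of the algorithm I may assume the \texttt{chord} robot $r$ lies to the left of $v_0$; put $r=(-j,0)$ and let $r^+=(0,a)$, $r^-=(0,-b)$ be the \texttt{diameter} robots of, respectively, larger and smaller ordinate than $r$.

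\textbf{Structural invariants.} First I would prove, for every Phase~2 configuration containing a \texttt{chord} robot: (i) every Phase~2 move is a one-hop horizontal step, so ordinates never change; therefore each horizontal line strictly between $\mathcal{L}_H(r_1)$ and $\mathcal{L}_H(r_2)$ carries at most one robot, those two lines carry only $r_1$ and $r_2$, and $r$ stays strictly between them, so $a,b\ge1$. (ii) By Observation~\ref{obs: chord m1 not together} no \texttt{moving1} robot is present; and since a \texttt{done} robot must have been a \texttt{moving1} robot at an earlier Phase~2 instant at which the \texttt{chord} robot -- never freshly created in Phase~2 -- was already present, Observation~\ref{obs: chord m1 not together} also forbids \texttt{done} robots, so every robot is \texttt{off}, \texttt{chord} or \texttt{diameter}; in particular, while a \texttt{chord} robot is present, no robot is moving toward $v_0$ and $v_0$ still carries only $r_1,r_2$. (iii) A left \texttt{chord} robot steps away from $v_0$ only when its guard forces every vertical line strictly between its own line and $v_0$ to be empty; combined with (ii) (nothing moves toward $v_0$ now, and a vacated inner line stays vacated) this shows that the vertical lines carrying left \texttt{chord} robots form, at all times, at most two consecutive lines $v_k,v_{k+1}$ ($1\le k$, $k+1\le\lceil d/2\rceil-1$, $d=a+b$) being peeled outward, and that every \texttt{off} robot -- which by Observation~\ref{obs: chord right of off} sits on $v_{\lceil d/2\rceil-1}$ or $v_{\lceil d/2\rceil}$ -- lies on $v_{k+1}$ or strictly to its left. (iv) Finally, since $r$ arrived on its current line from the line one hop closer to $v_0$ while being a \emph{terminal} robot there, and nothing later repopulates that line, all robots on the line one hop from $r$ toward $v_0$ lie weakly on a single side of $\mathcal{L}_H(r)$ (vacuously so if that line is $v_0$ or is empty).

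\textbf{Geometry.} By (i)--(iii), the only robots in the vertical strip $[-j,0]\times\mathbb{R}$ are $r^+,r^-$ on $v_0$, the robots on $v_{j-1}$, and robots other than $r$ on $v_j$ (the last at ordinate $\ge1$ or $\le-1$); the lines $v_1,\dots,v_{j-2}$ are empty. By (iv), either every robot on $v_{j-1}$ has ordinate $\le-1$, or every one has ordinate $\ge1$; assume the former (the latter is symmetric, with $r^-$ playing the role of $r^+$). Then the straight segment from the right-most point $(-j+rad,\,0)$ of $r$ to the right-most point $(rad,\,a)$ of $r^+$ is unobstructed: it lies weakly to the right of the line $x=-j$ and touches it only at its start point, so its distance to the centre of any robot on $v_j$ other than $r$ exceeds $rad$; its ordinate stays $\ge 0$, strictly above every disk centred on $v_{j-1}$ (whose points have ordinate $\le-1+rad<0$); it meets nothing on the empty lines $v_1,\dots,v_{j-2}$; and the only robot on $v_0$ it can meet is its target $r^+$. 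Hence $r$ sees $r^+$, which proves the lemma.

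The substantive work is items (iii)--(iv): establishing the ``peeling block'' of \texttt{chord} robots and its one-sidedness, which hinges on the fact that with no \texttt{moving1} or \texttt{done} robot present no robot can step onto a vertical line lying between an existing \texttt{chord} robot and $v_0$. Given those invariants the geometry is a routine distance estimate; the one point to watch is that the line of sight must be taken between suitable \emph{boundary} points of $r$ and $r^+$ (a centre-to-centre segment can be blocked when $rad=\tfrac12$), which is precisely why the right-most points are used above.
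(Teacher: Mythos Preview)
Your argument is correct in substance but takes a genuinely different route from the paper's. The paper argues by contradiction in a few lines: if $r$ on $v_{i_1}$ saw neither \texttt{diameter} robot, there would be obstructing robots $r_1',r_2'$ strictly between $r$ and $v_0$, which (via Observations~\ref{obs: chord m1 not together} and~\ref{obs: chord right of off}) must both be \texttt{chord}; it then shows they lie on a common vertical line $v_{i_2}$, and since $r$ necessarily passed through $v_{i_2}$ as a \emph{terminal} robot with $\mathcal{R}_I(r)=v_0$, both blockers were already on $v_{i_2}$ at that moment, so $r$ could not have been terminal---contradiction. This is a short temporal trace using only the guard for a single move.

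You instead build a global structural picture---the ``peeling block'' invariant (iii) that the left \texttt{chord} robots occupy at most two consecutive verticals, and the one-sidedness (iv) of $v_{j-1}$ relative to $r$---and then exhibit an explicit unobstructed segment between boundary points of $r$ and $r^{+}$. This is heavier but more informative: your invariants say strictly more than the lemma needs and are reusable, and the explicit line of sight handles the fat-robot geometry directly rather than leaving the ``there must be a blocker in the rectangle'' step implicit. Two small presentational wrinkles worth tightening: when $j=1$ you declare (iv) vacuous for $v_{j-1}=v_0$ yet still write ``whose points have ordinate $\le -1+rad$''---the case is in fact covered by your separate treatment of $v_0$, so a one-line case split would clean it up; and your blanket claim in (ii) that ``every robot is \texttt{off}, \texttt{chord} or \texttt{diameter}'' is only guaranteed on $r$'s side of $v_0$ (the opposite half can progress independently), but since your sight line stays at $x\le rad<1-rad$ this does not affect the conclusion.
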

\begin{proof}
Let $r_1$ and $r_2$ be the only two robots of color \texttt{diameter} on the vertical line $v_0$ at a time, say $t$ in Phase 2.
Let at time $t$, $r$ be a robot of color \texttt{chord} on $v_{i_1}$ on the left of $v_0$ that can not see both $r_1$ and $r_2$. Then there must exist two robots $r_1'$ and $r_2'$ strictly inside the rectangles bounded by $v_{i_1}$, $\mathcal{L}_H(r)$, $\mathcal{L}_H(r_1)$, $v_0$ and $v_{i_1}$, $\mathcal{L}_H(r)$, $\mathcal{L}_H(r_2)$, $v_0$ respectively (Figure~\ref{fig:Chord sees at least a diameter}). Note that $r_1'$ and $r_2'$ must have color \texttt{chord} (due to Observation~\ref{obs: chord m1 not together} and Observation~\ref{obs: chord right of off}). Let $\mathcal{L}_V(r_1')$ is the vertical line $v_{i_2}$ and $\mathcal{L}_V(r_2')$ is the vertical line $v_{i_3}$ where $i_1 > i_2 \ge i_3 > 0$. We claim that $i_2 = i_3$. Otherwise, let at time $t$, $i_2 > i_3$. Now note that, $r$ must have moved left from $v_{i_2}$ before $t$ as $i_1 > i_2$. Thus, there must exist a time $t_1 < t$ when $r$ gets activated on $v_{i_2}$ and sees it is terminal on $v_{i_2}$ and sees $r_1$ and $r_2$ on $\mathcal{R}_I(r)$ and then moves left to $v_{i_1}$. Thus, at $t_1$,  $\mathcal{R}_I(r) = v_0$. This implies $r_2'$ must be on the left or on $v_{i_2}$ at time $t_1$. This is not possible as $r_2'$ is on $v_{i_3}$ at time $t > t_1$ and a robot of color \texttt{chord} on the left of $v_0$ never moves right. Thus at time $t$, $r_1'$ and $r_2'$ must be on same vertical line say $v_{i_2}$. Now at $t$, $r$ is on $v_{i_1}$ which is on the left of $v_{i_2}$. This implies there exists a time before $t$ when $r$ moved left from $v_{i_2}$. Thus there exists a time $t_2 < t$ when $r$ is activated on $v_{i_2}$ and sees it is terminal on $v_{i_2}$ and there is no robots between $v_{i_2}$ and $v_0$. This implies at $t_2$, $r_1'$ and $r_2'$ must be on $v_{i_2}$. So at $t_2$, $r$ can not be terminal on $v_{i_2}$. The contradiction arises because our assumption that $r$ can not see both $r_1$ and $r_2$ is false. Hence, $r$ must see at least one robot of color \texttt{diameter} in Phase 2.  

\begin{figure}[h]
    \centering
    \includegraphics[width=4cm]{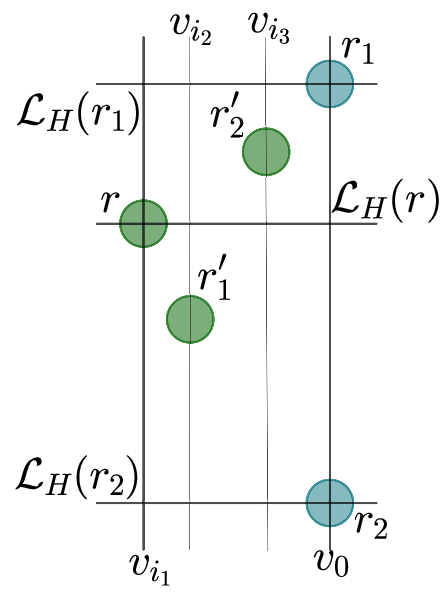}
    \caption{$r_1$ and $r_2$ is on $v_0$ with color \texttt{diameter}, $r$ is on $v_{i_1}$ with color \texttt{chord} which can not see $r_1$ and $r_2$ due to $r_1'$ on $v_{i_2}$ and $r_2'$ on $v_{i_3}$. }
    \label{fig:Chord sees at least a diameter}
\end{figure}
\end{proof}
\begin{lemma}
\label{lemma:off decide phase 2}
    In Phase 2, a robot of color \texttt{off} always sees a robot of color \texttt{diameter}.
\end{lemma}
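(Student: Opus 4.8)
The plan is to follow the template of the proof of Lemma~\ref{lemma: decide phase 2 chord}. Fix a time $t$ in Phase~2 and, without loss of generality, let $r$ be a robot of colour \texttt{off} on the left of the agreed diameter $v_0$, and let $r_1,r_2$ be the two robots of colour \texttt{diameter} on $v_0$. Recall from the description of Phase~2 that every robot of colour \texttt{off} sits on $v_{\lceil d/2\rceil}$ and strictly between $\mathcal{L}_H(r_1)$ and $\mathcal{L}_H(r_2)$; hence exactly one of $r_1,r_2$ — say $r_2$ — lies strictly above $\mathcal{L}_H(r)$ and the other, $r_1$, strictly below. It therefore suffices to show that the line of sight from $r$ to whichever of $r_1,r_2$ lies on the ``clear'' vertical side of $r$ is unobstructed.

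The core of the argument is a structural claim about the left-side robots whenever a left robot of colour \texttt{off} is present. Using the Phase~2 description together with Observations~\ref{obs: chord m1 not together}, \ref{obs: chord right of off} and~\ref{obs: moving1 right of off}, I would establish: (i) a robot of colour \texttt{off} sits on $v_{\lceil d/2\rceil}$; (ii) every left robot of colour \texttt{chord} or \texttt{moving1} then sits on $v_{\lceil d/2\rceil-1}$ — a \texttt{chord} robot turns \texttt{off} only after it reaches $v_{\lceil d/2\rceil-1}$ with its $\mathcal{R}_I$ equal to $v_0$, so at that moment $v_1,\dots,v_{\lceil d/2\rceil-2}$ are empty and stay empty (nothing moves rightward into them), and a \texttt{moving1} robot cannot advance past $v_{\lceil d/2\rceil-1}$ while some robot of colour \texttt{off} still occupies $v_{\lceil d/2\rceil}$, since then $\mathcal{L}_I$ of the \texttt{moving1} robot carries a robot of colour \texttt{off}, which the algorithm forbids for its rightward and \texttt{done} moves; (iii) no left robot of colour \texttt{done} coexists with a left robot of colour \texttt{off}, because left \texttt{done} robots appear only after a Left Sub Circle Configuration has formed, which requires every left robot to be of colour \texttt{moving1}. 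Finally, since left robots never change their height, and a \texttt{chord} robot turns \texttt{off} only while it is a terminal (top- or bottom-most) robot of $v_{\lceil d/2\rceil-1}$, while the \texttt{off}-to-\texttt{moving1} conversions zip contiguously outward from the robot nearest the centre $c$, the robots still on $v_{\lceil d/2\rceil-1}$ occupy at all times a contiguous ``middle'' band of heights, and $r$ — having left $v_{\lceil d/2\rceil-1}$ from one of its ends, or having never been placed in that band — lies strictly above all of them or strictly below all of them.

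Given the structural claim, the conclusion is quick. Suppose for contradiction that $r$ sees neither $r_1$ nor $r_2$. Exactly as in Lemma~\ref{lemma: decide phase 2 chord}, there must then be a robot strictly inside the rectangle bounded by $\mathcal{L}_V(r),\mathcal{L}_H(r),\mathcal{L}_H(r_1),v_0$ and another strictly inside the rectangle bounded by $\mathcal{L}_V(r),\mathcal{L}_H(r),\mathcal{L}_H(r_2),v_0$; the first lies strictly between $v_0$ and $\mathcal{L}_V(r)$ and strictly below $\mathcal{L}_H(r)$, the second strictly between $v_0$ and $\mathcal{L}_V(r)$ and strictly above $\mathcal{L}_H(r)$. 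Thus there is a left robot strictly between $v_0$ and $\mathcal{L}_V(r)$ on each side of $\mathcal{L}_H(r)$, contradicting the structural claim that all such robots lie on one side of $r$. Hence $r$ sees a robot of colour \texttt{diameter}; the case of $r$ on the right of $v_0$ is symmetric.

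I expect the main obstacle to be the structural claim of the second paragraph: pinning every ``in-between'' robot to $v_{\lceil d/2\rceil-1}$ and to a single side of $r$ requires careful bookkeeping of the asynchronous Phase~2 dynamics (the line-by-line cascade, the peel-offs from the two ends of $v_{\lceil d/2\rceil-1}$, the ``stuck'' \texttt{moving1} robots, and the contiguity of the zip-up), and one must check that asynchrony cannot momentarily place a robot in $v_1,\dots,v_{\lceil d/2\rceil-2}$ or on the wrong side of $r$. Once that is settled, the fat-robot visibility part is immediate, since any blocker of $r$'s view of $r_2$ would have to sit strictly between $v_0$ and $\mathcal{L}_V(r)$ and strictly above $r$, a position the claim rules out.
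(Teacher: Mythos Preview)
Your argument is correct and reaches the same conclusion as the paper, but the structure is different. The paper proceeds by a case split on the colour of the two hypothetical obstructors $r_1',r_2'$: if both are \texttt{chord} it invokes the past-position tracking of Lemma~\ref{lemma: decide phase 2 chord} (the off robot $r$ once sat on their vertical line and was terminal there, a contradiction), and if both are \texttt{moving1} it shows that the first robot from the ``lower'' half to convert would have had to see a \texttt{moving1} robot in the ``upper'' half with $r$ sandwiched between them, violating the algorithm's guard. You instead prove a single structural invariant---while a left \texttt{off} robot exists, every other left robot lies on $v_{\lceil d/2\rceil-1}$ and the set of heights occupied there is a contiguous band not containing $r$'s height---and derive the obstruction contradiction from that.

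What this buys: your invariant approach is cleaner and handles the \texttt{chord} and \texttt{moving1} situations uniformly; the paper's version avoids having to argue contiguity in the \texttt{chord} case by reusing an earlier lemma. For the \texttt{moving1} case the two arguments are essentially the same idea (the zip-up from the robot nearest $c$ cannot jump over $r$); the genuine divergence is only in the \texttt{chord} case, where you use ``terminals peel off from the ends, so the remaining block is a middle band'' while the paper tracks $r$'s own history through that line. Your self-identified obstacle---verifying that asynchrony never places a robot on $v_1,\dots,v_{\lceil d/2\rceil-2}$ nor on the wrong side of $r$---is real but surmountable: the key points are that a \texttt{chord} robot only moves (and only turns \texttt{off}) when its $\mathcal{R}_I$ is $v_0$, forcing all \texttt{chord} robots onto $v_{\lceil d/2\rceil-1}$ before any \texttt{off} appears, and that a \texttt{moving1} robot on $v_{\lceil d/2\rceil-1}$ is blocked from moving or recolouring as long as its $\mathcal{L}_I$ carries an \texttt{off} robot.
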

\begin{proof}
Let $r_1$ and $r_2$ be the only two robots of color \texttt{diameter} on the vertical line $v_0$ at a time, say $t$ in Phase 2.
    Let $r$ be a robot of color \texttt{off} in Phase 2 that does not see $r_1$ and $r_2$ at $t$. So at time $t$, $r$ must be on $v_{\lceil \frac{d}{2} \rceil}$ and there must be two robots $r_1'$ and $r_2'$ strictly inside the rectangles bounded by $\mathcal{L}_H(r)$, $\mathcal{L}_H(r_1)$, $v_0, v_{\lceil \frac{d}{2} \rceil}$ and $\mathcal{L}_H(r)$, $\mathcal{L}_H(r_2)$, $v_0, v_{\lceil \frac{d}{2} \rceil}$ respectively (Figure~\ref{fig:Off sees at least one Diameter}). 
     Now, $r_1'$ and $r_2'$ both can be of color either \texttt{chord} or of color \texttt{moving1} (by Observation~\ref{obs: chord m1 not together} Observation~\ref{obs: chord right of off}, Observation~\ref{obs: moving1 right of off}). Now, if both of $r_1'$ and $r_2'$ is of color \texttt{chord}, then we reach contradiction by arguing similarly as in Lemma~\ref{lemma: decide phase 2 chord}. So, let us consider the case where both $r_1'$ and $r_2'$ are of color \texttt{moving1} at time $t$. 
    
    Let at $t$, $r$ is on $v_{\lceil \frac{d}{2}\rceil}$. Also, $r_1'$ and $r_2' $ are on  right of $v_{\lceil \frac{d}{2}\rceil}$. $\mathcal{L}_H(r)$ divides the grid in two halves. Let the half where $r_1'$ is located at time $t$ be denoted as the upper half and the half where $r_2'$ is located at $t$ be denoted as the lower half. Now there exists a time $t' < t $ such that all robots on the left of $v_0$ have color \texttt{off} and are on $v_{\lceil \frac{d}{2}\rceil}$. The first robot that moves right from $v_{\lceil \frac{d}{2}\rceil}$ after changing color to \texttt{moving1} must be nearest to $c$ at time $t'$. Let $r_{i_1}$ be such a robot. Note that $r_{i_1}$ is not $r$. So without loss of generality let it is in the upper half at $t'$. Now there can be at most another robot, say $r_{i_2}$, which is also nearest to $c$ at time $t'$. If this is the case then $r_{i_2}$ must be in the upper half at time $t'$ as otherwise $r$ becomes nearer to $c$ than $r_{i_1}$ and $r_{i_2}$ contrary to the assumption. So, the first robot, say $r_f$, from lower half that moves right from $v_{\lceil\frac{d}{2}\rceil}$ must have moved after seeing a robot, say $r_f'$ of color  \texttt{moving1} on $\mathcal{R}_I(r_f)$ such that there is no other robots strictly between $\mathcal{L}_H(r_f)$ and $\mathcal{L}_H(r_f')$ at some time $t_1$ where $t > t_1 > t' $. Note that $r_f'$ must be on the upper half at $t_1$. So, at $t_1$, the region strictly between $\mathcal{L}_H(r_f)$ and $\mathcal{L}_H(r_f')$ can not be empty as $r$ is there. So, we arrive at a contradiction due to the wrong assumption that there exists a time $t$ such that in the configuration at time $t$ there exist a robot of color \texttt{off} that can not see both $r_1$ and $r_2$ of color \texttt{diameter}. Thus any robot of color \texttt{off} always sees at least one robot of color \texttt{diameter}. 

    \begin{figure}[h]
        \centering
        \includegraphics[width=4cm]{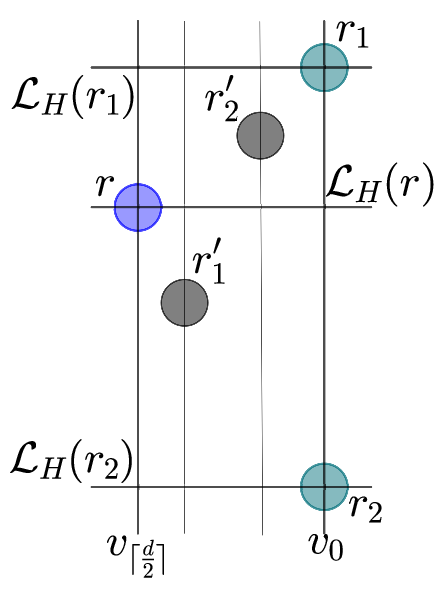}
        \caption{$r_1$ and $r_2$ is on $v_0$ with color \texttt{diameter}, $r$ is on $v_{\lceil \frac{d}{2}\rceil}$ with color \texttt{off} which can not see $r_1$ and $r_2$ due to $r_1'$ and $r_2'$.}
        \label{fig:Off sees at least one Diameter}
    \end{figure}
 \end{proof}

Now, we have to prove that a robot of color \texttt{moving1} always sees at least one robot of color \texttt{diameter} in Phase 2 which is not on its own vertical line. This will ensure that a robot of color \texttt{moving1} distinguishes Phase 2 from Phase 1. This is because in Phase 1 even if a robot of color \texttt{moving1} sees a robot of color \texttt{diameter} it must be on its own vertical line.  To prove this we have to prove the following lemma first.
\begin{lemma}
\label{lemma: master}
    Let at a time $t$, a configuration is a $j-$LSCC, where $j >1$ and $\mathcal{OUT}_j \ne \phi$. Then,
    \begin{enumerate}
        \item there exists a time $t_1 \ge t$ such that at time $t_1$, the configuration is again a $j-$LSCC  where $\mathcal{IN}_j = \phi$ and $\mathcal{OUT}_j$ at time $t_1 =   \mathcal{OUT}_j$ at time $t$.
        \item Moreover, there is another time $t_2 > t_1$ such that at $t_2$, the configuration is a $(j-1)-$LSCC and $\mathcal{IN}_{j-1} \cup \mathcal{OUT}_{j-1}$ at time $t_2 = \mathcal{OUT}_j$ at time $t_1$. 
    \end{enumerate}
\end{lemma}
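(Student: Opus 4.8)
\emph{Plan.} I would prove the two parts separately; in each, robots are ``peeled'' off $v_j$ one activation at a time, and the work is to check that every intermediate configuration is clean enough that only the intended robot(s) are enabled. We look only at the half $H_L^O(r_1)$, and ``\texttt{moving1} robot'' below means one on the left of $v_0$. Fix the $j$-LSCC at time $t$. By clauses $(1)$--$(2)$ of the definition, no robot lies strictly between $v_j$ and $v_0$ and the \texttt{moving1} robots are exactly the robots on $v_j$; hence for every such $r$ we have $\mathcal{R}_I(r)=v_0$, which is $j\ge2$ hops away and holds precisely $r_1,r_2$ with colour \texttt{diameter}. By Lemma~\ref{lemma:decide Phase 2 moving1} (and its proof) $r$ sees both $r_1$ and $r_2$, hence computes $c$ and $\mathcal{CIR}$. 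Also $\mathcal{L}_I(r)$, if any, lies on or left of $v_{j+1}$ and, by clause $(4)$, carries only \texttt{done} robots, so the guard ``no robot of colour \texttt{off} or \texttt{moving1} on $\mathcal{L}_I(r)$'' of Algorithm~\ref{algo phase2} is satisfied. Finally, a \texttt{moving1} robot on $v_j$ lies in $\mathcal{IN}_j$ iff its height is within $\sqrt{(d/2)^2-j^2}$ of $c$ (all robots of $v_j$ sharing horizontal distance $j$ to $v_0$), so every $\mathcal{IN}_j$ robot is strictly nearer to $c$ than every $\mathcal{OUT}_j$ robot.

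\emph{Part 1.} Induct on $|\mathcal{IN}_j|$. If $\mathcal{IN}_j=\phi$, put $t_1=t$. Otherwise the \texttt{moving1} robot(s) on $v_j$ nearest to $c$ lie in $\mathcal{IN}_j$, hence are strictly inside $\mathcal{CIR}$; all guards above hold, so an activation of such a robot $r^{*}$ recolours it to \texttt{done} and moves it one step left to $v_{j+1}$ — a collision-free move by clause $(3)$ (at most one robot per horizontal line). Afterwards the configuration is once more a $j$-LSCC with the same $\mathcal{OUT}_j$ and with $|\mathcal{IN}_j|$ decreased by one (or by two, in the symmetric case where two robots are equidistant from $c$). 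Meanwhile no $\mathcal{OUT}_j$ robot can move: while $\mathcal{IN}_j\neq\phi$ it is never the robot nearest to $c$, and the only other branch of the Phase~2 code available to a \texttt{moving1} robot here would require a \texttt{moving1} robot on $\mathcal{R}_I(r)=v_0$, which carries only the two \texttt{diameter} robots; thus an activated $\mathcal{OUT}_j$ robot is a no-op. By fairness of $\mathcal{ASYNC}$ the relevant robot is eventually activated, so after finitely many such steps we reach a time $t_1\ge t$ at which the configuration is a $j$-LSCC with $\mathcal{IN}_j=\phi$ and $\mathcal{OUT}_j$ equal to its value at time $t$.

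\emph{Part 2.} Now every robot on $v_j$ is in $\mathcal{OUT}_j\neq\phi$. The robot $r^{*}$ nearest to $c$ is on or outside $\mathcal{CIR}$, so (with $\mathcal{R}_I(r^{*})=v_0$ more than one hop away and the $\mathcal{L}_I$-guard satisfied) an activation moves it one step right to $v_{j-1}$, keeping colour \texttt{moving1}; and no other robot on $v_j$ can move first, since a non-nearest robot fails the ``nearest to $c$'' test while $\mathcal{R}_I=v_0$ still carries no \texttt{moving1} robot. Once $r^{*}$ is on $v_{j-1}$, its own $\mathcal{L}_I$ becomes $v_j$, which still carries \texttt{moving1} robots, so $r^{*}$ is disabled; meanwhile a robot $r$ still on $v_j$ becomes enabled, through the ``follow-the-leader'' branch, precisely when the axis-aligned rectangle spanned by $r$ and some \texttt{moving1} robot already on $v_{j-1}$ contains no other robot, and then $r$ moves right to $v_{j-1}$. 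Inducting on the number of robots still on $v_j$: the robot of $v_j$ currently nearest to $c$ always has this property, because every robot that was nearer to $c$ has already left $v_j$, so the part of $v_j$ between this robot and $c$ is empty, whence the rectangle it spans with the height-closest \texttt{moving1} robot on $v_{j-1}$ (toward $c$) contains no other robot; moreover the move is collision-free by clause $(3)$. Hence every $\mathcal{OUT}_j$ robot eventually reaches $v_{j-1}$. Let $t_2>t_1$ be the first time this holds: then $v_j$ is empty, all \texttt{moving1} robots are on $v_{j-1}$, nothing is strictly between $v_{j-1}$ and $v_0$, everything left of $v_{j-1}$ is \texttt{done}, and the strip and height-distinctness constraints are inherited; so the configuration is a $(j-1)$-LSCC whose \texttt{moving1} set $\mathcal{IN}_{j-1}\cup\mathcal{OUT}_{j-1}$ is exactly $\mathcal{OUT}_j$ as of time $t_1$.

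\emph{Main obstacle.} The genuinely delicate point is the asynchronous scheduler: one must exclude that a late or stale activation moves a robot that ought to wait. The set-up invariants are designed for exactly this — at each intermediate configuration the two discriminating guards (whether $\mathcal{R}_I$ holds the two \texttt{diameter} robots or a \texttt{moving1} leader, and whether $\mathcal{L}_I$ holds only \texttt{done} robots) together with the ``nearest to $c$'' and empty-rectangle tests leave at most the intended robot(s) enabled, so every other activation, and every pending computation, can only realise the move described; the height-distinctness needed at every step follows from clause $(3)$ of the $j$-LSCC together with the fact that every move in both parts is purely horizontal. A secondary, more routine obstacle is the visibility bookkeeping — that the robots acting in a $j$-LSCC really do see both $r_1$ and $r_2$ — which I would discharge by appeal to Lemma~\ref{lemma:decide Phase 2 moving1}.
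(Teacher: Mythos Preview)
Your Part 1 matches the paper's argument: while $\mathcal{IN}_j\neq\phi$ every robot in $\mathcal{IN}_j$ is strictly nearer to $c$ than every robot in $\mathcal{OUT}_j$, so the ``nearest to $c$'' robot is always in $\mathcal{IN}_j$ and peels off to the left as \texttt{done}, while $\mathcal{OUT}_j$ robots fail the nearest-to-$c$ test and stay put.

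Part 2, however, has a genuine gap. You assert that once $r^{*}$ reaches $v_{j-1}$ it ``is disabled'' because $\mathcal{L}_I(r^{*})=v_j$ still carries \texttt{moving1} robots. But the guard in Algorithm~\ref{algo phase2} is ``no robot of colour \texttt{off} or \texttt{moving1} on $\mathcal{L}_I(r)$'' \emph{as perceived by $r$}, and $v_j$ is only one hop from $v_{j-1}$, so Observation~\ref{observation:2 hop Visible} does not apply. With fat opaque robots now sitting on both $v_j$ and $v_{j-1}$ you give no reason why $r^{*}$ must see some \texttt{moving1} robot on $v_j$. If it does not, it may judge itself nearest to $c$ with a clean $\mathcal{L}_I$, recolour to \texttt{done} and step left back onto $v_j$ (or, if outside $\mathcal{CIR}$, step right to $v_{j-2}$) --- exactly the premature move that destroys the invariant. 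The paper closes this hole by contradiction: if some robot on $v_{j-1}$ moves before $v_j$ empties of \texttt{moving1}, its view of $v_j$ must have been obstructed by a non-\texttt{moving1} robot on $v_j$; since $\mathcal{IN}_j=\phi$ at $t_1$, such a robot can only be a \texttt{done} robot that itself came back from $v_{j-1}$ by an earlier premature move, yielding an infinite descent on ``first robot to return to $v_j$ as \texttt{done}''. Your induction on $|v_j|$ establishes progress but not safety; you need either this descent argument or an explicit one-hop visibility claim under the height constraints of clause~(3).

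A minor point: you invoke Lemma~\ref{lemma:decide Phase 2 moving1} to justify that robots on $v_j$ see both $r_1,r_2$, but that lemma's proof relies on the present lemma (through Corollary~\ref{cor: master}), so the citation is circular. In a $j$-LSCC every \texttt{moving1} robot $r$ has $\mathcal{R}_I(r)=v_0$ at distance $j\ge 2$, so Observation~\ref{observation:2 hop Visible} already gives what you need and avoids the circularity.
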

\begin{proof}
    Let at time $t$ the configuration is a $j-$LSCC  where $j >1$. In this case if $\mathcal{IN}_j =\phi$ then we have nothing to prove for the first part as $t_1 = t$. So, let at time $t$, $\mathcal{IN}_j \ne \phi$. Now to prove the first part we have to show that there exists a time $t'> t$ such that at time $t'$ the configuration is again a $j-$LSCC  configuration where $|\mathcal{IN}_j|$ at $t' < |\mathcal{IN}_j|$ at $t$ and between $[t,t']$ no robot in $\mathcal{OUT}_j$ at time $t$ moves even if it is activated. Let at $t$, $\mathcal{IN}_j = \{r_{i_1},r_{i_2}, \dots r_{i_p}\}$  where $p \ge 1$. Note that, distance between $r_{i_k}$ and $c$ is strictly less than distance between $r$ and $c$ for any $r \in \mathcal{OUT}_j$ and any $k \in \{1, 2 \dots p \}$ at time $t$. So at $t$, a robot nearest to $c$ must be from $\mathcal{IN}_j$. Let $r_{i_n}$ be one such robot. Then upon activation, it moves left after changing the color to \texttt{done}. Let $t'$ be the first time instance such that for all $t_x \in [t,t'), r_{i_n}$ is on $v_{j}$ in $ \mathcal{C}(t_x)$ and in $\mathcal{C}(t')$, $r_{i_n}$ is on $v_{j+1}$ with color \texttt{done}. So, $|\mathcal{IN}_j|$  at $t'$ $ < |\mathcal{IN}_j|$ at $t$. Now we only have to show that any robot of color \texttt{moving1} in $\mathcal{OUT}_j$ at time $t$ stays on $v_j$ in the time interval $[t,t']$. If possible let some robots that were in $\mathcal{OUT}_j$ at time $t$ move right, on or, before $t'$. Let $r_o$ be the first such robot to move right. Then $r_o$ must have been activated at some time $t'_1 < t'$ when it is nearest to $c$. But since $t_1'<t'$,  in $\mathcal{C}(t_1'), r_{i_n}$ was on $v_j$. Thus in $\mathcal{C}(t_1')$, $r_o$ can ot be nearest to $c$.  So  at $t'$, the configuration is again a $j-$LSCC  configuration where $|\mathcal{IN}_j|$ at $t' < |\mathcal{IN}_j|$ at $t$ and between $[t,t']$ no robot in $\mathcal{OUT}_j$ at time $t$ moves even if it is activated. So, $|\mathcal{OUT}_j|$ at time $t'$ remains same to $   |\mathcal{OUT}_j| $ at time $t$. This proves eventually there is a time $t_1$ when the configuration is a $j-$LSCC with $\mathcal{IN}_j = \phi$ and $\mathcal{OUT}_j$ at time $t_1 =   \mathcal{OUT}_j$ at time $t$.

    Now, for the second part, we have to prove that there exists a time $t_2 > t_1$ when all robots on $v_j$ at $t_1$ are on $v_{j-1}$ at time $t_2$. Thus we have to show that a robot after reaching $v_{j-1}$ from $v_j$ does not do anything until all robots of $v_{j}$ at time $t_1$ reach $v_{j-1}$. If possible let a robot $r$ after reaching $v_{j-1}$ moves again before all robots of $v_j$ move. This implies there exists a time $t_2'> t_1$ when there are robots on $v_j$  with color \texttt{moving1} but $r$ does not see any robot of color \texttt{moving1} on $v_j$ from $v_{j-1}$ at time $t_2'$. This implies there must exist at least one robot, say $r'$, on $v_j$ which does not have color \texttt{moving1} and which obstructs $r$ from seeing any robot of color \texttt{moving1} on $v_j$ at time $t_2'$ (Figure~\ref{fig:emasterLemma}). Note that $r'$ must be of color \texttt{done} and it must have moved from $v_{j-1}$ to $v_j$ after changing its color to \texttt{done} from \texttt{moving1}. This is because at $t_1$, $\mathcal{IN}_j = \phi$, so $r'$ has not changed its color to \texttt{done} on $v_j$. So, at $t_2'$ there must be at least one robot of color \texttt{done}  on $v_j$ which has moved to $v_j$ from $v_{j-1}$ after $t_1$.
    \begin{figure}[h]
    \centering
    \includegraphics[width=4cm]{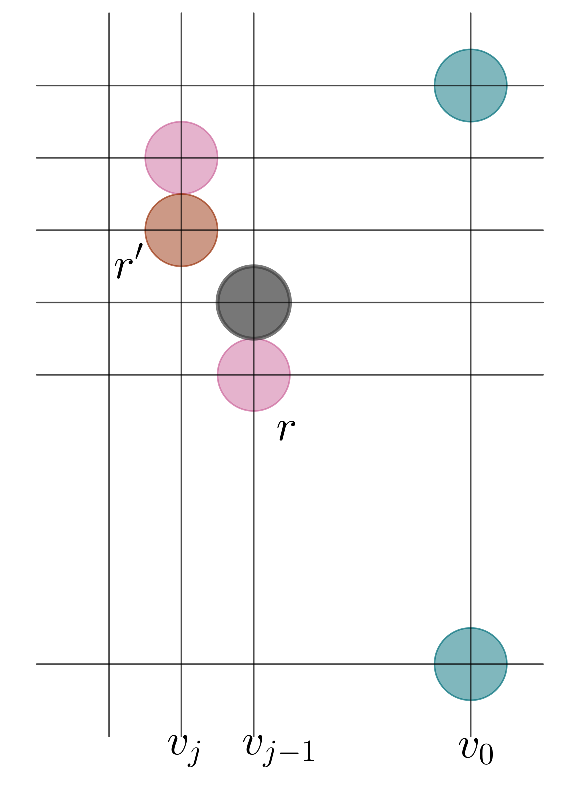}
    \caption{Configuration $\mathcal{C}(t_2')$. Here $r$ on $v_{j-1}$ has no visible robot of color \texttt{moving1} on $v_j$ in spite of $v_j$ having such a robot. $r'$ actually obstructs the view of $r$.}
    \label{fig:emasterLemma}
\end{figure}
    Without loss of generality let $r'$ be the first robot that moves to $v_j$ from $v_{j-1}$ after changing its color to \texttt{done} from \texttt{moving1}. Let $r'$ is activated on $v_{j-1}$ with color \texttt{moving1} at some time $t_4'$ where $t_2'>t_4'>t_1$. This implies $r'$ also has not seen any robot on $v_j$ of color \texttt{moving1} at time $t_4'$. Since  $t_2'>t_4'>t_1$, there must exist a robot of color \texttt{moving1} on $v_j$ at time $t_4'$. $r'$ does not see that robot at $t_4'$ implies there must exists another robot, say $r_s$ of color \texttt{done} on $v_j$ at $t_4'$. Also, $r_s$ must have moved to $v_{j-1}$ to $v_j$ for the same reason described above. So $r'$ can not be the first robot that moved to $v_j$ from $v_{j-1}$ after changing its color to \texttt{done} from \texttt{moving1} after $t_1$. So no robot  that reached $v_{j-1}$ from $v_j$ after $t_1$, moves until all robots of $v_j$ at time $t_1$ moves to $v_{j-1}$. Let $t_2$ be the time when the last robot of $v_j$ reaches $v_{j-1}$ after $t_1$. Note that all robots of color \texttt{moving1}  at time $t_2$ are on $v_{j-1}$ and they are the only ones on $v_{j-1}$. Also, there are no robots between $v_{j-1}$ and $v_0$ and all robots on the left of $v_{j-1}$ have color \texttt{done}. So this configuration at time $t_2$ is a $(j-1)-$LSCC. Also, all robots that were on $v_j$ at time $t_1$ are now on $v_{j-1}$ at time $t_2$ and no other robots moved onto $v_{j-1}$. So, the set of all robots on $v_{j-1}$ at time $t_2 = $ $\mathcal{IN}_{j-1} \cup \mathcal{OUT}_{j-1}$ at time $t_2 = \mathcal{OUT}_j$ at time $t_1$.   
    
\end{proof}
We have seen in the description that, in Phase 2 the configuration becomes a $(\lceil\frac{d}{2}\rceil-1)-$LSCC configuration. Now this lemma tells that for all $j > 1$ and $j \le \lceil\frac{d}{2}\rceil-1$ there is a time when the configuration becomes a $(j-1)-$ SLCC from $j-$SLCC. So starting from $(\lceil\frac{d}{2}\rceil-1)-$LSCC we will eventually have $(\lceil\frac{d}{2}\rceil-2)-$LSCC then $(\lceil\frac{d}{2}\rceil-3)-$LSCC and so on until we have a 1-LSCC. Also note that from $i-$LSCC, $j-$LSCC can not be formed if $j>i$ as robots of color \texttt{moving1} that are on the left of $v_0$  only moves right in Phase 2.
Moreover, we can have the following corollary.
\begin{corollary}
\label{cor: master}
After $(\lceil \frac{d}{2}\rceil-1)$-LSCC is formed, for all $j \in \{1, 2, \dots \lceil \frac{d}{2}\rceil-2\}$ and for any $i <j$, a configuration can not have robots on $v_i$  until $j$-LSCC is formed.
\end{corollary}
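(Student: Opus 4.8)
The plan is to read the corollary off from Lemma~\ref{lemma: master} and from the detailed picture of the transitions furnished by its proof, organised as a downward induction on the LSCC index. Write $t_0$ for the time the $(\lceil\frac{d}{2}\rceil-1)$-LSCC is formed, and for $k\in\{1,\dots,\lceil\frac{d}{2}\rceil-1\}$ let $\sigma_k$ be the first time at or after $t_0$ at which the configuration is a $k$-LSCC, so $\sigma_{\lceil\frac{d}{2}\rceil-1}=t_0$. Iterating Lemma~\ref{lemma: master} as in the discussion preceding the corollary (each $k$-LSCC with $k>1$ is eventually followed by a $(k-1)$-LSCC) makes every $\sigma_k$ well defined, with $t_0=\sigma_{\lceil\frac{d}{2}\rceil-1}<\sigma_{\lceil\frac{d}{2}\rceil-2}<\dots<\sigma_1$; and since a \texttt{moving1} robot lying on the left of $v_0$ only moves towards $v_0$ in Phase~2, the remark following that lemma guarantees that these are the only LSCCs arising after $t_0$ and that they arise in this order. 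In these terms the corollary for a fixed $j\le\lceil\frac{d}{2}\rceil-2$ asserts exactly that no robot lies on any $v_i$ with $i<j$ throughout $[t_0,\sigma_j]$.

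I would establish the slightly stronger claim, for every $k\in\{1,\dots,\lceil\frac{d}{2}\rceil-1\}$: at every time $t\in[t_0,\sigma_k]$ no robot lies on $v_i$ for any $i<k$ (the corollary is then the case $k=j$). The base case $k=\lceil\frac{d}{2}\rceil-1$ is immediate from the definition of an LSCC, since at $t_0$ there is no robot strictly between $v_{\lceil\frac{d}{2}\rceil-1}$ and $v_0$. For the inductive step, assume the claim for $k+1$ and split $[t_0,\sigma_k]=[t_0,\sigma_{k+1}]\cup[\sigma_{k+1},\sigma_k]$; the hypothesis disposes of the first piece, so it suffices to show that during $[\sigma_{k+1},\sigma_k]$ no robot lies on $v_i$ with $i\le k-1$. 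On that interval the configuration evolves exactly as in the proof of Lemma~\ref{lemma: master} applied to the $(k+1)$-LSCC present at $\sigma_{k+1}$: first the robots of $\mathcal{IN}_{k+1}$ leave $v_{k+1}$ one at a time, each stepping to $v_{k+2}$ and turning \texttt{done} (hence terminated and immobile afterwards); then the robots of $\mathcal{OUT}_{k+1}$ leave $v_{k+1}$ one at a time, each stepping to $v_k$, and — by the blocking argument in part~2 of that proof — a robot that has reached $v_k$ does not move again until all of them have, which happens precisely at $\sigma_k$. Hence throughout $[\sigma_{k+1},\sigma_k]$ every robot to the left of $v_0$ sits on $v_{k+1}$, on $v_k$, or on some $v_m$ with $m\ge k+2$ (a terminated \texttt{done} robot); the two \texttt{diameter} robots stay on $v_0$; and robots on the right of $v_0$ are irrelevant because each $v_i$ lies to the left of $v_0$. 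In particular no robot occupies $v_1,\dots,v_{k-1}$, which closes the induction.

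The step I expect to demand the most care is the assertion used in the inductive step that no robot overtakes the advancing front during a transition — that a robot arriving at $v_k$ genuinely waits there until the $k$-LSCC is realised, and that a \texttt{done} robot left behind at $v_{k+2}$ never drifts back towards $v_0$. Both facts are already contained in (the proof of) Lemma~\ref{lemma: master}: the first is its part~2 visibility-blocking argument, and the second holds because a robot of color \texttt{done} terminates; so the actual work is to cite them accurately. I would also record explicitly that a \texttt{moving1} robot never crosses the terminated \texttt{diameter} robots sitting on $v_0$, so activity on the two sides of $v_0$ cannot interact and the analysis of the left half is self-contained.
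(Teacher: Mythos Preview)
Your proposal is correct and follows essentially the same approach as the paper's proof. The paper's argument is more terse: it fixes $j$, picks an arbitrary time $t$ before the $j$-LSCC forms, observes that $t$ lies in some transition window from a $p$-LSCC to a $(p-1)$-LSCC with $p>j$, and then cites Lemma~\ref{lemma: master} to conclude that no robot can be on $v_s$ for $s<p-1\ge j$; you unfold this into an explicit downward induction on the LSCC index with named times $\sigma_k$, but the content is identical and both arguments rest on the same fact extracted from the proof of Lemma~\ref{lemma: master}, namely that during a $(k+1)\to k$ transition every robot on the left of $v_0$ sits on $v_{k+1}$, $v_k$, or some $v_m$ with $m\ge k+2$.
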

\begin{proof}
Let us fix a  $j \in  \{1, 2, \dots \lceil \frac{d}{2}\rceil-2\}$.
    Let at time $t$ which is after $(\lceil \frac{d}{2}\rceil-1)$-LSCC is formed, the configuration is either a $p-$LSCC or, $t$ is between $t_1$ and $t_2$ such that at  $t_1$ the configuration is a $p-$LSCC and at $t_2$ it becomes $(p-1)-$LSCC where $j < p \le \lceil \frac{d}{2}\rceil-1 $ . Now by Lemma~\ref{lemma: master}, at time $t$ there can be no robots on $v_s$ where $s < p-1.$ Now for any $i <j$, we have $i <j< p \implies i < j \le p-1$. Hence at $t$, there can be no robots on $v_i$ where $i <j$.
  \end{proof}
Now we prove the following lemma that ensures that a robot of color  always sees at least one robot of color \texttt{diameter} not on its own vertical line in Phase 2.

\begin{lemma}
\label{lemma:decide Phase 2 moving1}
    In Phase 2, a robot $r$ of color \texttt{moving1} can always see a robot of color \texttt{diameter} that is not on $\mathcal{L}_V(r)$. 
\end{lemma}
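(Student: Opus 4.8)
The plan is to first reduce the statement to: \emph{every robot of colour} \texttt{moving1} \emph{sees at least one of} $r_1,r_2$ (the two colour-\texttt{diameter} robots on $v_0$). Indeed, in Phase~2 a colour-\texttt{moving1} robot on the left of $v_0$ is never on $v_0$ itself: while it carries the colour \texttt{moving1} it only moves toward $v_0$, and once it reaches $v_1$ the algorithm turns it to \texttt{done} (terminating on $v_1$, or stepping one hop further left) — the only ``move right'' branch for a \texttt{moving1} robot requires a \texttt{moving1} robot on $\mathcal{R}_I(r)$, which $v_0$ never has — so it cannot step onto $v_0$; by the symmetric statement on the right, $\mathcal{L}_V(r)\neq v_0$ always. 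Since $r_1,r_2$ are the only robots ever on $v_0$ in Phase~2, any colour-\texttt{diameter} robot that $r$ sees automatically lies off $\mathcal{L}_V(r)$.

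Arguing by contradiction exactly as in the proofs of Lemma~\ref{lemma: decide phase 2 chord} and Lemma~\ref{lemma:off decide phase 2} (a sightline from $r$ to a robot on $v_0$ cannot be obstructed by a robot on $\mathcal{L}_V(r)$ nor by a robot farther left than $\mathcal{L}_V(r)$), a \texttt{moving1} robot $r$, say on $v_{i_1}$ on the left of $v_0$, that sees neither $r_1$ nor $r_2$ forces the existence of robots $r_1',r_2'$ strictly inside the rectangles bounded by $v_{i_1}$, $v_0$ and by $\mathcal{L}_H(r),\mathcal{L}_H(r_1)$ resp. $\mathcal{L}_H(r),\mathcal{L}_H(r_2)$; so $r_1'$ is strictly between $v_{i_1}$ and $v_0$ and strictly above $\mathcal{L}_H(r)$, and $r_2'$ strictly between $v_{i_1}$ and $v_0$ and strictly below $\mathcal{L}_H(r)$ (already forcing $i_1\ge 2$). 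By Observation~\ref{obs: chord m1 not together} no robot is of colour \texttt{chord}; by Observation~\ref{obs: moving1 right of off} every colour-\texttt{off} robot lies on $\mathcal{L}_V(r)$ or strictly to its left, so $r_1',r_2'$ are of colour \texttt{moving1} or \texttt{done}.

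Next I would pin down where the configuration sits on the Phase~2 time line. Either there is no \texttt{moving1} robot at all (nothing to prove), or the $(\lceil\frac{d}{2}\rceil-1)$-LSCC has not yet formed and all \texttt{moving1} robots sit on one vertical line with nothing strictly between it and $v_0$, or the $(\lceil\frac{d}{2}\rceil-1)$-LSCC has formed, in which case Corollary~\ref{cor: master} and Lemma~\ref{lemma: master} force the configuration at every instant to be a $j$-LSCC, the $\mathcal{IN}_j$-peeling stage of Lemma~\ref{lemma: master}(1) ($\mathcal{IN}_j$ robots stepping from $v_j$ to $v_{j+1}$ and turning \texttt{done}), the block-advancing stage of Lemma~\ref{lemma: master}(2) (robots moving one by one from $v_j$ to $v_{j-1}$), or the post-$1$-LSCC tail where all \texttt{moving1} robots are on $v_1$; moreover in all of these every colour-\texttt{done} robot lies on $v_{j+1}$ or strictly to its left. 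In every situation other than the block-advancing stage with $r$ on the ``old'' line $v_j$, the line $\mathcal{L}_V(r)$ has nothing strictly between it and $v_0$, so $r$ sees both $r_1$ and $r_2$ — contradicting the choice of $r$.

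This leaves the main case: the block-advancing stage from a $j$-LSCC (with $\mathcal{IN}_j=\phi$) toward a $(j-1)$-LSCC, $r$ still on $v_j$. Here the colour-\texttt{done} robots are on $v_{j+1}$ or farther left and the colour-\texttt{off} robots are on $\mathcal{L}_V(r)$ or farther left, so $r_1'$ and $r_2'$ must be \texttt{moving1} robots already moved to $v_{j-1}$. The invariant I would prove, by induction on the number of completed moves of this stage, is that the robots on $v_{j-1}$ always form a height-contiguous block that grows only at its current top or bottom extremity, while every robot still on $v_j$ lies strictly above that block's top or strictly below its bottom: the first mover is the robot nearest to $c$, and a later robot $r''$ steps from $v_j$ to $v_{j-1}$ only when $\mathcal{R}_I(r'')=v_{j-1}$ holds a \texttt{moving1} robot $r'''$ with the rectangle bounded by $\mathcal{L}_H(r''),\mathcal{L}_H(r'''),v_j,v_{j-1}$ containing no other robot, which by the inductive hypothesis forces $r'''$ to be a current extremity of the $v_{j-1}$ block and $r''$ the $v_j$-robot immediately beyond it, so $r''$ becomes the new extremity and the invariant persists. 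But then no robot on $v_j$ can sit at a height strictly between two robots of $v_{j-1}$, whereas $r$ on $v_j$ sits at a height strictly between $r_2'$ and $r_1'$ — contradiction. Hence a \texttt{moving1} robot always sees a colour-\texttt{diameter} robot, and by the first paragraph it is not on $\mathcal{L}_V(r)$. The delicate step is this contiguity invariant of the advancing block; everything else reduces to the three observations and to Lemma~\ref{lemma: master}.
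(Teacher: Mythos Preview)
Your proof is correct and follows essentially the same route as the paper's. Both reduce to the block-advancing stage between a $j$-LSCC with $\mathcal{IN}_j=\phi$ and the ensuing $(j-1)$-LSCC, and both derive a contradiction from the existence of obstructors $r_1',r_2'$ on $v_{j-1}$ with $r$ still on $v_j$ at a height strictly between them. The paper phrases the key step a bit more directly: it lets $r_l$ be the \emph{first} robot from the lower half (the side of $r_2'$) to cross to $v_{j-1}$, observes that the witness $r_l'$ enabling that move must come from the upper half, and notes that $r$ itself then lies in the forbidden rectangle between $r_l$ and $r_l'$. Your inductive contiguity invariant is the same argument unrolled; it is slightly more explicit (and you are in fact more careful than the paper in ruling out that $r_1',r_2'$ could carry colour \texttt{done}).
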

\begin{proof}
    Let $t$ be a time when a robot, say $r$, of color \texttt{moving1} on $v_j$ can not see both of $r_1$ and $r_2$, the two robots of color \texttt{diameter} on $v_0$. Without loss of generality let $r$ be on the left of $v_0$. This implies at time $t$, there must exist two robots $r_1'$ and $r_2'$ strictly inside the rectangles bounded by $\mathcal{L}_H(r), \mathcal{L}_H(r_1),v_0,v_j$ and $\mathcal{L}_H(r), \mathcal{L}_H(r_2),v_0,v_j$ respectively (Figure~\ref{fig:Moving1 sees diameter}) i.e., $r_1'$ and $r_2'$ are on $v_{i_1}$ and $v_{i_2}$ where $j > i_1 \ge i_2$. Also, $j > 1$ and $j<\lceil\frac{d}{2}\rceil$ as even if a robot can have color \texttt{moving1} on $v_{\lceil\frac{d}{2}\rceil}$ it performs first look phase as a robot of color \texttt{moving1} after it moves right. Also from the description $(\lceil\frac{d}{2}\rceil-1)-$LSCC  is formed first until then no robot moves to $v_i$ where $i< \lceil\frac{d}{2}\rceil-1$. So until $(\lceil\frac{d}{2}\rceil-1)-$LSCC  is formed all robots of color \texttt{moving1} must see both the robots $r_1$ and $r_2$ from $v_{\lceil\frac{d}{2}\rceil-1}$. So $t$ must be a time after $(\lceil\frac{d}{2}\rceil-1)-$LSCC  is formed. Observe that, $r_1'$ and $r_2'$ must be of color \texttt{moving1}. Now since at time $t$, there are robots on $v_{i_1}$ and $v_{i_2}$ where $i_1, i_2 < j$, there exists a time $t' < t$ when the configuration was a $j-$LSCC such that $\mathcal{IN}_j = \phi$ (Lemma~\ref{lemma: master} and Corollary~\ref{cor: master}) (Figure~\ref{fig:Moving1 sees diameter}). Also in $\mathcal{C}(t)'$, $r, r_1', r_2'$ were on $v_j$. Now similar to lemma~\ref{lemma:off decide phase 2}, let us denote the half where $r_1'$ is located as the upper half and the other one as the lower half. Now the first robot that moves to $v_{j-1}$ from $v_j$ must be nearest to $c$ at time $t'$. Without loss of generality let it be on the upper half at time $t'$. There can be at most another such robot which is also nearest to $c$ at time $t'$. If another such robot exists then it also has to be on the upper half at time $t'$ otherwise $r$ becomes nearer to $c$ at time $t$ and moves to $v_{j-1}$ before any other robots and thus at $t$ $r$ can not be at $v_j$ as assumed. Also at time $t$, there is at least one robot of color \texttt{moving1} on the lower half. So let $r_l$ be the first robot from the lower half at time $t'$ that has moved to  $v_{j-1}$ from $v_j$. Then it must have moved from $v_j$ after seeing a robot, say $r_l'$, of color \texttt{moving1} on $\mathcal{R}_I(r_l) = v_{j-1}$ such that there is no other robots except $r_l$ and $r_l'$ on or between the rectangle, say $R$, bounded by $v_j, v_{j-1}, \mathcal{L}_H(r_l)$ and $\mathcal{L}_H(r_l')$ at a time $t_1' >t'$ and $t> t_1'$. Now since $r_l$ is the first robot from the lower half to move after $t'$, $r_l'$ must be from the upper half. Thus at time $t_1'$, $R$ must contain $r$ other than $r_l$ and $r_l'$  which is a contradiction. Hence, a robot of color \texttt{moving1} always sees a robot of color \texttt{diameter}. And according to the algorithm, since $r$ never reaches $v_0$ in Phase 2, it can not see $r_1$ and $r_2$ on its own vertical line. 
\end{proof}

\begin{figure}[h]
    \centering
    \includegraphics[width=8cm]{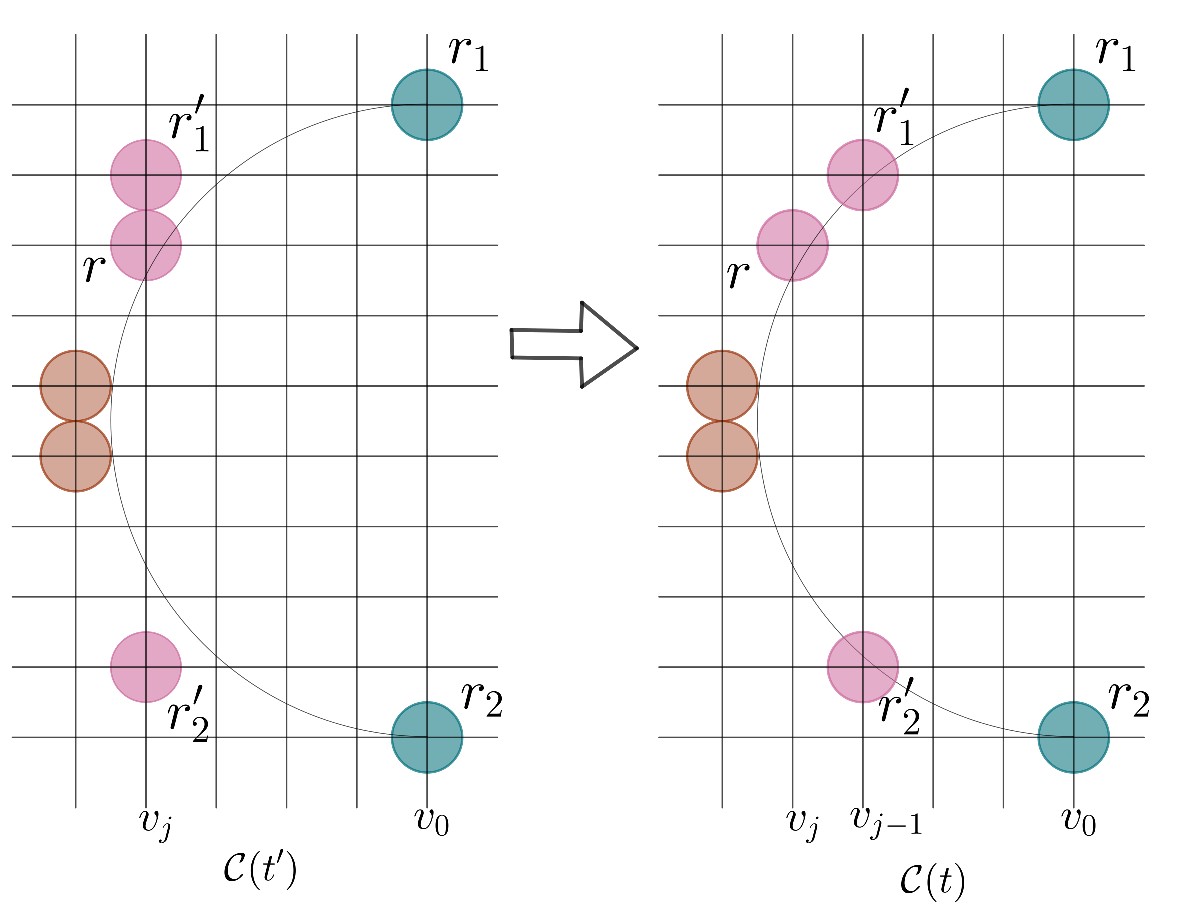}
    \caption{$\mathcal{C}(t')$ is a $j-LSCC$ where all robots on $v_j$ are in $\mathcal{OUT}_j$. From $\mathcal{C}(t')$, $\mathcal{C}(t)$ is formed where $r$ is still on $v_j$ but $r_1'$ and $r_2'$ are on right of $v_j$ obstructing $r$ from seeing $r_1$ and $r_2$. Here $r_1$ and $r_2$ are the robots of color \texttt{diameter} on $v_0$.}
    \label{fig:Moving1 sees diameter}
\end{figure}

Let $r$ be a robot in Phase 2 on the left of $v_0$. Note that $\mathcal{L}_H(r)$ intersects the circle $\mathcal{CIR}$ exactly once at a point $C_r$ on the left of $v_0$. Thus, there exists exactly one grid point denoted as, $C_T(r)$ on $\mathcal{L}_H(r)$ such that either $ dist(C_r,C_T(r))=0$ or $C_T(r)$ is on the left of $C_r$ such that $dist(C_r, C_T(r)) <1 $ . Then we define $C_T(r)$ to be the \textit{Terminating Point} of $r$. 
 We can now have the following observation
 \begin{observation}
 \label{obs: all robots on v1 on circle}
     In a 1-LSCC if a robot $r \in \mathcal{OUT}_1$, then $r$ must be on $C_T(r)$.
 \end{observation}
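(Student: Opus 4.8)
The plan is to recast the claim as a short computation in plane geometry and read it off the definition of a $1$-LSCC. First I would fix coordinates so that $v_0$ is the $y$-axis, hence $v_i$ is the vertical line $x=-i$; writing $\rho=\frac{d}{2}$ for the radius of the agreed circle and $c=(0,y_c)$ for its center (both $r_1,r_2$ lie on $v_0$, so $c$ does too), we have $\mathcal{CIR}=\{(x,y):x^{2}+(y-y_c)^{2}=\rho^{2}\}$. Let $r\in\mathcal{OUT}_1$; then $r$ is a robot of color \texttt{moving1} on the left of $v_0$ (the right-hand case being symmetric), so clause~(1) of the definition of a $j$-LSCC with $j=1$ places $r$ on $v_1$: write $r=(-1,y_r)$, with $y_r$ the ordinate of its horizontal grid line. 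By clause~(3) of that definition $r$ is strictly between $\mathcal{L}_H(r_1)$ and $\mathcal{L}_H(r_2)$, hence $|y_r-y_c|<\rho$.

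Next I would set $\alpha=\sqrt{\rho^{2}-(y_r-y_c)^{2}}$, which is well defined and satisfies $0<\alpha<\rho$ by the previous sentence. Intersecting the horizontal line $\mathcal{L}_H(r)$ (the line $y=y_r$) with $\mathcal{CIR}$ gives $x=\pm\alpha$, so the intersection point to the left of $v_0$ is $C_r=(-\alpha,y_r)$, which is the point $C_r$ referred to just before the observation. Unwinding the definition of the terminating point, $C_T(r)$ is the grid point of $\mathcal{L}_H(r)$ that is either $C_r$ itself or lies immediately to its left at distance $<1$; in these coordinates this is $C_T(r)=(-\lceil\alpha\rceil,y_r)$, because $0\le\lceil\alpha\rceil-\alpha<1$.

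The decisive step is to invoke membership in $\mathcal{OUT}_1$: since $r\notin\mathcal{IN}_1$, $r$ is not strictly inside $\mathcal{CIR}$, i.e.\ $\mathrm{dist}(r,c)^{2}=1+(y_r-y_c)^{2}\ge\rho^{2}$, which rearranges to $\alpha^{2}=\rho^{2}-(y_r-y_c)^{2}\le 1$, hence $\alpha\le 1$. Together with $\alpha>0$ this forces $\lceil\alpha\rceil=1$, so $C_T(r)=(-1,y_r)=r$, which is exactly the assertion. As a sanity check on the definitions I would also verify consistency with the special meaning of ``$r$ is on $\mathcal{CIR}$'': for $r=(-1,y_r)$ one has $\mathrm{dist}(r,C_r)=|1-\alpha|<1$, and $r$ lies at or to the left of $C_r$ exactly when $\alpha\le1$, so ``strictly outside'' ($\alpha<1$) and ``on $\mathcal{CIR}$'' ($0<\alpha\le1$) together describe precisely the $v_1$-robots with $0<\alpha\le1$, which is $\mathcal{OUT}_1$ restricted to $v_1$.

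I do not anticipate a genuine obstacle here---it is a short computation---and the one point that must not be skipped is the use of two structural clauses of the $1$-LSCC definition: clause~(1) to pin $r$ onto $v_1$ (so its horizontal distance from $c$ is exactly $1$), and clause~(3) to guarantee $|y_r-y_c|<\rho$, which is precisely what makes $\alpha>0$ and hence $\lceil\alpha\rceil\ge1$, preventing $C_T(r)$ from slipping onto $v_0$. Everything else then follows from the circle equation and the definition of $C_T(\cdot)$.
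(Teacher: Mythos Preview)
Your proof is correct and takes essentially the same approach as the paper: the paper's argument simply notes that $C_T(r)=\mathcal{L}_H(r)\cap v_j$ for some $j\ge 1$, and that $r\in\mathcal{OUT}_1$ forces $j\le 1$, hence $j=1$. Your coordinate computation is a fully unpacked version of exactly these two inequalities---your $\alpha>0$ (from clause~(3)) gives $j=\lceil\alpha\rceil\ge 1$, and your $\alpha\le 1$ (from $r\notin\mathcal{IN}_1$) gives $j\le 1$---so there is no substantive difference, only a difference in level of detail.
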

For $r$, $C_T(r) = \mathcal{L}_H(r) \cap v_j$ for some $j \ge 1$. Now, in a 1-LSCC if $r \in \mathcal{OUT}_1$, then $j 
\le 1$. Hence, $j=1$.

We now first ensure that if a robot $r$ terminates, it does not terminate on a grid point that is not on $C_T(r)$.
\begin{lemma}
    A robot $r$ can only terminate on $C_T(r)$.
    \label{lemma: terminating grid point}
\end{lemma}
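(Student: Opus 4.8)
The plan is to reduce the statement to a short geometric computation about where a robot can acquire the colour \texttt{done}. First I would observe, by inspecting Algorithm~\ref{algo phase2}, that the only instruction halting a non-diameter robot in Phase~2 is the branch that fires when the robot already carries colour \texttt{done}; the two diameter robots $r_1,r_2$ are a trivial case, as they halted in Phase~1 on the endpoints of the agreed diameter, which lie on $\mathcal{CIR}$, and $C_T(r_i)=r_i$. Hence it suffices to show that whenever a robot $r$ switches to \texttt{done} it is either already on $C_T(r)$, or on the grid point one hop outside $C_T(r)$ and about to step onto $C_T(r)$ (after which, carrying \texttt{done}, it only terminates). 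Work with $r$ on the left of $v_0$ (the right side is mirror-symmetric). Place coordinates so that $v_0$ is $x=0$, $v_j$ is $x=-j$, and $c=(0,y_c)$, and set $\rho(r)=\sqrt{(d/2)^2-(y_r-y_c)^2}$; this is defined and strictly positive because every non-diameter robot lies strictly between $\mathcal{L}_H(r_1)$ and $\mathcal{L}_H(r_2)$. Then $C_r=(-\rho(r),y_r)$, so $C_T(r)=v_{\lceil\rho(r)\rceil}\cap\mathcal{L}_H(r)$; moreover $r$ placed on $v_j$ is strictly inside $\mathcal{CIR}$ iff $j<\rho(r)$, i.e. iff $j\le\lceil\rho(r)\rceil-1$, and $r$ on $v_j$ lies in $\mathcal{OUT}_j$ iff $j\ge\lceil\rho(r)\rceil$.

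Next I would enumerate the two ways Algorithm~\ref{algo phase2} turns a robot's colour to \texttt{done}: (i) the robot has colour \texttt{moving1} on some $v_j$, sees both diameter robots on $\mathcal{R}_I(r)$, is nearest to $c$, sees no \texttt{off} or \texttt{moving1} robot on $\mathcal{L}_I(r)$, and is strictly inside $\mathcal{CIR}$ --- it recolours to \texttt{done} and moves one hop away from $v_0$, onto $v_{j+1}$; or (ii) the same conditions hold on $v_1$ but $r$ is \emph{not} strictly inside $\mathcal{CIR}$ --- it recolours to \texttt{done} in place. In case~(i) the robot then terminates on $v_{j+1}$, in case~(ii) on $v_1$.

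The core step is to locate $r$ when it performs (i) or (ii). By the discussion preceding Algorithm~\ref{algo phase2} the left side first reaches a $(\lceil d/2\rceil-1)$-LSCC, and by Lemma~\ref{lemma: master} together with Corollary~\ref{cor: master} the configuration then passes, in this order, through the $j$-LSCCs for $j=\lceil d/2\rceil-1,\lceil d/2\rceil-2,\dots,1$; throughout, a colour-\texttt{moving1} robot on the left moves only toward $v_0$ until it recolours to \texttt{done} (via step~(i) or step~(ii)). So while $r$ is in $\mathcal{OUT}_j$ it is carried to $v_{j-1}$ at the next LSCC (for $j>1$), and it can perform step~(i) only at the first $v_j$ of this decreasing sequence at which it belongs to $\mathcal{IN}_j$ (that $r$ truly peels off there, rather than being carried past, is exactly what Lemma~\ref{lemma: master} guarantees, since $\mathcal{IN}_j$ empties before any robot of $\mathcal{OUT}_j$ advances). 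It reached that $v_j$ from $v_{j+1}$, where it was in $\mathcal{OUT}_{j+1}$ --- or, if $j=\lceil d/2\rceil-1$, from $v_{\lceil d/2\rceil}$, where it still had colour \texttt{off} and was not strictly inside $\mathcal{CIR}$ since $\lceil d/2\rceil\ge\rho(r)$. In either situation $j\le\lceil\rho(r)\rceil-1$ (because $r\in\mathcal{IN}_j$) and $j+1\ge\lceil\rho(r)\rceil$ (because $r$ was not strictly inside $\mathcal{CIR}$ on $v_{j+1}$), hence $j+1=\lceil\rho(r)\rceil$ and $v_{j+1}=C_T(r)$: step~(i) halts $r$ on $C_T(r)$. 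If $r$ never belongs to any $\mathcal{IN}_j$ with $j\ge1$, then step~(i) never applies; and if step~(ii) halts $r$, then $r$ is on $v_1$ and not strictly inside $\mathcal{CIR}$, forcing $\rho(r)\le1$, hence $\lceil\rho(r)\rceil=1$ and $C_T(r)=v_1\cap\mathcal{L}_H(r)$, consistent with Observation~\ref{obs: all robots on v1 on circle}, so again $r$ halts on $C_T(r)$. Therefore $r$ can only terminate on $C_T(r)$.

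I expect the main obstacle to be the geometric bookkeeping sketched above --- verifying that ``strictly inside $\mathcal{CIR}$ on $v_j$'' is exactly ``$j\le\lceil\rho(r)\rceil-1$'' and that $C_T(r)=v_{\lceil\rho(r)\rceil}\cap\mathcal{L}_H(r)$, treating integer and non-integer $\rho(r)$ uniformly, and handling the boundary cases $\lceil\rho(r)\rceil=1$ and $\lceil\rho(r)\rceil=\lceil d/2\rceil$ (where $r$ peels off at the very first LSCC, possibly back onto the vertical line it just left). The structural ingredient --- that a colour-\texttt{moving1} robot on the left advances monotonically toward $v_0$ through the $j$-LSCCs and really peels off at the first line where it becomes strictly inside --- is furnished by Lemma~\ref{lemma: master} and Corollary~\ref{cor: master}, so that part needs no new argument.
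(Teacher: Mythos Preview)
Your argument is correct and rests on the same structural ingredients as the paper's proof---Lemma~\ref{lemma: master} and Corollary~\ref{cor: master}---to track a robot through the decreasing sequence of $j$-LSCCs. The organisation differs, however. You run a \emph{direct} computation: introducing $\rho(r)$, you show $C_T(r)=v_{\lceil\rho(r)\rceil}\cap\mathcal{L}_H(r)$ and that the first index at which $r$ falls into $\mathcal{IN}_j$ is $j=\lceil\rho(r)\rceil-1$, so step~(i) deposits $r$ on $v_{\lceil\rho(r)\rceil}=C_T(r)$; step~(ii) is the residual case $\lceil\rho(r)\rceil=1$. The paper instead argues by \emph{contradiction}: assuming $r$ terminates on some $v_i$ with $i>j$ (respectively $i<j$), it invokes Lemma~\ref{lemma: master} to show $r$ would have been carried past $v_i$ (respectively peeled off at $v_j$), a contradiction. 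Your explicit coordinatisation makes the inside/outside dichotomy and the boundary cases $\lceil\rho(r)\rceil\in\{1,\lceil d/2\rceil\}$ transparent, at the cost of a little more notation; the paper's contradiction form avoids the arithmetic but leaves the geometry implicit. Either route is sound.
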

\begin{proof}
    For this, first note that a robot can only terminate after $(\lceil\frac{d}{2}\rceil-1)-$LSCC is formed. Now let for a robot $r$, $C_T(r)$ is the grid point $\mathcal{L}_H(r) \cap v_j$ for some $j \in [1, \lceil\frac{d}{2}\rceil] \cap \mathbb{N}$. If $j =  \lceil\frac{d}{2}\rceil$ then in  $(\lceil\frac{d}{2}\rceil-1)-$LSCC, $r \in \mathcal{IN}_{\lceil\frac{d}{2}\rceil-1}$. Then eventually $r$ moves to $v_{\lceil\frac{d}{2}\rceil}$ after changing the color to \texttt{done}. When it is activated next it terminates on $C_T(r) = v_{\lceil\frac{d}{2}\rceil} \cap \mathcal{L}_H(r)$.
So if $j= \lceil\frac{d}{2}\rceil$ then $r$ can not terminate on any other position except $C_T(r)$.Now let us consider the case $j < \lceil\frac{d}{2}\rceil$.  If possible let $r$ terminates on $\mathcal{L}_H(r) \cap v_i$ where $i > j \ge 1$. Then there must exist a time when $r$ is on $v_{i-1}$. Then by Corollary~\ref{cor: master} there exists a time when the configuration is $i-$LSCC. Here $r \in \mathcal{OUT}_i$ and it is strictly outside the circle $\mathcal{CIR}$ as $i>j$. So, again by lemma~\ref{lemma: master} there exists a time when the configuration becomes a $(i-1)-$LSCC and $r$ is on $v_{i-1}$. Here $i-1 \ge j$ i.e $r \in \mathcal{OUT}_{i-1}$. Now if $ i-1 = 1$ then $r$ eventually changes the color to \texttt{done} and terminates on $v_j= v_1$ as  $C_T(r) = \mathcal{L}_H(r) \cap v_1$ for this case (Observation~\ref{obs: all robots on v1 on circle}). This is contrary to our assumption that $r$ terminates on $v_i$ where $i > j =1$. So, let $i-1 > 1$. Then eventually $(i-2)-$LSCC will be formed and $r$ will be on $v_{i-2}$. Now note that according to the algorithm for Phase 2, after $(\lceil\frac{d}{2}\rceil-1)-$LSCC is formed, only a robot of color \texttt{moving1} can move further from $v_0$ after changing the color to \texttt{done}. So even if a robot moves further from $v_0$ it can move in such a way only once as after that it terminates. So if $r$ with color \texttt{moving1} is on $v_{i-2}$ at some time in Phase 2, it can not move back to $v_{i}$ and terminate.  Thus we reach a contradiction assuming $r$ terminates on $v_i$ where $i > j$ where $C_T(r) = \mathcal{L}_H(r) \cap v_j $ for some $j \in [1, \lceil\frac{d}{2}\rceil] \cap \mathbb{N}$. Thus $r$ must terminate either on $C_T(r) = \mathcal{L}_H(r) \cap v_j$ or on $\mathcal{L}_H(r) \cap v_i$ where $i < j$. If possible let $r$ terminates on $\mathcal{L}_H(r) \cap v_i$ where $i < j$. This implies there exists a time $t$ when the configuration is a $j-$LSCC. In this configuration $r$ is on $C_T(r)$ thus $r \in \mathcal{OUT}_j$. Thus eventually $(j-1)-$LSCC will be formed where $r$ is on $v_{j-1}$. Note that in this configuration $r \in \mathcal{IN}_{j-1}$. So, eventually, $r$ will change its color to \texttt{done} and move to $v_j$. So $r$ terminates on $\mathcal{L}_H(r) \cap v_j = C_T(r)$ contrary to our assumption. Hence if $r$ terminates it must terminate at $C_T(r)$.
\end{proof}
Now we will prove that all robots that are not on $v_0$ terminates
\begin{lemma}
    \label{lemma: all robot terminates}
    All robot that are not on $v_0$ terminates eventually during Phase 2. 
\end{lemma}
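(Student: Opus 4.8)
The plan is to track the configuration through the three successive stages sketched in the description of Phase~2 and to show that each stage completes after finitely many activations. By the mirror symmetry about $v_0$ it suffices to treat the robots in $H_L^O(r_1)$; the robots in $H_R^O(r_1)$ terminate by the identical argument, and $r_1,r_2$ were already terminated in Phase~1. The three stages are: (i) every robot of color \texttt{chord} reaches the vertical line $v_{\lceil d/2\rceil}$ with color \texttt{off}; (ii) from the resulting configuration a $(\lceil d/2\rceil-1)$-LSCC is formed; (iii) starting from the $(\lceil d/2\rceil-1)$-LSCC the configuration descends, one LSCC at a time, to a $1$-LSCC, and in the $1$-LSCC every remaining robot terminates. (I assume $d\ge 3$; for $d\le 2$ at most one robot lies on each side of $v_0$, already on $\mathcal{CIR}$, and the statement is checked directly.)

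For stage~(i) I would run a ``peeling from the extremes'' argument controlled by a potential function. By Lemma~\ref{lemma: decide phase 2 chord} a \texttt{chord} robot always sees both \texttt{diameter} robots, and from the Phase~2 pseudocode it moves one step away from the diameter exactly when it is terminal on its vertical line and sees no robot between itself and $v_0$; by Observation~\ref{obs: chord m1 not together} and Observation~\ref{obs: chord right of off} every \texttt{off} robot produced so far sits strictly further from $v_0$ than any \texttt{chord} robot, hence never obstructs a \texttt{chord} robot's view of $v_0$. Consequently, as long as stage~(i) is not over, the terminal \texttt{chord} robot on the innermost non-empty vertical line is able to move whenever it is activated, and such robots do get activated; since the sum over the \texttt{chord} robots of ($\lceil d/2\rceil$ minus the index of the robot's vertical line) strictly decreases with every such move and is bounded below, after finite time all of these robots are on $v_{\lceil d/2\rceil}$ with color \texttt{off}, strictly between $\mathcal{L}_H(r_1)$ and $\mathcal{L}_H(r_2)$ --- and each then sees both \texttt{diameter} robots by Lemma~\ref{lemma:off decide phase 2}. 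Stage~(ii) is the same kind of sequencing one line over: a robot nearest to $c$ recognizes itself, turns \texttt{moving1}, steps onto $v_{\lceil d/2\rceil-1}$ and then waits (it sees no robot to its left); every other \texttt{off} robot steps onto $v_{\lceil d/2\rceil-1}$ only behind a \texttt{moving1} robot with nothing strictly between their horizontal lines, and Observation~\ref{obs: moving1 right of off} rules out the obstruction that could stall this. Hence all left robots end on $v_{\lceil d/2\rceil-1}$ with color \texttt{moving1}, which by definition is a $(\lceil d/2\rceil-1)$-LSCC.

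Stage~(iii) is where the earlier machinery does the work. If $\lceil d/2\rceil-1>1$, Lemma~\ref{lemma: master} turns a $j$-LSCC with $\mathcal{OUT}_j\neq\emptyset$ first into a $j$-LSCC with $\mathcal{IN}_j=\emptyset$ --- the inside robots having changed color to \texttt{done}, moved one line further from $v_0$, and hence terminated at their next activation --- and then into a $(j-1)$-LSCC whose \texttt{moving1} robots are exactly the old $\mathcal{OUT}_j$; Corollary~\ref{cor: master} forbids any robot from jumping ahead, so iterating produces a finite chain of LSCCs ending at a $1$-LSCC (and should some intermediate $\mathcal{OUT}_j$ be empty, the remaining \texttt{moving1} robots are all strictly inside $\mathcal{CIR}$ and peel off one by one by the same nearest-to-$c$ rule, so finitely many robots remain in every case). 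In the $1$-LSCC, a robot $r\in\mathcal{OUT}_1$ is on $C_T(r)=\mathcal{L}_H(r)\cap v_1$ and lies on $\mathcal{CIR}$ (Observation~\ref{obs: all robots on v1 on circle}), so it changes color to \texttt{done} and terminates; a robot $r\in\mathcal{IN}_1$ is strictly inside $\mathcal{CIR}$, so once it becomes the robot nearest to $c$ with nothing of color \texttt{off} or \texttt{moving1} in its way it turns \texttt{done}, moves one line further from $v_0$, and terminates at its next activation. Hence every robot not on $v_0$ eventually turns \texttt{done} and terminates --- at $C_T(r)$, by Lemma~\ref{lemma: terminating grid point}.

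The main obstacle, I expect, is making the finiteness arguments of stages~(i) and~(ii) fully rigorous under the asynchronous scheduler: one has to show that a \texttt{chord} (resp.\ \texttt{off}) robot that is momentarily non-terminal eventually becomes terminal --- because the robots immediately above and below it on its line are terminal, do get activated, and move out --- and that a move left over from an earlier \textsc{Look} phase, or a \texttt{moving1}/\texttt{done} robot parked in an inconvenient place, cannot create a permanent obstruction of a sightline that a later mover depends on; this is exactly what Observations~\ref{obs: chord m1 not together}--\ref{obs: moving1 right of off}, together with Lemmas~\ref{lemma: decide phase 2 chord}, \ref{lemma:off decide phase 2} and \ref{lemma:decide Phase 2 moving1}, are there to secure. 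Once those are in hand, the LSCC descent is bookkeeping with Lemma~\ref{lemma: master} and Corollary~\ref{cor: master}, and termination in the $1$-LSCC is immediate.
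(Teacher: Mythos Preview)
Your plan is correct and would work, but it takes a noticeably different route from the paper's proof. The paper argues by contradiction in just a few lines: suppose some robot $r$ with $C_T(r)=\mathcal{L}_H(r)\cap v_j$ never terminates; Lemma~\ref{lemma: master} (iterated) guarantees that a $(j-1)$-LSCC eventually forms, so $r$ must fail to be on $v_{j-1}$ at that time --- meaning $r$ is already \texttt{done} on some $v_i$ with $i>j$, i.e.\ $r$ has terminated at a point other than $C_T(r)$, contradicting Lemma~\ref{lemma: terminating grid point}. The whole argument is thus reduced to the two prior lemmas, with the formation of the initial $(\lceil d/2\rceil-1)$-LSCC taken from the algorithm description rather than re-proved.

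Your approach instead reconstructs the entire Phase~2 evolution explicitly: stages~(i)--(ii) make rigorous the ``description'' material the paper treats informally, and stage~(iii) carries out the LSCC descent directly, handling the edge case $\mathcal{OUT}_j=\emptyset$ and the final $1$-LSCC termination by hand rather than via the contradiction shortcut. What your approach buys is self-containment and an explicit progress measure at each stage; what it costs is length and the need to carry the asynchronous bookkeeping you flag at the end. The paper's proof is slicker precisely because it outsources that bookkeeping to Lemma~\ref{lemma: terminating grid point}: once you know a robot cannot terminate in the wrong place, ``never terminates'' and ``the right LSCC is eventually reached'' become immediately incompatible.
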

\begin{proof}
    If possible let  $r$ be a robot on the left of $v_0$ that never terminates in Phase 2. Let $C_T(r) = \mathcal{L}_H(r) \cap v_j$ for some $j \in \{1,2, \dots \lceil \frac{d}{2}\rceil\}$.
    For $j>1$, $r$  can terminate after it moves to $C_T(r)$ from $v_{j-1}$ of a $(j-1)-$LSCC after changing the color to \texttt{done}. Now as it is assumed that $r$ does not terminate, either $(j-1)-$LSCC is never formed or, even if it is formed $r$ is not on $v_{j-1}$ in $(j-1)-$LSCC. Now by Lemma~\ref{lemma: master} if there is at least one robot that has not terminated $(j-1)-$LSCC  will be formed eventually. Thus if $r$ never terminates there can be only one possibility that when $(j-1)-$LSCC is formed $r$ is not there on $v_{j-1}$. This implies $r$ must be on some $v_i$ where $i > j$ with color \texttt{done}. This implies $r$ terminates at $v_i \cap \mathcal{L}_H(r)$ for some $i > j$. This is impossible due to Lemma~\ref{lemma: terminating grid point}. Hence $r$ must terminate if $j> 1$. Now similarly, for $j=1$, $r$ never terminates implies, When $1-$LSCC is  formed, $r$ is not on $v_1$. Again this is impossible due to similar reasons as above.
    Thus all robots that are not on $v_0$ must terminate in Phase 2
\end{proof}
 Now using Lemma~\ref{lemma: terminating grid point} and Lemma~\ref{lemma: all robot terminates} and the fact that the robots of color \texttt{diameter} terminates after changing its color to \texttt{diameter} either from \texttt{chord} or from \texttt{moving1} we can state the following theorem.
 \begin{theorem}
 \label{thm: final}
     From any initial configuration within finite time, all opaque fat robots with one axis agreement can terminate after forming a circle on an infinite grid under asynchronous scheduler by executing the algorithm $CIRCLE\_FG$.  
 \end{theorem}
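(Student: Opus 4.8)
The plan is to assemble the theorem from the two phase-correctness results already proved, after checking that the two phases interlock cleanly. First I would invoke Theorem~\ref{Thm: Phase1 line}: from an arbitrary initial configuration, with all lights \texttt{off}, executing Algorithm~\ref{Algo_Phase1} drives the swarm in finite time to a single vertical line on which every non-terminal robot has color \texttt{chord} and the (at most two) terminal robots have color \texttt{chord} or \texttt{moving1}. Then, by Theorem~\ref{thm: phase1 final config}, within a further finite interval this configuration becomes a Phase~1 Final Configuration (P1FC): two robots $r_1,r_2$ with color \texttt{diameter} alone on $\mathcal{L}_2$, already terminated, with all remaining robots of color \texttt{chord} on $\mathcal{L}_1\cup\mathcal{L}_3$ strictly between $\mathcal{L}_H(r_1)$ and $\mathcal{L}_H(r_2)$. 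The segment of $v_0:=\mathcal{L}_V(r_1)$ between $r_1$ and $r_2$ is taken by every other robot as the diameter, fixing the target circle $\mathcal{CIR}$ (center $c$, radius $\frac{d}{2}$); since each horizontal line strictly between $\mathcal{L}_H(r_1)$ and $\mathcal{L}_H(r_2)$ meets the grid circumference of $\mathcal{CIR}$ in exactly one point on each side of $v_0$, it suffices to show every robot not on $v_0$ halts at its terminating point $C_T(r)$ on its own horizontal line.

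Next I would verify the phase-detection invariants so that no robot ever runs the wrong subroutine: a robot of color \texttt{off} (Lemma~\ref{lemma:off decide phase 2}), \texttt{chord} (Lemma~\ref{lemma: decide phase 2 chord}), or \texttt{moving1} (Lemma~\ref{lemma:decide Phase 2 moving1}) can distinguish Phase~2 from Phase~1 from its view of the \texttt{diameter}-colored robots --- in Phase~2 it always sees such a robot not on its own vertical line, which never happens in Phase~1 --- while a robot of color \texttt{off} concludes Phase~1 exactly when no \texttt{diameter}-colored robot is visible. Consequently, precisely from the P1FC onward the still-active robots execute Algorithm~\ref{algo phase2}, and the two \texttt{diameter} robots remain terminated.

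Then I would trace Phase~2 forward using the structural results already established. The \texttt{chord}-robots, one per horizontal line in the ordering enforced by \textsc{ChordMove}, march outward to $v_{\lceil d/2\rceil}$ and switch to \texttt{off}; from there the robots nearest to $c$ switch to \texttt{moving1} and step one hop inward, so (as spelled out in the description of Phase~2) a $(\lceil d/2\rceil-1)$-LSCC is reached in finite time. Now Lemma~\ref{lemma: master} together with Corollary~\ref{cor: master} shows that a $j$-LSCC is followed by a $(j-1)$-LSCC, so the configuration descends through $j=\lceil d/2\rceil-1,\dots,1$; each descent takes finite time and there are finitely many of them, and along the way every robot that becomes strictly interior changes color to \texttt{done} and steps out to its terminating column. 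By Lemma~\ref{lemma: terminating grid point} every terminating robot halts exactly at $C_T(r)$, and by Lemma~\ref{lemma: all robot terminates} every robot off $v_0$ does terminate. Adding $r_1$ and $r_2$, which were already terminated in Phase~1 at the two endpoints of the diameter and hence lie on $\mathcal{CIR}$, we conclude that after finite total time all robots are terminated and located on the grid circumference of the single circle $\mathcal{CIR}$ --- which is exactly the requirement for solving $CF\_FAT\_GRID$.

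The main obstacle I expect is not any single computation but the bookkeeping that makes the finiteness genuinely finite and the moves collision-free: one must argue that the outward march to $v_{\lceil d/2\rceil}$, the \texttt{off}$\to$\texttt{moving1} inward step, and each LSCC-to-LSCC transition each terminate (no livelock under $\mathcal{ASYNC}$, no two robots ever sharing a vertex), and that the sequencing forced by the light colors is precisely what stops a robot from overshooting its target column. Most of this is delegated to the cited lemmas and observations, so the real work at this level is to check that their hypotheses are met at every stage and that the phase-detection invariants are preserved all the way through.
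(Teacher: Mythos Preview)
Your proposal is correct and follows essentially the same approach as the paper: assemble the result from the Phase~1 correctness theorems (Theorems~\ref{Thm: Phase1 line} and~\ref{thm: phase1 final config}) and the Phase~2 termination lemmas (Lemmas~\ref{lemma: terminating grid point} and~\ref{lemma: all robot terminates}), together with the observation that the two \texttt{diameter} robots already terminated in Phase~1 at the diameter endpoints. The paper's own treatment is a one-sentence pointer to Lemmas~\ref{lemma: terminating grid point} and~\ref{lemma: all robot terminates}; your version is more explicit about the phase-detection invariants and the LSCC descent, but the logical skeleton is the same.
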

\section{Conclusion}
The problem of circle formation is widely studied
in the field of swarm robotics. It has been studied under various assumptions on plane. But in the discrete domain, the work is limited. With obstructed visibility
model, this problem has been considered on the plane and infinite grid using luminous
opaque robots. But using fat robots (i.e. robots with certain dimensions), it was only
done in a plane. In this paper, we have taken care of this. We have shown that with a
swarm of luminous opaque fat robots having one-axis agreement on an infinite grid,
a circle of diameter $O(n)$ ($n$ is the number of robots in the system) can be formed from any initial configuration using one light having
5 distinct colors. For future
courses of research, one way is to find out the optimal number of colors needed to solve this problem. Also, it would be interesting to see if the same problem can be solved
using disoriented robots.

\textbf{Acknowledgements:} The  First author is supported by UGC [1132/(CSIR-UGC NET
JUNE 2017)], the Government
of India and the third author is supported by the West Bengal State government
Fellowship Scheme e [P-1/RS/23/2020].

\textbf{Disclosure Statement:} The authors report there are no competing interests to declare.
\bibliographystyle{tfnlm}
\bibliography{interactnlmsample}

\end{document}